\documentclass{article} % For LaTeX2e
\usepackage{iclr2027_conference,times}
\usepackage{hyperref}
\usepackage{url}
\usepackage{amsmath,amssymb,amsthm,bm}   % mathematics
\usepackage{graphicx}                    % \includegraphics
\usepackage{booktabs}                    % \toprule, \midrule in the tables
\usepackage{cancel}                      % \cancel, in the derivations
\usepackage{microtype}                   % character protrusion and font expansion.
\usepackage{flafter}                     % a float never appears before its place in
\usepackage{algorithm}
\usepackage{algorithmic}
\usepackage{tikz}\usetikzlibrary{calc,patterns,decorations.pathreplacing,arrows.meta}
\usepackage[capitalise]{cleveref}        % must be loaded after hyperref
\graphicspath{{./}{figures/}}

\makeatletter
\ifdefined\cref\else
  \def\shim@sec{sec}\def\shim@app{app}\def\shim@tab{tab}\def\shim@fig{fig}
  \def\shim@eq{eq}\def\shim@lem{lem}\def\shim@prop{prop}
  \def\shim@one#1:#2\@nil{\def\shim@pfx{#1}%
    \ifx\shim@pfx\shim@sec Section~\ref{#1:#2}\else
    \ifx\shim@pfx\shim@app Appendix~\ref{#1:#2}\else
    \ifx\shim@pfx\shim@tab Table~\ref{#1:#2}\else
    \ifx\shim@pfx\shim@fig Figure~\ref{#1:#2}\else
    \ifx\shim@pfx\shim@eq Eq.~(\ref{#1:#2})\else
    \ifx\shim@pfx\shim@lem Lemma~\ref{#1:#2}\else
    \ifx\shim@pfx\shim@prop Proposition~\ref{#1:#2}\else
    \ref{#1:#2}\fi\fi\fi\fi\fi\fi\fi}
  \newcommand\shimrefs[1]{\def\shim@sep{}%
    \@for\@shimi:=#1\do{\shim@sep\def\shim@sep{, }\expandafter\shim@one\@shimi\@nil}}
  \DeclareRobustCommand{\cref}[1]{\shimrefs{#1}}
  \DeclareRobustCommand{\Cref}[1]{\shimrefs{#1}}
  \providecommand{\crefname}[3]{}
\fi
\makeatother

\newcommand{\R}{\mathbb{R}}
\newcommand{\T}{\mathbb{T}}
\newcommand{\Z}{\mathbb{Z}}

\newcommand{\x}{\bm{x}}

\newcommand{\G}{\bm{\Gamma}}

\DeclareMathOperator{\wrap}{wrap}

\newtheorem{proposition}{Proposition}
\newtheorem{lemma}{Lemma}
\newcommand{\ESS}{\mathrm{ESS}}
\newcommand{\KL}{\mathrm{KL}}
\newcommand{\Var}{\mathrm{Var}}

\newcommand{\hmu}{\hat\mu}
\newcommand{\hnu}{\hat\nu}
\newcommand{\hrho}{\hat\rho}
\newcommand{\M}{\bm{A}}

\DeclareMathOperator{\Tr}{Tr}

\newcommand{\Lfine}{4}                  % fine linear extent
\newcommand{\nplq}{1536}                % 6V oriented plaquettes
\newcommand{\dimBtwo}{765}              % 3V-3, free continuous dof
\newcommand{\suppiso}{13}               % median solve support, single-level 4^4
\newcommand{\suppisomax}{47}            % max solve support, single-level 4^4
\newcommand{\kappaSingleTwo}{45}        % condition number, single-level 2^4
\newcommand{\kappaSingleFour}{391}      % condition number, single-level 4^4
\newcommand{\Neval}{500{,}000}

\newcommand{\refPlqHalf}{0.245790 \pm 0.000043}   \newcommand{\refMonoHalf}{352.969 \pm 0.043}
\newcommand{\refPlqOne}{0.64944 \pm 0.00048}     \newcommand{\refMonoOne}{48.5 \pm 0.4}
\newcommand{\refPlqOneHalf}{0.813425 \pm 0.000044} \newcommand{\refMonoOneHalf}{0.411 \pm 0.004}
\newcommand{\refPlqTwo}{0.865396 \pm 0.000020}    \newcommand{\refMonoTwo}{0.0094 \pm 0.0007}
   
\newcommand{\fluxCost}{2\pi^2\beta}      % action cost of a unit flux through a 2-torus

\newcommand{\ZrecHalf}{1.03}
  
\newcommand{\monoHalf}{352.990 \pm 0.034}  
\newcommand{\KLone}{10.0}      \newcommand{\topOne}{0.043}
\newcommand{\ZrecOne}{1.3\times10^{-3}}
     
\newcommand{\KLsingle}{2.05}  \newcommand{\ESSsingle}{1.52\%}  \newcommand{\paramsSS}{3.91}
\newcommand{\KLmatched}{1.00} \newcommand{\ESSmatched}{4.65\%} \newcommand{\paramsML}{4.06}
\newcommand{\ndet}{771}                       % determined plaquettes, 4^4
\newcommand{\KLsingleHalf}{47.5}
\newcommand{\covSingleHalf}{8\times10^{-7}}   % math mode; = exp(log Zhat - log Z)
\newcommand{\covSingleTwo}{0.97}

\newcommand{\monoNullHalf}{419.4 \pm 0.3}
\newcommand{\monoNullTwo}{149.0 \pm 0.3}
\newcommand{\monoSingleHalf}{419.3}           % S050s, raw, N = 500,000
\newcommand{\monoSingleTwo}{0.0023}           % S_single, raw, N = 500,000
\newcommand{\monoSuppTwo}{6.6\times10^{4}}    % math mode; null / model at beta = 2

\newcommand{\AbbLayers}{48}     % their fine-lattice coupling layers
      
\newcommand{\cosSdOne}{0.033}        % per-configuration sd of <cos>, reference
\newcommand{\monoSdOne}{28}          % per-configuration sd of the monopole number
\newcommand{\hbMerge}{36}            % sweeps until hot and cold chains agree
\newcommand{\mcColdPlq}{0.6488}  \newcommand{\mcColdMono}{47.6}   % Metropolis in the 765 free angles
\newcommand{\mcHotPlq}{0.6546}   \newcommand{\mcHotMono}{45.2}
\newcommand{\monoRandom}{476}        % layer fed uniformly random free angles, 4^4

\newcommand{\teC}{0.188}

\newcommand{\KLoneHatModel}{3.40}

\newcommand{\tauQrange}{0.69\text{--}1.03}

\newcommand{\kappaScanLList}{4,6,8,10}               % 4D extents with kappa/V quoted
\newcommand{\kappaOverVList}{1.5,\,1.6,\,1.7,\,1.8}         % kappa / V at those extents
\newcommand{\kappaAlpha}{1.05}                 % 4D single level, kappa ~ |F|^alpha, fit over L = 4..10
\newcommand{\kappaFourSixteen}{1.3\times10^{5}}      % extrapolated 16^4
\newcommand{\kappaFourThirtytwo}{2.4\times10^{6}}   % extrapolated 32^4
\newcommand{\solveRowsUoneThirtytwo}{3.1\times10^{6}}   % 3V+3 at 32^4
\newcommand{\solveRowsSUthreeThirtytwo}{2.5\times10^{7}} % 8(3V+3) at 32^4
\newcommand{\kappaRefFour}{7}                  % one 4D refinement level, canonical coordinates
\newcommand{\kappaRefTwo}{1}                   % one 2D refinement level
\title{ Multilevel Plaquette-Space Sampling for Lattice Gauge Theories with Local Constraint Solves}

\author{%
Ankur Singha$^{1,2}$, Jacob Finkenrath$^{3}$, Karl Jansen$^{4,5}$,
Vipul Arora$^{6}$, Shinichi Nakajima$^{1,2,7}$ \\[2pt]
\normalfont
$^{1}$Berlin Institute for the Foundations of Learning and Data (BIFOLD), Germany \\
$^{2}$Technische Universit\"{a}t Berlin, Germany \\
$^{3}$Bergische Universit\"{a}t Wuppertal, Germany \\
$^{4}$Deutsches Elektronen-Synchrotron (DESY), Platanenallee 6, 15738 Zeuthen, Germany \\
$^{5}$Computation-Based Science and Technology Research Center, The Cyprus Institute, Nicosia, Cyprus \\
$^{6}$Department of Electrical Engineering (ESAT), KU Leuven, Belgium \\
$^{7}$RIKEN Center for Advanced Intelligence Project (AIP), Japan \\[2pt]
Correspondence: \texttt{a.singha@tu-berlin.de}, \texttt{nakajima@tu-berlin.de}%
}

\iclrfinalcopy

\begin{document}
\maketitle
% arXiv version: \iclrfinalcopy prints the authors, but the style also sets the running head
% "Published as a conference paper at ICLR 2027" inside \maketitle. The paper is under review,
% not accepted, so blank the head (and its rule) here, after \maketitle has set it.
\lhead{}\renewcommand{\headrulewidth}{0pt}
\begin{abstract}
% - Gauge theory
% - ML good for challenges: topological freezing
% - current ML models can't scale up: main challenge for this paper
% - sampling in link space
% - sampling in plaquette space: we tackle the challenges: Bianchi constraints
% - we propose
% - we show results for U1 and SU2.

% Lattice gauge theory is a class of field theory that is of great interest in nuclear, condensed matter and particle physics. Conventional Monte Carlo methods for sampling gauge theory lattices struggle with issues such as topological freezing. Machine learning-based methods have shown the potential to address these problems, albeit for smaller and low-dimensional lattices. Scaling up these methods is of great interest. 
% While current approaches define lattices in the link space, we propose that representing them in plaquette space is a better way, as the action (and hence, the distribution) depends directly on the plaquette and gauge redundancy is largely free. Further, we propose a multi-level sampling method to effectively satisfy Bianchi constraints, which restrict the plaquette space.
% Experiments on $U(1)$ in two and four dimensions and for $SU(2)$ in two dimensions support that the proposed approach is more effective than sampling link space lattices and single level sampling.

% Lattice gauge theory is a class of field theory that is of great interest in nuclear, condensed matter, and particle physics. 

Lattice gauge theories are an important class of physical models with high-dimensional structured distributions that underpin first-principles calculations in particle, nuclear, and condensed-matter physics. Recent advances in generative modeling have opened new avenues for sampling Boltzmann distributions in lattice gauge theories, offering the potential to alleviate limitations of traditional Monte Carlo methods, including critical slowing down and topological freezing. However, generative samplers for lattice gauge theories are typically constructed in link space, where correlations become increasingly long-ranged toward weak coupling, posing a challenge for learning. Plaquettes provide a more natural representation, as the action is local in these variables. However, exact Bianchi constraints restrict them to a lower-dimensional manifold, complicating direct generative modeling. We introduce a multilevel normalizing-flow construction that samples directly in plaquette space while satisfying these constraints exactly. The key idea is a coarse-to-fine factorization that transforms a globally coupled constraint problem into a sequence of local
solves: at each refinement, every determined plaquette depends on at most four newly generated variables, while the size of the constraint problem remains independent of the lattice size. We validate the construction for $U(1)$ in two and four dimensions and for $SU(2)$ in two dimensions. Our  multilevel
Plaquette-Space Sampler (PSS) substantially outperforms link-space baselines, with the advantage increasing toward weak coupling, where the gauge coupling becomes small. This regime is particularly challenging for generative sampling and, in asymptotically free gauge theories, is relevant to continuum studies.

\end{abstract}

\section{Introduction}
\label{sec:intro}
Lattice gauge theory provides a first-principles framework for studying strongly coupled
quantum gauge theories, where perturbative methods are no longer applicable
\citep{PhysRevD.10.2445,Gattringer:2010zz}. Numerical lattice simulations have become an
essential tool for quantitative predictions in particle and nuclear physics, ranging from
hadron spectroscopy and matrix elements to finite-temperature phenomena and precision tests
of the Standard Model. These calculations reduce physical observables to expectation values
over a 
 % \textcolor{blue}{[extremely high-dimensional in ML means more than billions]}
 high-dimensional probability distribution, making efficient sampling of
gauge-field configurations a central computational problem.

In practice, these expectations are estimated predominantly with Markov chain Monte Carlo (MCMC)
methods. Their computational cost, however, grows rapidly as the lattice spacing is reduced
toward the continuum limit and correlations extend over increasingly many lattice sites
\citep{wolff1990critical,Schaefer2011}; generative models have been applied directly to this
critical slowing down \citep{Albergo2019,PhysRevLett.122.080602,Kanwar_2020}. The problem is particularly severe for topological
observables, whose autocorrelation times can grow dramatically and eventually lead to
effective freezing in a fixed topological sector
\citep{DelDebbio2004,Luscher2011}. Improving the scalability of lattice sampling is therefore
important not only computationally, but also for extending reliable first-principles
predictions into increasingly demanding physical regimes.

Generative models provide a promising alternative to conventional Markov-chain updates.
A model with tractable density $q$ can generate global, approximately independent proposals,
while the mismatch with the target distribution $p\propto e^{-S}$ can be corrected exactly
through importance reweighting or an independence Metropolis step
\citep{Albergo2019,nicoli_pre}. Recent progress has shown that generative samplers can be
particularly effective at mitigating topological freezing in low-dimensional lattice gauge
theories \citep{Kanwar_2020,Rothkopf2026apb,Singha:2026aac}.
 These results highlight their potential, but also
shift the central challenge toward \emph{scalability}: applying the same ideas to larger
lattices and higher-dimensional gauge theories requires generative models whose quality does
not deteriorate rapidly with lattice size, correlation length, or weak coupling.

Most existing generative approaches sample configurations in \emph{link space}, including
gauge-equivariant flows \citep{Kanwar_2020,Boyda2021,Abbott:2023thq}, continuous and stochastic flows
\citep{Bacchio:2022vje,Caselle:2022acb,bonanno2026scalable,gerdes2023learning}, and diffusion models
\citep{Wang:2023exq,Zhu:2025pmw}. Recent stochastic-flow approaches aim
to mitigate topological freezing in four-dimensional gauge theories
%,
%while remaining in link space, combining 
%learned 
by learning transformations with long non-equilibrium Monte Carlo trajectories, so that much of the computational cost still comes from repeated MCMC updates~\citep{bonanno2026scalable}.  
% In all these cases, the gauge action, however, depends on the links only through
% the plaquettes, the ordered products of links around elementary squares. Plaquettes are
% therefore a natural representation of the physical degrees of freedom: the action is local
% in them and gauge redundancy is removed \textcolor{red}{up to one global transformation}. In link space, by contrast, every plaquette is a
% derived quantity involving several links, while each link participates in several
% plaquettes. As the coupling weakens, the plaquettes become increasingly concentrated near
% the identity and correlations extend over longer distances, requiring increasingly precise
% coordination among the link variables. Consistent with this picture, link-space flows have
% been observed to degrade toward weak coupling \citep{Abbott2023}.
However, the Wilson gauge action, which provides the standard discretization of the gauge-field action, depends on the link variables only through
the \emph{plaquettes}---the ordered products of links around elementary squares.%
\footnote{More general improved gauge actions may additionally depend on larger Wilson loops such as rectangular loops, to reduce discretization errors while retaining locality.}
Plaquettes are
therefore a natural representation of the physical degrees of freedom: the action is local
in them and gauge redundancy is removed up to one global transformation. In link space, by contrast, every plaquette is a
derived quantity involving several links, while each link participates in several
plaquettes. As the coupling weakens, the plaquettes become increasingly concentrated near
the identity and correlations extend over longer distances, requiring increasingly precise
coordination among the link variables. Consistent with this picture, link-space flows have
been observed to degrade toward weak coupling \citep{Abbott2023}.

This motivates generating plaquettes directly.  Indeed, a recent study reported successful results using plaquette-space sampling for two-dimensional $U(1$) theory \citep{Singha:2026aac}.  %This 
The present
work aims to generalize this approach to higher-dimensional theories and larger gauge groups.
% The difficulty is that plaquettes are not independent: exact Bianchi identities restrict physically admissible plaquette fields to a lower-dimensional constraint manifold. In two dimensions this constraint is essentially global, whereas in higher dimensions it becomes local and extensive. In four dimensions, for example, asymptotically half of the plaquettes are determined by the others, and for non-Abelian groups the corresponding relations involve parallel transporters. A generic density over all plaquettes therefore generates configurations outside the physical manifold. Soft penalties leave residual constraint violations, while projecting samples onto the manifold generally destroys the tractable density required for exact reweighting.
The difficulty is that a general plaquette configuration does not necessarily correspond to a valid link configuration: a valid plaquette configuration must satisfy geometric consistency constraints, known as the \emph{Bianchi identities} \citep{Batrouni1982}, which restrict physically admissible plaquette fields to a lower-dimensional constraint manifold. While these constraints are simple in two dimensions, generating plaquette configurations that satisfy all Bianchi identities is non-trivial in higher-dimensions.
Furthermore, 
%in higher-dimensional theories, 
the size of the resulting linear system can grow rapidly with the lattice size, making it increasingly challenging to solve.
%for large lattices.

% In four dimensions the system has $3V+3$ rows per generator, $\solveRowsSUthreeThirtytwo$ for $SU(3)$ on a $32^4$ lattice, and its condition number grows linearly with the lattice volume (\cref{app:solve_size}).

We tackle this issue using multilevel generative sampling. At the coarsest level, a small number of plaquette variables, together with additional degrees of freedom called \emph{holonomies}, are generated such that a subset of the Bianchi identities is satisfied. At each subsequent finer level, each coarse plaquette is decomposed into finer plaquettes while preserving its original value as the value of the corresponding Wilson loop at the finer level. This Wilson loop encloses multiple fine plaquettes, and its value is given by the ordered product of the plaquette variables contained within it.
% Consequently, all Bianchi identities satisfied at coarser levels remain satisfied at the finest level.
Consequently, all Bianchi identities satisfied at coarser levels are automatically inherited by the finer levels, and hence remain satisfied at the finest level.
 This transforms the globally coupled constraint problem into a sequence of local constraint problems: each determined plaquette depends on at most four newly generated variables, 
and the size of each local problem is 
 independent of the lattice size. 
 % (condition number $\kappaRefFour$ at every extent, \cref{app:solve_size}).
 Furthermore, multilevel sampling naturally captures long-range correlations through the coarse-level samplers and has been shown to be advantageous 
\citep{Singha2025RiGCS,bauer2025super,singha2026scalable,Hasenfratz2026mae,Hasenfratz2026azk} for studying physical systems, particularly near criticality and towards the continuum limit---regimes of particular physical relevance.

We instantiate the method for $U(1)$ gauge theory in two and four dimensions and for $SU(2)$ in two dimensions.
Numerical experiments show that  
our multilevel Plaquette-Space Sampler (PSS) substantially outperforms both single-level and multilevel link-space baselines.
In two-dimensional compact $U(1)$, PSS scales more favorably with lattice size than the corresponding baselines,
while 
in four-dimensional $U(1)$ and two-dimensional $SU(2)$, it
becomes increasingly advantageous toward weak coupling. 

The main contributions of this paper are:
\begin{itemize}
    \item 
    We generalize plaquette-space sampling to four dimensions and to $SU(2)$ gauge theory.
    \item 
    We propose a multilevel sampler for gauge theories that uses simple local procedures to satisfy the Bianchi identities while better capturing long-range correlations.
\item 
    We empirically demonstrate the benefits of both plaquette-space sampling and the multilevel scheme in physically relevant regimes where long-range correlations are pronounced.
\end{itemize}
\section{Background}
\label{sec:background}

\subsection{Lattice gauge theory}
\label{sec:bg.lgt}

\paragraph{The lattice and its variables.}
We consider a periodic hypercubic lattice in $d$ dimensions with $V$ sites, each denoted by $x$, and unit
vectors $\hmu$, $\mu=0,\dots,d-1$. Gauge fields live on the lattice edges: the oriented link
from $x$ to $x+\hmu$ carries an element $U_\mu(x)$ of a compact group $G$, while the opposite
orientation, i.e., from $x+\hmu$ to $x$, is $U_\mu(x)^\dagger$ (\cref{fig:gauge}a). The choice of $G$
specifies the gauge theory. When G is $U(1)$, $U=e^{i\theta}$ with
$\theta\in(-\pi,\pi]$; and when G is $SU(2)$,
$U=a_0+i\,\bm a\cdot\bm\sigma$, with $(a_0,\bm a)$ a unit vector in $\R^4$ and $\bm\sigma$ are $2\times2$ Pauli matrices. A lattice configuration is thus
a collection of link variables $U_\mu(x)$, i.e.\ a point in $G^{dV}$.

% Each compact group is equipped with its normalised Haar measure
% $\mathrm dU$, invariant under left and right multiplication.

\paragraph{Plaquettes and distribution.}
The elementary gauge-covariant object is the plaquette, the ordered product of links around
a unit square in the $\mu\nu$ plane (\cref{fig:gauge}a),
\begin{equation}
  P_{\mu\nu}(x)=U_\mu(x)\,U_\nu(x+\hmu)\,U_\mu(x+\hnu)^\dagger\,U_\nu(x)^\dagger .
  \label{eq:plq}
\end{equation}
The Wilson action is a sum of single-plaquette terms, and the target distribution is
\begin{align}
%  \begin{gathered}
  S(U)&=- \textstyle \beta\sum_{x,\,\mu<\nu}\tfrac1{N_c}\operatorname{Re}\operatorname{tr}P_{\mu\nu}(x),
  %\\[2pt]
 & p(U)\,\mathrm dU&= \textstyle  \frac1Z\,e^{-S(U)}\prod_{x,\mu}\mathrm dU_\mu(x),
%  \end{gathered}
  \label{eq:action}
\end{align}
where $N_c$ is the dimension of the matrices, $\mathrm dU$ the normalised Haar measure of $G$,
and $Z$ the partition function.
%$p(U)$ is the distribution we want to sample. 
The coupling
$\beta$ is the only parameter of the model.

\begin{figure*}[tbh]
    \centering
    \resizebox{\textwidth}{!}{%
        \begin{tikzpicture}[
    >=Latex,
    line cap=round,
    line join=round,
    every node/.style={font=\small},
    panel/.style={draw=black!25,rounded corners=2pt,line width=0.6pt},
    gridline/.style={draw=black!35,line width=0.55pt},
    bluearrow/.style={
        draw=blue!70!black,
        line width=1.2pt,
        -{Latex[length=2.0mm]}
    },
    redarrow/.style={
        draw=red!70!black,
        line width=1.2pt,
        -{Latex[length=2.0mm]}
    },
    greenarrow/.style={
        draw=green!50!black,
        line width=1.2pt,
        -{Latex[length=2.0mm]}
    }
]

\def\PW{5.15}
\def\PH{5.55}
\def\GAP{0.18}

% ============================================================
% (a) Links and plaquette
% ============================================================
\begin{scope}[shift={(0,0)}]

\draw[panel] (0,0) rectangle (\PW,\PH);

\node[
    font=\bfseries,
    align=center,
    text width=4.7cm
]
at (2.575,5.20)
{(a) Links and plaquette};

\fill[blue!7]
    (1.75,2.18) rectangle (3.12,3.55);

\foreach \x in {0.95,1.75,3.12,3.92}{
    \foreach \y in {1.38,2.18,3.55,4.35}{
        \fill (\x,\y) circle (0.035);
    }
}

\foreach \y in {1.38,2.18,3.55,4.35}{
    \draw[gridline] (0.95,\y) -- (3.92,\y);
}

\foreach \x in {0.95,1.75,3.12,3.92}{
    \draw[gridline] (\x,1.38) -- (\x,4.35);
}

\draw[bluearrow]
    (1.75,2.18) -- (3.12,2.18);

\draw[bluearrow]
    (3.12,2.18) -- (3.12,3.55);

\draw[redarrow]
    (3.12,3.55) -- (1.75,3.55);

\draw[redarrow]
    (1.75,3.55) -- (1.75,2.18);

\node[below left,font=\footnotesize]
    at (1.75,2.18)
    {$x$};

\node[blue!70!black,below,font=\footnotesize]
    at (2.44,2.11)
    {$U_\mu(x)$};

\node[blue!70!black,right,font=\footnotesize]
    at (3.18,2.86)
    {$U_\nu(x+\hat\mu)$};

\node[red!70!black,above,font=\footnotesize]
    at (2.44,3.61)
    {$U_\mu^\dagger(x+\hat\nu)$};

\node[red!70!black,left,font=\footnotesize]
    at (1.69,2.86)
    {$U_\nu^\dagger(x)$};

\node[
    fill=white,
    inner sep=1.5pt
]
at (2.44,2.88)
{$P_{\mu\nu}(x)$};

\node[
    align=center,
    font=\footnotesize
]
at (2.58,0.63)
{
$P_{\mu\nu}(x)
 =U_\mu(x)U_\nu(x+\hat\mu)$\\
$\times U_\mu^\dagger(x+\hat\nu)U_\nu^\dagger(x)$
};

\end{scope}

% ============================================================
% (b) Gauge symmetry and parallel transport
% ============================================================
\begin{scope}[shift={({\PW+\GAP},0)}]

\draw[panel] (0,0) rectangle (\PW,\PH);

\node[
    font=\bfseries,
    align=center,
    text width=4.7cm
]
at (2.575,5.20)
{(b) Gauge symmetry and\\parallel transport};

\coordinate (X) at (1.02,3.45);
\coordinate (Y) at (4.08,3.45);

\fill (X) circle (0.045);
\fill (Y) circle (0.045);

\draw[bluearrow]
    (X) -- (Y);

\node[above,font=\footnotesize]
    at (2.55,3.53)
    {$U_\mu(x)$};

\node[below,font=\footnotesize]
    at (X)
    {$x$};

\node[below,font=\footnotesize]
    at (Y)
    {$x+\hat\mu$};

\node[font=\footnotesize]
    at (1.02,4.62)
    {$g(x)$};

\node[font=\footnotesize]
    at (4.08,4.62)
    {$g(x+\hat\mu)$};

\draw[
    black!45,
    line width=0.7pt,
    -{Latex[length=1.6mm]}
]
(0.73,3.85)
arc[start angle=215,end angle=500,radius=0.27];

\draw[
    black!45,
    line width=0.7pt,
    -{Latex[length=1.6mm]}
]
(3.79,3.85)
arc[start angle=215,end angle=500,radius=0.27];

\node[
    draw=red!55!black,
    rounded corners=1.2pt,
    inner sep=4pt,
    text=red!70!black,
    font=\footnotesize
]
(W) at (4.08,2.35)
{$W$};

\draw[
    red!65!black,
    dashed,
    line width=0.9pt,
    -{Latex[length=1.8mm]}
]
(W.west)
.. controls (3.2,1.95) and (1.55,1.95) ..
(1.15,2.88);

\node[
    align=center,
    font=\footnotesize
]
at (2.58,1.18)
{
$T_\mu[W]
 =U_\mu(x)\,W\,U_\mu^\dagger(x)$
};

\node[
    align=center,
    text width=4.35cm,
    font=\footnotesize,
    text=black!60
]
at (2.58,0.55)
{bring matrices to a common gauge frame};

\end{scope}

% ============================================================
% (c) Global closure in d=2
% ============================================================
\begin{scope}[shift={({2*(\PW+\GAP)},0)}]

\draw[panel] (0,0) rectangle (\PW,\PH);

\node[
    font=\bfseries,
    align=center,
    text width=4.7cm
]
at (2.575,5.20)
{(c) Global closure in $d=2$};

\fill[blue!5]
    (1.03,2.50) rectangle (4.10,4.35);

\foreach \i in {0,...,4}{
    \draw[
        blue!50!black,
        line width=0.5pt
    ]
    ({1.03+0.7675*\i},2.50)
    --
    ({1.03+0.7675*\i},4.35);
}

\foreach \j in {0,...,3}{
    \draw[
        blue!50!black,
        line width=0.5pt
    ]
    (1.03,{2.50+0.6167*\j})
    --
    (4.10,{2.50+0.6167*\j});
}

\foreach \i in {0,...,4}{
    \foreach \j in {0,...,3}{
        \fill
        ({1.03+0.7675*\i},{2.50+0.6167*\j})
        circle (0.024);
    }
}

\draw[redarrow]
    (1.03,4.35) -- (4.10,4.35);

\node[
    red!70!black,
    above,
    font=\footnotesize
]
at (2.56,4.39)
{$\Omega_\mu$};

\draw[greenarrow]
    (1.03,2.50) -- (1.03,4.35);

\node[
    green!50!black,
    left,
    font=\footnotesize
]
at (0.96,3.42)
{$\Omega_\nu$};

\draw[dashed,black!45]
    (4.10,2.50) -- (4.10,4.35);

\draw[dashed,black!45]
    (1.03,2.50) -- (4.10,2.50);

\node[
    align=center,
    text width=4.35cm,
    font=\footnotesize
]
at (2.58,1.95)
{no local three-cell constraint};

\node[
    fill=red!4,
    rounded corners=2pt,
    inner sep=2pt,
    align=center,
    font=\scriptsize
]
at (2.58,1.30)
{
$\displaystyle
\overrightarrow{\prod_{p\in\Lambda}}
T_{p\to x_0}[P_p]
=
\Omega_\mu\Omega_\nu
\Omega_\mu^\dagger\Omega_\nu^\dagger
$
};

\node[
    align=center,
    font=\footnotesize
]
at (2.58,0.47)
{
$U(1):\ \ \prod_p P_p=1$
\\[1pt]
$\sum_p\varphi_p=2\pi Q$
};

\end{scope}

\end{tikzpicture}
    }
    \caption{
    Links, gauge symmetry and the plaquette constraints.
    \textbf{(a)}~Oriented links and the plaquette $P_{\mu\nu}(x)$ they form.
    \textbf{(b)}~Independent gauge transformations at neighbouring sites, and the parallel
    transport of $W$ from $x+\hat\mu$ to the frame at $x$.
    \textbf{(c)}~In $d=2$ the only constraint is the global closure through the holonomies
    $\Omega_\mu$ and $\Omega_\nu$.
    }
    \label{fig:gauge_bianchi}\label{fig:gauge}
\end{figure*}

\paragraph{Gauge symmetry and gauge fixing.}
The link variables carry a local redundancy. A gauge transformation assigns an element
$g(x)\in G$ to every site and acts as
\begin{equation}
    U_\mu(x)\mapsto g(x)\,U_\mu(x)\,g(x+\hmu)^\dagger .
    \label{eq:gauge}
\end{equation}
A plaquette is a closed loop based at $x$, so along it the factors $g$ cancel in neighbouring
pairs and only the endpoint survives: $P_{\mu\nu}(x)\mapsto g(x)\,P_{\mu\nu}(x)\,g(x)^\dagger$.
Its trace is therefore unchanged and so is the action \eqref{eq:action}, while the Haar measure
is invariant under multiplication by a fixed group element.  Consequently, the whole distribution
\eqref{eq:action} is gauge invariant. The redundancy can be removed by fixing the $V-1$ links of
a maximal spanning tree to the identity, leaving one global transformation at the root
(\cref{app:dof.tree}, \cref{fig:dof}).
Gauge-invariant observables can be computed directly from the gauge-fixed samples.  For gauge-dependent observables, the gauge degrees of freedom can be sampled independently from the uniform distribution.

Gauge covariance also dictates how objects attached to different sites may be combined. A
matrix $W$ attached to the site $x+\hmu$ carries the gauge freedom of that site, transforming as
$W\mapsto g(x+\hmu)\,W\,g(x+\hmu)^\dagger$, so it cannot simply be multiplied with a matrix
attached to $x$. Conjugating it with the link that joins the two sites,
%\begin{equation}
$    T_\mu[W]=U_\mu(x)\,W\,U_\mu(x)^\dagger $,
%    \label{eq:transport}
%\end{equation}
gives an object that transforms at $x$ instead, because the factors $g(x+\hmu)$ carried by the
link \eqref{eq:gauge} cancel against those of $W$. This is called \emph{parallel transport}
(\cref{fig:gauge}b): the link is what carries an object to its neighbouring site. For $U(1)$ it
has no effect, 
beucase the group is Abelian; for $SU(2)$ it is essential for making products of plaquettes based at different sites well defined, 
and thus for
defining the constraints on plaquette space via Bianchi identity,
as explained below
(See also \cref{app:parallel_transport} for more details).

\paragraph{The Bianchi identity.}
%Plaquettes are gauge invariant, but they are not independent. 
Any configuration of link
variables produces plaquettes that satisfy a fixed set of relations exactly and automatically,
 because each link occurs in them once with each orientation. 
Accordingly, a sampler 
%that proposes links therefore never has to consider these relations, whereas a sampler 
that generates
plaquettes directly must satisfy all such constraints,
% ; that requirement is what the present work is
% built around.
% The relations all follow from one statement. Take any closed surface $\Sigma$ assembled from
% plaquettes, transport them to a common base point $x_0$ with \eqref{eq:transport} and multiply
% them in surface order; then the links cancel in pairs and leave
% \begin{equation}
%     \overrightarrow{\prod_{p\in\Sigma}}
%     T_{p\to x_0}\!\left[P_p^{\epsilon_{\Sigma p}}\right]
%     =\mathbf 1 ,
%     \label{eq:bianchi_surface}
% \end{equation}
% where $\epsilon_{\Sigma p}=\pm1$ orients each face consistently with $\Sigma$. This is the
% lattice Bianchi identity, the discrete counterpart of $DF=0$. The transports and the ordering
% matter only in the non-Abelian case, and a derivation is given in \cref{app:bianchi}.
%All constraints are given 
collectively expressed by the
lattice Bianchi identities.
For any contractible closed surface $\Sigma$,
the Bianchi identities imply
\begin{align}
       \textstyle \overrightarrow{\prod}_{p\in\Sigma}
  T_{p\to x_0}\!\left[P_p^{\epsilon_{\Sigma p}}\right]
    =\mathbf 1 ,
    \label{eq:bianchi_surface}
\end{align}
% assembled from plaquettes.
where the arrow indicates that the plaquettes are multiplied in the prescribed surface order, $T_{p\to x_0}[\cdot]$ transports each plaquette to a common base point $x_0$, and $\epsilon_{\Sigma p}=\pm1$ specifies its orientation relative to  $\Sigma$.
%The transports and the ordering matter only in the non-Abelian case, and a derivation is given in 
% Which surfaces exist depends on the dimension. On a periodic lattice, each coordinate plane
% carries a closed but 
For any closed non-contractible surface $\Lambda$ arising from the toroidal topology induced by periodic boundary conditions, the Bianchi identities instead imply
\begin{equation}
\textstyle 
    \overrightarrow{\prod}_{p\in\Lambda}
    T_{p\to x_0}[P_p]=\Omega_\mu\,\Omega_\nu\,\Omega_\mu^\dagger\,\Omega_\nu^\dagger,
    \label{eq:global_closure_2d}
\end{equation}
where $\Omega_\mu$ and $\Omega_\nu$ are the holonomies wrapping the $\mu$- and $\nu$-directions, which span $\Lambda$, respectively
(\cref{fig:gauge_bianchi}c).
% holonomies for the two directions that 
% Transporting around it does not return to the same frame, and \eqref{eq:bianchi_surface} instead
% leaves the commutator of the two holonomies $\Omega_\mu,\Omega_\nu$ that wrap the plane
% (\cref{fig:gauge_bianchi}c),
See \cref{app:bianchi} for derivation,
and \cref{app:bianchi.cases}  for the explicit forms of the Bianchi identities for $U(1)$ and $SU(2)$ theories.

\subsection{Related Work: Sampling with generative models}
\label{sec:bg.gen}

Expectation values under \eqref{eq:action} are conventionally estimated using MCMC methods. As correlations grow, however, local updates become increasingly correlated
and the cost of obtaining effectively independent configurations rises
\citep{DUANE1987216,wolff1990critical,Schaefer2011}. Generative models provide a different
route: a model with tractable density $q$ can propose independent configurations, while
deviations from the target distribution are corrected statistically~\citep{Albergo2019,nicoli_prl}.

% For samples drawn from $q$, self-normalised importance sampling uses weights
% $w=e^{-S}/q$ and is consistent whenever $q$ covers the support of $p$
% \citep{nicoli_pre,nicoli_prl,OwenMCbook}. The same model can instead be used as an
% independence Metropolis proposal with acceptance probability $\min(1,w'/w)$, yielding an
% exact Markov chain \citep{Albergo2019}. In either case, the generative model need not reproduce
% the target exactly; it provides a proposal whose residual error is corrected by the sampling
% procedure.

% We train the models by minimising the reverse Kullback--Leibler divergence
% \citep{10.1214/aoms/1177729694}. This objective requires only samples from $q$ and evaluations
% of the action, and therefore allows data-free training \citep{Vaitl_2022}. Its main
% limitation is its mode-seeking character: a model can assign little probability to important
% regions of the target and still obtain a favourable training objective
% \citep{Hackett:2021idh,Nicoli:2023qsl}. The problem becomes increasingly severe with system
% size. For local theories the divergence is extensive in the volume, and the number of
% importance samples required to reliably probe the target can grow as $e^{\KL}$
% \citep{chatterjee2018sample}.

\paragraph{Generative sampling for lattice field theory.}
Normalizing-flow samplers were first introduced for scalar lattice field theories
\citep{Albergo2019} and subsequently extended to gauge theories through exactly
gauge-equivariant coupling layers, first for $U(1)$ \citep{Kanwar_2020}, then for
$SU(N)$ \citep{Boyda2021}, and later to arbitrary space-time dimension
\citep{Abbott:2023thq}; see \citet{albergo2021introduction} and
\citet{Cranmer:2023xbe} for reviews. The framework has since been applied to fermionic
theories \citep{albergoflowbased,Albergo:2022qfi}, the Hubbard model
\citep{Schuh:2026dvp,Kreit:2026eng}, entanglement observables
\citep{Bulgarelli:2024yrz}, and correlated ensemble generation for QCD
\citep{Abbott:2024kfc,Abbott:2026ylv}. Related approaches include stochastic normalizing
flows based on Jarzynski's equality
\citep{PhysRevD.94.034503,Caselle:2022acb,Bulgarelli:2024brv}, learned continuous maps
\citep{Bacchio:2022vje,Gerdes:2024rjk}, autoregressive models  \citep{PhysRevLett.122.080602}, and diffusion
models for gauge theories
\citep{Wang:2023exq,Zhu:2025pmw,vega2025group,aarts2026generalizableequivariantdiffusionmodels,Kanwar:2025wuc}.
Very recently, \citet{Singha:2026aac} applied plaquette-space sampling to two dimensional $U(1)$ theory, and proposed a mixture-model approach to mitigate topological freezing.
All these previous works, except \citet{Singha:2026aac}, generate link variables directly.  Our work generalizes the plaquette-space sampling to higher dimensions and larger gauge groups.

% A central challenge across these approaches is scaling. Flow quality is known to deteriorate
% with volume \citep{Abbott:2022zsh}, and reverse-KL training can collapse onto only part of a
% multimodal target \citep{Hackett:2021idh,Nicoli:2023qsl}. For local theories, the
% extensivity of $\KL$, together with importance-sampling sample-complexity bounds
% \citep{chatterjee2018sample}, makes this deterioration particularly severe at large volume.
% Exact reweighting remains possible when the proposal has sufficient support
% \citep{nicoli_pre,nicoli_prl}, but an accurate and scalable proposal is still essential for
% practical efficiency. Our work addresses this problem at the level of representation and
% constraint structure rather than by introducing a new density model: the exact constraint
% layer of \cref{sec:m.build} can be combined with different tractable generative models, and
% we use spline couplings \citep{Durkan2019} as a convenient choice for circular variables.

\paragraph{Hierarchical and constrained generative models.}
Coarse-to-fine sampling has a long history, including multigrid Monte Carlo
\citep{Goodman:1986pv,Goodman:1989jw}, renormalization-group-guided updates
\citep{Schmidt:1983rng,FAAS1986571}, multiscale equilibration
\citep{Endres:2015yca}, and multilevel variance-reduction methods
\citep{PhysRevD.93.094507,Giles_2015}. The learned hierarchical constructions have directly connected generative modeling with the renormalization group
\citep{PhysRevLett.121.260601,Koch-Janusz2018,Hu_2022}. For discrete spin systems, hierarchical generative approaches include HAN \citep{BIALAS2022108502} and RiGCS \citep{Singha2025RiGCS}, the latter constructing configurations from coarse to fine by conditioning each refinement level on the preceding one. For continuous lattice field theories, closely related coarse-to-fine generative sampling has been developed through super-resolving normalizing flows \citep{bauer2025super} and multilevel conditional-flow constructions \citep{singha2026scalable}. In lattice gauge theory, \citet{Abbott2023} proposed a multiscale normalizing-flow construction that progressively generates gauge fields from coarse to fine in link space, which we use as a baseline in our experiments. For an extended discussion, see \cref{app:related}.
\section{Proposed Method: Plaquette-Space Sampler (PSS)}
\label{sec:method}

% Since the target \eqref{eq:action} depends on the link field only through its plaquettes, we
% sample plaquettes. The difficulty, as discussed in \cref{sec:bg.lgt} is that the plaquettes of a
% link field are not independent, they must obey the Bianchi constraint. 
Our approach is to construct a generative sampler that generates valid plaquette configurations satisfying all Bianchi constraints exactly,
thereby ensuring the existence of corresponding link configurations.
Although a single-level plaquette sampler can in principle be constructed by solving all Bianchi constraints simultaneously (see \Cref{app:singlelevel}), it does not scale to large lattice sizes in higher-dimensional theories, as  
the resulting linear system grows rapidly, potentially posing computational challengs.
% \textcolor{red}{and the solving the linear could become challenging for large lattices} (see \cref{app:solve_size}).
Accordingly, we employ a multilevel sampling scheme.

% \subsection{The multilevel construction}
% \label{sec:m.refine}
%The central idea is to build 
% a valid plaquette field is built from coarse to fine. Throughout, \emph{valid} means that every cube identity \eqref{eq:bianchi_general} and every global closure holds, equivalently that the field is the plaquette field of some link field; the valid fields of the target lattice form the manifold $\mathcal M$ of \cref{sec:m.vars}. 

In a $d$-dimensional lattice with $V$ sites, the numbers of plaquette variables and independent Bianchi constraints are given by
$M = \frac{d(d-1)}{2}V $ and
$B = \frac{(d-1)(d-2)}{2}V + d-1$, respectively (see Appendix~\ref{app:dof}).
Let $\mathcal V^{(i)} (\subset G^{M^{(i)}})$ be the set of valid plaquette fields.
We first generate $d$ holonomies $\Omega \sim q_\Omega$ from their learned marginal distribution.
Then, at the coarsest level $i=0$, 
we generate $|F^{(0)}| = M^{(0)}-B^{(0)}$ free plaquette variables $ F^{(0)}
\sim \widetilde{q}_0(\cdot | \Omega)$ and determine the remaining $B^{(0)}$ plaquettes from the Bianchi identities (\cref{fig:refine}, left).
This yields the full plaquette configuation at this level,
$P^{(0)} = R_0(F^{(0)};\Omega)$,
where we refer to
 $R_0: G^{|F^{(0)}|} \times G^{d} \mapsto \mathcal V^{(0)}$
 as a \emph{constraint completion map}.
% We denote this coarsest map by $R_0: G^{\textcolor{red}{|F^{(0)}|}} \times G^{d} \mapsto \mathcal V^{(0)}$, which plays the role of $R_i$ below, with the holonomies in place of the coarse field, so that $P^{(0)} = R_0(F^{(0)};\Omega)$.

\begin{figure}[t]
\centering
% Reduce Figure 2 size by ~30%
\resizebox{0.8\linewidth}{!}{\begin{tikzpicture}[
    font=\small,
    >=latex,
    line cap=round,
    line join=round
]

\definecolor{coarsefill}{RGB}{226,232,238}
\definecolor{genfill}{RGB}{205,224,246}
\definecolor{detfill}{RGB}{250,210,177}
\definecolor{accent}{RGB}{225,105,25}
\definecolor{guide}{RGB}{165,165,165}

% ============================================================
% Coarse lattice: 2 x 2
% ============================================================
\node[font=\small\bfseries] at (1.5,5.10) {Coarse lattice};
\node[font=\footnotesize] at (1.5,4.65) {$2\times2$};

\begin{scope}[shift={(0.25,1.35)}]
    \def\s{1.25}

    \fill[coarsefill] (0,0) rectangle (\s,\s);
    \fill[coarsefill] (\s,0) rectangle (2*\s,\s);
    \fill[coarsefill] (0,\s) rectangle (\s,2*\s);
    \fill[detfill]    (\s,\s) rectangle (2*\s,2*\s);

    \draw[thick] (0,0) rectangle (2*\s,2*\s);
    \draw[thick] (\s,0) -- (\s,2*\s);
    \draw[thick] (0,\s) -- (2*\s,\s);

    \foreach \i in {0,1,2}{
        \foreach \j in {0,1,2}{
            \fill ({\i*\s},{\j*\s}) circle (2.2pt);
        }
    }

    \node at (0.5*\s,1.5*\s) {$P^{\mathrm c}_1$};
    \node at (1.5*\s,1.5*\s) {$P^{\mathrm c}_2$};
    \node at (0.5*\s,0.5*\s) {$P^{\mathrm c}_3$};
    \node at (1.5*\s,0.5*\s) {$P^{\mathrm c}_4$};

    \draw[accent, very thick]
        (\s,\s) rectangle (2*\s,2*\s);
\end{scope}

\draw[->, thick]
    (3.05,2.95) -- (3.75,2.95);

% ============================================================
% Local refinement of selected coarse plaquette
% ============================================================
\node[font=\small\bfseries] at (6.20,5.10)
    {Local refinement of $P^{\mathrm c}_2$};

\begin{scope}[shift={(4.00,2.15)}]
    \def\a{1.05}

    \fill[detfill] (0,0) rectangle (\a,\a);
    \draw[thick] (0,0) rectangle (\a,\a);

    \foreach \x/\y in {0/0,1/0,0/1,1/1}{
        \fill ({\x*\a},{\y*\a}) circle (2.2pt);
    }

    \node at (0.5*\a,0.5*\a) {$P^{\mathrm c}_2$};
\end{scope}

\draw[->, thick]
    (5.25,2.67) -- (5.85,2.67);

\begin{scope}[shift={(6.05,1.70)}]
    \def\f{0.82}

    \fill[genfill] (0,\f) rectangle (\f,2*\f);
    \fill[genfill] (0,0) rectangle (\f,\f);
    \fill[genfill] (\f,0) rectangle (2*\f,\f);
    \fill[detfill] (\f,\f) rectangle (2*\f,2*\f);

    \draw[thick] (0,0) rectangle (2*\f,2*\f);
    \draw[thick] (\f,0) -- (\f,2*\f);
    \draw[thick] (0,\f) -- (2*\f,\f);

    \foreach \i in {0,1,2}{
        \foreach \j in {0,1,2}{
            \fill ({\i*\f},{\j*\f}) circle (2.2pt);
        }
    }

    \node at (0.5*\f,1.5*\f) {$P_1$};
    \node at (0.5*\f,0.5*\f) {$P_2$};
    \node at (1.5*\f,1.5*\f) {$P_3$};
    \node at (1.5*\f,0.5*\f) {$P_4$};

    \draw[accent, very thick]
        (0,0) rectangle (2*\f,2*\f);
\end{scope}

\node at (6.87,1.20)
{
$\displaystyle
P_3
=
P_4^\dagger P^{\mathrm c}_2
(P_2P_1)^\dagger
$
};

% ============================================================
% Refined lattice: 4 x 4
% ============================================================
\draw[->, thick]
    (8.05,2.95) -- (8.85,2.95);

\node[font=\small\bfseries] at (11.15,5.10)
    {Refined lattice};
\node[font=\footnotesize] at (11.15,4.65)
    {$4\times4$};

\begin{scope}[shift={(9.30,1.10)}]
    \def\r{0.82}

    \foreach \i in {0,...,3}{
        \foreach \j in {0,...,3}{
            \fill[genfill]
            ({\i*\r},{\j*\r})
            rectangle
            ({(\i+1)*\r},{(\j+1)*\r});
        }
    }

    % determined plaquette in the refined image of P^c_2
    \fill[detfill]
        (3*\r,3*\r) rectangle (4*\r,4*\r);

    \draw[thick] (0,0) rectangle (4*\r,4*\r);

    \foreach \i in {1,2,3}{
        \draw[thick]
            ({\i*\r},0) -- ({\i*\r},4*\r);
        \draw[thick]
            (0,{\i*\r}) -- (4*\r,{\i*\r});
    }

    \foreach \i in {0,...,4}{
        \foreach \j in {0,...,4}{
            \fill ({\i*\r},{\j*\r}) circle (2.1pt);
        }
    }

    % fine Wilson loop corresponding to coarse P^c_2
    \draw[accent, very thick]
        (2*\r,2*\r) rectangle (4*\r,4*\r);
\end{scope}

% guide lines
\draw[guide, dashed, thick]
    (7.69,3.34) -- (10.94,4.38);

\draw[guide, dashed, thick]
    (7.69,1.70) -- (10.94,2.74);

% ============================================================
% Main message
% ============================================================
\node[
    draw=accent!45,
    rounded corners=3pt,
    inner xsep=10pt,
    inner ysep=6pt,
    align=center
] at (6.6,0.28)
{
\textbf{Coarse plaquette preserved exactly as a fine Wilson loop}
\\[2pt]
$\displaystyle
P^{\mathrm c}_2
=
P_4P_3P_2P_1
$
};

\end{tikzpicture}}
\caption{One refinement in plaquette space, for $d=2$. \textbf{Left:} the coarse lattice; three
plaquettes are free and $P^{\mathrm c}_2$ (orange) is fixed by the global closure.
\textbf{Centre:} its 
%$2\times2$ 
top-right block, refined
 with $P_1,P_2,P_4$ (blue) generated and $P_3$ determined
by $P_3=P_4^\dagger P^{\mathrm c}_2(P_2P_1)^\dagger$. \textbf{Right:} the refined lattice; the orange
outline marks the fine Wilson loop whose value is $P^{\mathrm c}_2$.}
\label{fig:refine}
\end{figure}

At each subsequent level, we generate the finer plaquettes $P^{(i)} \in \mathcal V^{(i)}$ conditional on the coarse plaquettes $P^{(i-1)}$, in addition to the holonomies $\Omega$.  Here, unlike the approaches of \citet{bauer2025super} and \citet{Abbott2023},
%whose fine-lattice flow acts on every link, 
we preserve the coarse plaquette variables by imposing local blocking constraints that fix the corresponding Wilson loops at the finer level to the same values (\cref{fig:refine}, center and right).
This makes the number of free plaquette variables generated at level $i$ be $|F^{(i)}| = \big(M^{(i)}-B^{(i)}\big) - \big(M^{(i-1)}-B^{(i-1)}\big)$.
% , the degrees of freedom of the valid fine field not already carried by the coarse one; for a doubling along one direction $N^{(i)}=(d-1)V^{(i-1)}$ (\cref{app:refine}).
%
% we generate new independent plaquettes and determine the remaining plaquettes exactly from local blocking and Bianchi identities, such that the coarse plaquette value is preserved as the value of the corresponding Wilson loop at the finer level.
 % If $P^{(i-1)}$ denotes a valid coarse field and $\mathcal V^{(i)}$ the set of valid fields on the next finer lattice, one refinement takes the form
% \begin{equation}
% F^{(i)}
% \sim q_i(\,\cdot\,|\,P^{(i-1)}),
% \qquad
% P^{(i)}
% =R_i(P^{(i-1)},F^{(i)})
% \in\mathcal V^{(i)},
% \label{eq:refine_ml}
% \end{equation}
% where $F^{(i)}$ contains the newly generated degrees of freedom and the fixed map $R_i$ completes the fine field. The learned model therefore only has to describe the conditional distribution of the new variables; satisfaction of the Bianchi constraints is handled exactly by the refinement map. Long-range structure is inherited from the coarse field, while the constraints introduced at each refinement remain local.
Let $R_i: G^{|F^{(i)}|} \times \mathcal V^{(i-1)} \times G^d  \mapsto \mathcal{V}^{(i)} $ 
be 
 the constraint completion map at level $i$. 
%from the free plaquette variables to the valid plaquette configurations.
Then, the sampling process is given as
\begin{equation}
F^{(i)}
\sim \widetilde{q}_i(\,\cdot\,|\,P^{(i-1)}, \Omega),
\qquad
P^{(i)}
=R_i(F^{(i)} ; P^{(i-1)}, \Omega).
%\in\mathcal V^{(i)}.
\label{eq:refine_ml}
\end{equation}
% where $F^{(i)}$ contains the newly generated degrees of freedom and the fixed map $R_i$ completes the fine field. The learned model therefore only has to describe the conditional distribution of the new variables; satisfaction of the Bianchi constraints is handled exactly by the refinement map. Long-range structure is inherited from the coarse field, while the constraints introduced at each refinement remain local.
% The learned model therefore only has to describe the conditional distribution of the new variables; satisfaction of the Bianchi constraints is handled exactly by the refinement map. Long-range structure is inherited from the coarse field, while the constraints introduced at each refinement remain local.
 Construction of the constraint completion map $R_i$ depends strongly on the lattice dimension and the gauge group, with explicit forms
detailed in \cref{app:refine}.
However, it is always fixed, bijective, and volume-preserving.
% one doubling of the lattice along a single direction, and the chain of levels it generates are given in \cref{app:refine}; the two-dimensional $2\times2$ block refinement of \cref{fig:refine} is in \cref{app:block2d}.

Let ${q}_i(\,\cdot\,|\,P^{(i-1)}, \Omega) = [R_i(\cdot; P^{(i-1)}, \Omega)]_{\#}
\widetilde{q}_i(\cdot |P^{(i-1)}, \Omega)$ be the push-forward of $\widetilde{q}_i$ under $R_i$.
Then, our complete \emph{multilevel Plaquette-Space Sampler} (PSS) is given as
%We model these coordinates level by level,
\begin{equation}
q(P, \Omega)
= q_{\Omega}(\Omega) \textstyle q_0(P^{(0)}| \Omega)
\prod_{i=1}^{n}
q_i(P^{(i)} |P^{(i-1)}, \Omega)
\label{eq:chain}
\end{equation}
with all densities with respect to the 
%product Haar measure 
 intrinsic measure on the Bianchi-constrained plaquette space.
% on the free coordinates $F^{(0)},\dots,F^{(n)}$, which the unit-Jacobian maps $R_i$ carry to the constraint set (\cref{prop:abelian})}
% Since the coordinate transformation \eqref{eq:coords} has unit Jacobian, this is a density on $\mathcal V$ and no additional Jacobian factor enters. The coarse variables therefore form a valid plaquette field themselves, while the fine variables describe only the new physical degrees of freedom. No separate coarse-lattice target distribution is introduced. The full hierarchy is trained against the target on the final lattice, so that at the optimum $q_0$ represents the coarse marginal of the fine target and each $q_i$ the corresponding conditional distribution.
We model all conditionals $\widetilde{q}_i$ by normalizing flows with the exact sampling density given by $q(P, \Omega)
= \textstyle 
 q_{\Omega}(\Omega) q_0(P^{(0)}| \Omega)
\prod_{i=1}^{n}
q_i(R^{-1}_i(P^{(i)}; P^{(i-1)}, \Omega) |P^{(i-1)}, \Omega)$.
We train the model by minimizing the reverse Kullback--Leibler divergence
\begin{equation}
\mathcal L(\theta)
=
\mathbb E_{P, \Omega \sim q_\theta}
\big[\log q_\theta(P, \Omega)+S(P)\big]
=
\KL(q_\theta\|p)-\log Z .
\label{eq:rkl}
\end{equation}
Architectures and training schedules are given in \cref{app:arch,app:training},
and the sampling procedure is summarized in \Cref{alg:sampler}.

\begin{algorithm}[tb]
\caption{Multilevel Plaquette-Space Sampler (PSS)}
\label{alg:sampler}
\begin{algorithmic}[1]
\STATE 
% $P^{(0)}\sim q_0$;\quad $\ell\leftarrow\log q_0\big(P^{(0)}\big)$
$\Omega \sim q_\Omega,
\quad
F^{(0)}\sim \widetilde{q}_0(\cdot|\Omega),   \quad P^{(0)}\leftarrow R_0(F^{(0)};\Omega),
\quad \ell\leftarrow \log 
q_\Omega(\Omega)
\widetilde{q}_0(F^{(0)}|\Omega)$
\hfill coarsest lattice
\FOR{$i=1,\dots,n$}
\STATE
$F^{(i)}\sim \widetilde{q}_i(\,\cdot\mid P^{(i-1)}, \Omega)$
\hfill learned conditional on $G^{|F^{(i)}|}$
\STATE
$P^{(i)}\leftarrow R_i(F^{(i)}; P^{(i-1)},\Omega)$
\hfill exact deterministic refinement
\STATE
$\ell\leftarrow\ell+\log \widetilde{q}_i(F^{(i)}\mid P^{(i-1)}, \Omega)$
\ENDFOR
\STATE
$P\leftarrow P^{(n)}$;\qquad
$S(P)$ from \eqref{eq:action}
\RETURN
$P$,\quad
$\log q(P, \Omega)=\ell$,\quad
$\log w=-S(P)-\ell$
\end{algorithmic}
\end{algorithm}

% {\color{red} Say advantages, novelty, etc. of our multilevel sampler, referring to experimental results}

The key advantage of the multilevel construction is that it satisfies the Bianchi identities exactly while keeping the solve local, and still retains a tractable sampling density. Rather than solving the Bianchi identities simultaneously,
we enforce them level by level: every generated configuration lies on the
constraint manifold by construction, while each determined plaquette depends
on at most four newly generated variables.
%, independently of the lattice volume. 
This avoids the scaling problem of the linear solve in higher-dimensional theories, as the matrix size %and condition number 
of the linear systems remains small, independent of the lattice size.
Because the constraint completion maps are bijective and Haar-volume preserving,
the resulting multilevel density can be evaluated exactly and used for
importance reweighting. At the same time, coarse plaquettes are preserved
exactly as Wilson loops of the refined field, allowing long-range structure to
be carried by the coarse levels while finer levels learn only the additional
 short-distance degrees of freedom. %\textcolor{red}{This is an important distinction from
% multiscale link-space flows \citep{Abbott2023}, which use the hierarchy as a
% prior for a separate flow on every fine link, unconstrained by the hierarchy: here the coarse
% variables are retained exactly through every refinement, and the only
% finest-lattice layers, four per-direction passes at four of the $4$D couplings,
% are refinements of the same constraint-preserving form (\cref{tab:arch},
% \cref{app:related}).} 
This is an important distinction from the
multiscale link-space flows of \citet{Abbott2023}, where the hierarcy is used only as a
prior, while an additional finest-lattice flow is responsible for capturing all correlations, thereby limiting the expressivity of the overall model.
% Here the coarse values are fixed exactly, so the long-range structure is inherited from the coarse levels and never has to be relearned at the finest resolution.
%and do not preserve the coarse-level link values as strict constraints on the final samples.
% \textcolor{red}{Their final flow acts on every fine link, unconstrained by the coarse values, so it is responsible for all correlations the hierarchy misses, long-range ones included, and must model them with fine flow on the full fine lattice, which is where link-space flows degrade toward weak coupling. Here the coarse values are fixed exactly, so the long-range structure is inherited from the coarse levels and never has to be relearned at the finest resolution.}
These properties are reflected in the experiments of
\cref{sec:exp}: the method scales more favourably with lattice size in
two-dimensional $U(1)$, 
shows an increasing advantage over link-space
baselines toward weak coupling in four-dimensional $U(1)$,
and remains efficient as correlations grow in
two-dimensional $SU(2)$.

%
% \subsection{Training}
% \label{sec:m.train}
%
% All models are trained without data 
% We train the mutlilevel sampler by minimizing the reverse Kullback--Leibler divergence
% \begin{equation}
% \mathcal L(\theta)
% =
% \mathbb E_{P\sim q_\theta}
% \big[\log q_\theta(P)+S(P)\big]
% =
% \KL(q_\theta\|p)-\log Z .
% \label{eq:rkl}
% \end{equation}
% where samples from $q_\theta$ are generated by \cref{alg:sampler}. Since every refinement map $R_i$ satisfies the constraints exactly, the objective is evaluated only on valid gauge fields, irrespective of the current quality of the learned conditionals.
%
% The gradient is obtained by differentiating through the full coarse-to-fine sampling chain. Thus all levels are trained jointly against the target distribution on the final lattice; no training data or separately defined coarse-lattice target distributions are required. Models are evaluated using the divergence, evidence recovery, and effective sample size defined in \eqref{eq:diagnostics}, together with reweighted observables compared against an independent reference (\cref{app:estimation}). 
% Architectures and training schedules are given in \cref{app:arch,app:training}.

\section{Numerical Experiments}
\label{sec:exp}

We evaluate our multilevel PSS in three complementary settings. 
Two-dimensional $U(1)$ provides a controlled test of scaling with lattice size
along a line of constant physics; four-dimensional
$U(1)$ probes its behaviour toward weak coupling, where link-space flows are known to
deteriorate; and two-dimensional $SU(2)$ tests whether the advantage
persists for a non-Abelian theory as the correlation length grows. 

\paragraph{Evaluation Measures.}
For $N$ generated samples $P_i \sim q$, we evaluate the KL divergence, the relative accuracy (RA), and the effective sample size (ESS): 
\begin{equation}
  \KL(q\|p)= \textstyle \log Z-\frac1N\sum_i \log w_i ,
  \qquad
  \mathrm{RA} =
  \hat{\mathcal{O}} / \mathcal{O}^{\mathrm{ref}} - 1,
  \qquad
  \ESS=\frac{(\sum_i w_i)^2}{N\sum_i w_i^2},
  \label{eq:diagnostics}
\end{equation}
where $ w_i=\exp(-S(P_i)) /  q(P_i)$ is an unnormalized importance weight
and $ \hat{\mathcal{O}} =\frac{\sum_i w_i \mathcal{O}(P_i)}{\sum_i w_i}$ is the self-normalized importance sampling estimator for the observable $\mathcal{O}$. Unless stated otherwise $\mathcal O$ is the \emph{mean plaquette}, the quantity summed by the action \eqref{eq:action}, averaged over lattice configurations. It is written $\langle\cos\varphi\rangle$ for $U(1)$ and $\langle\tfrac12\operatorname{tr}U_p\rangle$ for $SU(2)$, and is what panel~(b) of \cref{fig:scaling_te,fig:u1_4d,fig:su2_kl_obs_ess} reports.
The true free energy $\log Z$ and  reference values $\mathcal{O}^{\mathrm{ref}} $ for the observables are computed independently by sufficiently accurate methods (see \cref{app:reference,app:estimation}).
The KL divergence
$\KL(q\|p)$ measures the discrepancy between the sampler density $q$ and  the target density $p$, with the corresponding values for some baseline methods reported in their original papers.
RA allows us to assess both the bias and variance of the estimator, while 
$\ESS$ measures the sampling efficiency.
% dispersion of the importance weights, while $\log\hat Z$ measures how much of the target
% normalisation is recovered by the generated sample. The commonly used surrogate
% $\log\hat Z-\overline{\ell}$ underestimates the true divergence when important regions of
% the target are missed. We therefore also report
% $\exp(\log\hat Z-\log Z)$, which we call the \emph{evidence recovery}.
Crucially, $\ESS$ is reliable only when the estimator is consistent with the reference value, i.e., when RA shows a small bias.  Otherwise, $\ESS$ can be misleadingly large; for example, $\ESS=1$ when the sampler repeatedly generates the same configuration, regardless of the target distribution.

% \paragraph{Baseline Methods.}
% {\color{red} Can you summarize the baseline methods for each theory, and say which results are reproduced and which results are taken from the original paper?}
\paragraph{Baseline Methods.}
For each theory, we compare our multilevel PSS 
%with a single-level plaquette sampler and 
with link-space flow baselines. 
% \textcolor{red}{, exposing the effect of the hierarchy; in four dimensions the two are compared at matched parameter count, though they still differ in width, depth and seed (\cref{fig:ablation})}. 
For two-dimensional $U(1)$, we implement two link-space baselines ourselves: a
multiscale model \citep{Abbott2023} using the gauge-equivariant coupling
layers \citep{Kanwar_2020} with rational-quadratic splines, and a single-scale
version of the same flow applied directly to the full link lattice with a Haar
base.
For
four-dimensional $U(1)$, we compare our sampler with the multiscale and single-scale link-space samplers
of \citet{Abbott2023}. Since their code is not publicly available, we quote the
KL-divergence 
%central values 
values from Fig.~4 of \citet{Abbott2023}.
% ; uncertainties are shown there but
% could not be extracted reliably. These are the only numerical results taken
% directly from prior work.
 For two-dimensional $SU(2)$, we train a (single-level) continuous flow
\citep{Gerdes:2024rjk} using the released \texttt{bijx} library. Since neither training
script nor pretrained weights are provided for our benchmark, we performed training and
evaluation independently. 
% At the overlapping couplings, our
% reproduction gives $\ESS=0.85$ and $0.79$ at $\beta=2.2$ and $2.7$, compared with
% their reported $0.87$ and $0.68$.
Our implementation closely reproduces their reported results, yielding 
%At the overlapping couplings, our
%reproduction gives 
$\ESS=0.85$ and $0.79$ at $\beta=2.2$ and $2.7$, respectively, compared with
their reported $\ESS=0.87$ and $0.68$.
To our knowledge, no multilevel link-space sampler has been proposed for two-dimensional $SU(2)$.
As an ablation study, we also evaluate the single-level PSS, which  uses
the same representation as multilevel PSS but generate all free plaquettes with a single sampler and solves all Bianchi constraints simultaneously
(\cref{app:singlelevel}).

\paragraph{Evaluation Protocol.}
Each trained model is evaluated once from a retained checkpoint using a fixed random seed, with all reported quantities \eqref{eq:diagnostics}---KL divergence, RA, and ESS---%
%, evidence recovery, and reweighted observables---
computed from the same set of independent samples. 
% We use the diagnostics of
% \eqref{eq:diagnostics}, with $\log Z$ obtained independently of the model: exactly in
% two dimensions and by thermodynamic integration in four dimensions. Reweighted
% observables are likewise compared with independent references, using exact
% finite-volume results in two dimensions and a heatbath ensemble in four dimensions.
Uncertainties are estimated by delete-block jackknife with $100$ blocks over the same
samples.
%, including the uncertainty of the reference $\log Z$ in the KL error.
All plaquette-space samplers and link-space baselines are evaluated with $N=\Neval$, except the two-dimensional $SU(2)$ link-space baseline with $N\le10^4$, where the continuous flow
\citep{Gerdes:2024rjk} generation
is too costly---roughly one second per configuration. Smaller evaluation samples tend to bias
finite-sample ESS estimates upward, so these choices do not artificially favour our
plaquette-space models. Full estimator definitions and uncertainty procedures are
given in \cref{app:estimation}.

\begin{figure}[tbh]
\centering
\includegraphics[width=\linewidth]{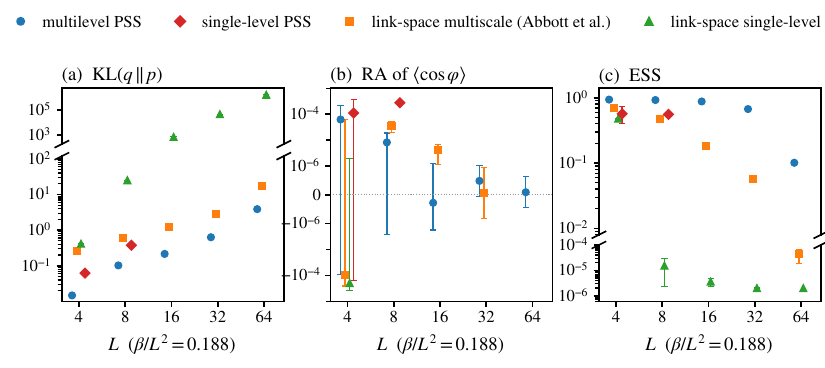}
\vspace{-12pt} % reduce gap between figure and caption
\caption{Two-dimensional $U(1)$ along the line of constant physics $\beta=\teC L^2$.
\textbf{(a)}~$\KL(q\|p)$. \textbf{(b)}~RA of the importance-weighted plaquette estimate against the exact finite-volume value.
\textbf{(c)}~$\ESS$. Panels (a) and (c) use logarithmic axes with one break;
points across a break are not directly comparable.}
\label{fig:scaling_te}
\end{figure}
\subsection{Results}

\paragraph{Two-dimensional $U(1)$.}
We begin with 
two-dimensional $U(1)$ theory, which
provides a direct test of scaling with lattice size. 
%In this case the refinement reduces to flux additivity within $2\times2$ blocks (\cref{app:refine}). 
Along a line of constant physics, $\beta=cL^2$, increasing $L$
keeps the physical theory fixed while resolving it on progressively finer lattices, so the
correlation length in lattice units grows. 
In \cref{fig:scaling_te} (a)-(c),
which show the three evaluation measures in Eq.\eqref{eq:diagnostics}, respectively,
both single-scale~\citep{Kanwar_2020} and multiscale~\citep{Abbott2023} link-space baselines deteriorate
rapidly with $L$ at $c=\teC$, yielding large KL divergences and large variances in observable estimates, as well as small $\ESS$ at $L=32$ and $64$.
% The single-scale flow has essentially one
% effective sample throughout, and the multiscale flow falls from
% $\ESS=\teMsEight$ at $L=8$ to $\teMsSixtyfour$ at $L=64$. 
In contrast, our multilevel
PSS 
%decreases from $\teOursEight$ to $\teOursSixtyfour$.
yields significantly lower KL divergence, compatible observable estimates, and larger ESS.
The advantage grows from $2.0\times$ at $L=8$ to $4.9\times$ at $L=16$ and
$11.9\times$ at $L=32$. At $L=64$, the weights of the link-space flow are severely degenerate: $\ESS=4\times10^{-5}$, with a single configuration carrying
$16\%$ of the total weight, compared with $\ESS=0.10$ for our multilevel PSS.
The multilevel PSS also reproduces the topological susceptibility $\chi_{\mathrm{top}}=\langle Q^2\rangle/V$ of the integer topological charge $Q$---a central physical quantity that is notoriously hard to estimate because Markov chains freeze in fixed-$Q$ sectors toward the continuum limit---within $0.7\sigma$ of its exact value for every $L$ (see 
%\cref{fig:chitop} in 
\cref{app:twod}).

\paragraph{Four-dimensional $U(1)$. }
\begin{figure}[tbh]
\centering
\includegraphics[width=\linewidth]{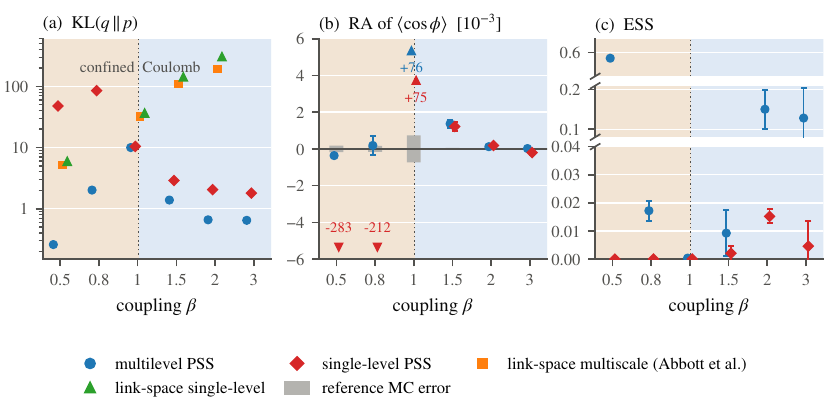}
\vspace{-12pt} % reduce gap between figure and caption
\caption{Four-dimensional $U(1)$.
\textbf{(a)}~$\KL(q\|p)$ as a function of $\beta$.
\textbf{(b)}~$\mathrm{RA}$ of the importance-weighted estimate of $\langle\cos\varphi\rangle$
against the heatbath reference, with the grey band indicating the reference's own uncertainty; points outside the plotted range are depicted as triangles at the panel edge, labeled with their values.
\textbf{(c)}~$\ESS$, shown on a linear scale with two axis breaks. Background shading denotes the confined and Coulomb phases, separated at
$\beta_c\simeq1.011$. }
\label{fig:u1_4d}
\end{figure}

\begin{figure}[tb]
\centering
\includegraphics[width=\linewidth]{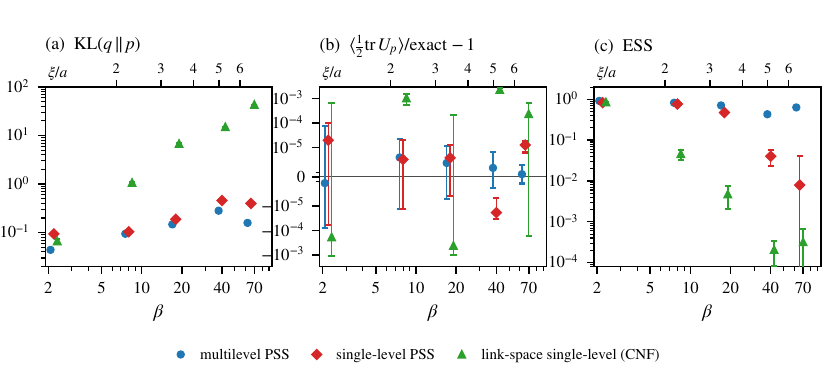}
\vspace{-12pt} % reduce gap between figure and caption
\caption{Two-dimensional $SU(2)$, $L=16$. \textbf{(a)}~$\KL(q\|p)$.
\textbf{(b)}~RA of the importance-weighted estiamte of $\langle\tfrac12\Tr U_p\rangle$ against the exact finite-volume
value. \textbf{(c)}~$\ESS$. 
% Three samplers: the multilevel and single-level plaquette
% flows and a link-space continuous flow. 
No multiscale link-space sampler has been proposed.
The upper axis indicates the correlation length
$\xi/a$.}
\label{fig:su2_kl_obs_ess}
\end{figure}

Next we test samplers
% in four-dimensional compact $U(1)$ 
on the $2^4\!\to\!4^4$ benchmark of
 \citet{Abbott2023}.
%  Our sampler contains five learned levels with eight coupling layers per
% % level.
% %and no additional fine-lattice flow. 
% The multiscale link-space baseline instead uses a
% hierarchical prior followed by $\AbbLayers$ gauge-equivariant layers on the fine lattice.
As shown in 
\Cref{fig:u1_4d}, our multilevel PSS achieves  smaller KL divergences than both the single-level and multilevel link-space baseline samplers.
Around the first-order transition at $\beta \simeq 1.011$, indicated by the background shading, all samplers perform poorly. Addressing this regime is beyond the scope of this work, as our multilevel PSS is designed to mitigate the issues that arise  % We instead focus on
in the weak-coupling regime at larger $\beta$, where the system is in the massless Coulomb phase and correlations are long-ranged, with their effective extent ultimately limited by the finite lattice size. This is precisely where the multiscale link-space construction deteriorates: its KL divergence increases with $\beta$, whereas our PSSs improve, leading to a rapidly widening performance gap. 
% At $\beta=1.5$ and $2$, our multilevel sampler achieves $\KL=1.4$ and $0.66$, respectively, compared with $110$ and $190$ for the multiscale link-space flow and $140$ and $300$ for its single-scale counterpart. 
As observed in \Cref{fig:u1_4d} (a), 
the gap in terms of the KL divergence  exceeds two orders of magnitude at larger $\beta$. 
\paragraph{Two-dimensional $SU(2)$.}
Two-dimensional $SU(2)$ tests the same construction in a non-Abelian theory as correlations become increasingly long-ranged. On the $16^2$ lattice, the correlation length $\xi/a$ (indicated by the scale along the top axis) grows from roughly one lattice spacing to $6.6$, so at the largest coupling it spans about $40\%$ of the lattice extent.
% Two-dimensional $SU(2)$ tests the same construction beyond the Abelian case. 
% % Each
% % refinement generates three group elements per $2\times2$ block and fixes the fourth through
% % the exact block relation (\cref{app:refine}). 
% On a $16^2$ lattice, the correlation length
% increases from roughly one lattice spacing to $\suXiMax$ over the range shown in
% \cref{fig:su2_kl_obs_ess}. 
The link-space flow is competitive only when correlations are short and rapidly loses efficiency as they grow. The single-level plaquette sampler is more robust, remaining close to the multilevel sampler at $\xi/a=2.2$, but it too deteriorates at longer correlation lengths and becomes effectively inefficient at $\xi/a=6.6$. Our multilevel PSS remains efficient throughout this regime and substantially outperforms both baselines.
% to our knowledge, no multilevel link-space sampler has been proposed for two-dimensional $SU(2)$.

% The link-space flow is competitive only when correlations are short and rapidly loses efficiency as correlation grows. The single-level plaquette sampler is substantially more robust, remaining close to the multilevel sampler at $2.2$, but it also deteriorates as correlations lengthen and becomes effectively inefficient at $6.6$.
% Our multilevel plaquette-space sampler substantially outperforms both (no multilevel link-space sampler has been proposed for 2-dimensional $SU(2)$).

% The single-level plaquette-space sampler also is competitive up to $\beta \sim 20$, 
% while deteriorates strongly as $\beta$ increases.
% The single-level sampler is competitive when correlations are short, with
% $\ESS=\suSLeight$ at $\beta=8$, but deteriorates strongly as $\beta$ increases, eventually
% reaching $\suSLlast$ and showing substantial sensitivity to the training seed. 
% The
% multilevel plaquette-space sampler remains much more stable: for $\beta\ge12$, its $\ESS$ stays within
% $\suMLrange$, yielding gains of up to $\suRatioMax$ over the single-level model.

Taken together, the three experiments show a consistent pattern. In two-dimensional $U(1)$ our multilevel PSS
scales substantially better with lattice size than the link-space baselines, while 
in four-dimensional
$U(1)$ and two-dimensional $SU(2)$, its relative advantage 
grows toward weak coupling and longer correlations. 
Although the single-level PSS performs comparably to the multilevel counterpart for small lattice sizes, its performance degrades as the lattice size increases, and scaling it further to larger lattices in higher-dimensional theories could become computationally challenging, as discussed in Appendix~\ref{app:solve_size}. 
\section{Conclusion }
\label{sec:conclusion}

We introduced a multilevel plaquette-space sampler (PSS) that satisfies the lattice
Bianchi identities exactly while preserving a tractable density. The construction
combines two complementary advantages: plaquette variables make the action local
and remove the local gauge redundancy, while the coarse-to-fine hierarchy keeps the remaining
constraint structure local---avoiding the scaling issues of solving the Bianchi-constrained problem---and carries long-range information across scales.

The numerical results show that this combination becomes increasingly beneficial
in the challenging regimes considered here: as we move toward the continuum limit in
two-dimensional $U(1)$, toward weak coupling in four-dimensional $U(1)$, and as
the correlation length increases in two-dimensional $SU(2)$. These results motivate extending
the method to larger four-dimensional lattices and non-Abelian gauge theories,
where exact local constraint handling and multilevel scale separation may offer
further advantages. More broadly, the same strategy may be useful for other
high-dimensional generative-sampling problems with structured constraints.

\paragraph{Limitations}
\label{app:limitations}
Our current experiments are restricted to relatively small lattices and to a single training run per model, and the extension to large four-dimensional non-Abelian systems remains to be demonstrated. In addition, although the Bianchi constraints are satisfied exactly by construction, sampling quality can still deteriorate in regimes with rare structures that are difficult for the continuous flow to represent. In particular, rare defect configurations, such as monopoles at weak coupling in four dimensions, may be under-sampled. Addressing these effects will require larger-scale studies and, potentially, more expressive models that explicitly capture such discrete or topological degrees of freedom.

\subsection*{AI use statement}

Generative AI tools were used to assist in language editing, literature discovery,
feedback on research methodology and experimental design, and aspects of code development,
organization, and finalization. They were not used to generate simulation data or report
numerical results. All simulations, training, and evaluation runs, and quantitative results
reported in this work, were produced by the authors using the procedures described in the paper.
All AI-assisted text, references, and code were reviewed and verified by the authors.
 Scientific ideas, methodological choices, implementation decisions, interpretation of
results, and final analysis remain the responsibility of the authors. The authors assume full
responsibility for the content of the article and the accompanying research artifacts.

 \subsubsection*{Acknowledgments}
% % Use unnumbered third level headings for the acknowledgments. All acknowledgments, including those to funding agencies, go at the end of the paper.
This work was supported by the German Federal Ministry of Education and Research (BMBF) under grant BIFOLD25B and by the European Union’s Horizon Europe Marie Sk\l{}odowska-Curie Doctoral Networks programme through the AQTIVATE project (grant agreement No.~101072344). 
This work is also supported with funds from the Ministry of Science, Research, and Culture of the State of Brandenburg within the Centre for Quantum Technologies and Applications (CQTA).
This project received funding from the European Research Council (ERC) via the project ”LEEX” grant agreement 101170304 funded by the European Union. Views and opinions expressed are however those of the author(s) only and do not necessarily reflect those of the European Union or the European Research Council Executive Agency (ERCEA). Neither the European Union nor the ERCEA can be held responsible for them.
We thank Stefan K\"uhn and Timo Eichhorn for helpful discussions and valuable comments on the manuscript.

\IfFileExists{refs.bib}{%
  \bibliographystyle{iclr2027_conference}\bibliography{refs}}{}

\appendix
% cleveref does not reliably rename section-level counters after \appendix: the version on
% Overleaf prints "Section E" where this one prints "Appendix E". These three lines force the
% appendix name on every cleveref version, and are a no-op when the fallback shim is active.
\providecommand{\crefalias}[2]{}
\crefalias{section}{appendix}
\crefalias{subsection}{appendix}
\crefalias{subsubsection}{appendix}
% Appendix order = the order the main text first reaches each topic:
% background (A-C), construction (D-H), evaluation (I-L), per-system results (M-O).
% File names carry their position; a file is renamed when it moves.

\section{Extended related work}
\label{app:related}

This appendix expands \cref{sec:bg.gen}: how flow samplers degrade with volume, the hierarchical
samplers our construction follows, the precise relation to the closest one, and the literature on
topological freezing that bounds what the method currently achieves.

\paragraph{Generative sampling for lattice field theory.}
Modern generative modelling includes several major families, including variational autoencoders
(VAEs) \citep{KingmaWelling2014VAE}, generative adversarial networks (GANs)
\citep{Goodfellow2014GAN}, normalizing flows \citep{RezendeMohamed2015NF,Dinh2017RealNVP}, and
diffusion models \citep{SohlDickstein2015Diffusion,Ho2020DDPM}. These have been adapted to
statistical physics and lattice field theory, where a generative model can represent a
high-dimensional configuration distribution and serve as an alternative sampling algorithm.
Flow-based sampling has received particular attention because its tractable density allows
generated configurations to be corrected by importance reweighting or Metropolis acceptance.
Normalizing flows for scalar lattice field theory were introduced by \citet{Albergo2019} and
extended to gauge theories \citep{Kanwar_2020,Boyda2021}, to stochastic flow constructions
\citep{Caselle:2022acb,Caselle:2022esc,Caselle:2024ent}, and to conditional sampling between
theory parameters \citep{Singha2023,Faraz:2023xdi,Singha:2023xxq,kanaujia2024advnf,sharma2026probing}. Normalizing
flows have also been used to reduce the variance of lattice QCD observables
\citep{abbott2026learning,Abbott:2026ylv} and to compute QED corrections in lattice field theory
\citep{hermansson2026normalizing}. Other paradigms have also been explored, several of them against
critical slowing down: GAN-based samplers reduce autocorrelation times in scalar theory
\citep{Pawlowski_2020} and address critical slowing in the Gross--Neveu model
\citep{10.21468/SciPostPhysCore.5.4.052}, variational autoencoders identify phase structure
\citep{Wetzel2017VAE}, and autoregressive and hierarchical networks give normalized samplers for
spin systems \citep{PhysRevLett.122.080602,Singha2025RiGCS,bialas2026sampling,bialas2026variational} and GMM based approaches~\cite{Faraz:2023xdi,sharma2026probing}.
Diffusion and related stochastic-path models have been applied to lattice field theory
\citep{Wang:2023exq,tan2026diffusion,chen2026stochastic,tomiya2026lattice} and extended to gauge theories
\citep{Zhu:2025pmw,aarts2026generalizableequivariantdiffusionmodels,Alharazin:2026lcb,Kanwar:2025wuc,Alharazin:2026lcb}; further
directions include physics-informed kernels
\citep{ihssen2025generativesamplingphysicsinformedkernels,ihssen2026solving} and Monte Carlo
estimates of flow fields \citep{Kanwar:2026ytz}.

\paragraph{Scaling with volume.}
Flow quality is known to deteriorate with volume \citep{Abbott:2022zsh}, and reverse-KL training
can collapse onto only part of a multimodal target \citep{Hackett:2021idh,Nicoli:2023qsl}. For
local theories the extensivity of $\KL$, together with sample-complexity bounds for importance
sampling \citep{chatterjee2018sample}, makes this particularly severe at large volume. Exact
reweighting remains possible whenever the proposal has sufficient support
\citep{nicoli_pre,nicoli_prl}, so the practical question is not correctness but the cost of an
accurate proposal. We address this through the representation and the constraint structure rather
than by introducing a new density model: the constraint layer of \cref{app:constraint} can be
combined with different tractable generative models, and we use spline couplings
\citep{Durkan2019} as a convenient choice for circular variables.

\paragraph{Hierarchical and multiscale samplers.}
Coarse-to-fine sampling has a long history in lattice field theory: multigrid Monte Carlo
\citep{Goodman:1986pv,Goodman:1989jw,GrabensteinPinn1994}, renormalization-group-guided
updates \citep{Schmidt:1983rng,FAAS1986571}, multiscale thermalization \citep{Endres:2015yca},
decimation maps \citep{Matsumoto:2023T0}, and multilevel variance reduction
\citep{PhysRevD.93.094507,Giles_2015}. Learned versions followed \citep{PhysRevLett.121.260601,PhysRevResearch.3.023230,PhysRevLett.129.136402,PhysRevLett.128.081603,Singha2025RiGCS,singha2026scalable,cotler2023renormalizingdiffusionmodels,masuki2026renormalization}. In lattice gauge theory,
renormalization-group schemes have been combined with normalizing flows for $U(1)$
\citep{Finkenrath:2024pR} and, in the construction we use as a baseline, for $SU(3)$
\citep{Abbott2023}. For discrete spin systems, hierarchical autoregressive networks
\citep{BIALAS2022108502} sample regions of a configuration in parallel with a shared
autoregressive network, following a recursive domain decomposition
\citep{PhysRevD.93.094507}, and have since been carried to three dimensions
\citep{Bialas:2025hxu}. Closest in spirit is RiGCS \citep{Singha2025RiGCS}, which builds a
configuration recursively from coarse to fine and conditions each level on the one below, using
conditional flows of the kind introduced by \citet{Singha2023} and carried to $U(1)$ gauge theory
by \citet{Singha:2023xxq}; related constructions extend the idea to continuous lattice field
theories as multilevel conditional flows \citep{singha2026scalable} and as super-resolution of
configurations \citep{bauer2025super,Efthymiou2018}. Our refinement levels are conditional in
exactly this sense. What is new is that the refined variables are \emph{constrained}, so every
level must additionally solve the Bianchi identity exactly, and the contribution is that this
solve can be kept local.

\paragraph{The closest precedent, and generative modelling under constraints.}
The closest precedent in lattice gauge theory is the multiscale flow of \citet{Abbott2023}, which
also refines one lattice direction at a time, but in \emph{link space}. There the hierarchy serves as a prior for a separate fine-lattice flow of
$\AbbLayers$ gauge-equivariant coupling layers on every link, which the authors state is required
because the hierarchy captures only a subset of the correlations; that flow is unconstrained by the
hierarchy and carries the final density. Here no such flow follows. Where a single-direction
refinement leaves some plaquette orientations last generated at a coarser resolution
(four-dimensional $U(1)$ at $\beta=0.5,0.8,2,3$), we add one further pass of the same refinement
per direction at the finest lattice: the field is re-parametrised as its blocking along $\mu$ and
the free variables of that refinement, only the free variables are transformed, conditioned on the
blocked field, and the constraint solve is unchanged (\cref{tab:arch}). Each pass fixes the
blocking along its own direction, and the density remains the product of exact conditionals
\eqref{eq:chain}. More
importantly, working in plaquette space introduces exact Bianchi constraints. The central issue is
therefore not anisotropic refinement itself, but how that refinement changes the structure of the
constraint solve: a global single-level parametrisation becomes nonlocal and increasingly ill
conditioned, while the multilevel construction keeps the solve local and its size and conditioning
independent of lattice extent (\cref{app:solve_size}). This notion of \emph{solve locality} is the
main distinction from existing hierarchical gauge-field samplers. It also places the work within
generative modelling under exact constraints. Gauge redundancy is commonly handled through
equivariance, in flows \citep{Kanwar_2020,Boyda2021} and in gauge-equivariant architectures
for lattice gauge theory more generally, such as lattice gauge equivariant convolutional networks
\citep{Favoni2022} and gauge-covariant networks for quarks and gluons \citep{Tomiya2021}; moving to
gauge-invariant variables removes that
redundancy but introduces equality constraints between the variables. Soft penalties do not
enforce these exactly, while projection generally changes the induced density and complicates
exact likelihood evaluation. We instead parametrise the constraint manifold directly, and its
two-dimensional realization (\cref{app:block2d}) shows that the mechanism is not specific to any
one dimension.

\paragraph{Topology and defects.}
The severe slowing down of topological modes toward the continuum limit
\citep{DelDebbio2004,Schaefer2011,Luscher2011} has motivated dedicated algorithms, including open
boundary conditions \citep{Luscher2011}, metadynamics \citep{Laio2016,eichhorn2021comparison,Eichhorn:2022NH}, tempering
\citep{Hasenbusch2017,Bonanno2021}, and winding updates \citep{Albandea:2021lvl}. Generative
approaches are more recent
\citep{Bonanno2026Flow,bonanno2026scalable,Rothkopf2026apb,Singha:2026aac} and show that global
learned proposals can substantially improve transitions between weakly connected sectors, whether
by resolving the sectors explicitly \citep{Singha:2026aac} or by letting the stochastic dynamics
jump between them \citep{Rothkopf2026apb}. In four-dimensional compact $U(1)$ the relevant defects
are magnetic monopoles together with the fluxes through the six two-tori. Our parametrisation
spans the full link configuration space and represents monopoles exactly through integer lifts of
the determined plaquettes (\cref{app:monopoles}); at weak coupling nonzero global flux sectors are
suppressed by $e^{-\fluxCost}$, so the trivial-sector heatbath reference reproduces the full
theory to exponential accuracy (\cref{app:reference}). Sampling such rare defect
configurations at the correct rate is a separate problem, beyond the scope of this work; the natural
route is to treat the discrete defect variables explicitly, in the same spirit that winding updates
introduce dedicated topology-changing moves for Markov chains. In two dimensions, where the sectors can be sampled directly, the multilevel construction
keeps the topological autocorrelation time near unity up to $L=32$, at a coupling where the
link-space baselines have already lost a resolved $\ESS$ (\cref{app:twod}).

\section{Degrees of freedom on \texorpdfstring{$T^d$}{T\^{}d}}
\label{app:dof}

\subsection{Tree gauge, holonomies and the link--plaquette pairing}
\label{app:dof.tree}

\paragraph{Tree gauge.}
Because of local gauge symmetry, the link variables contain redundant degrees of freedom. A
\emph{tree gauge} removes this redundancy by choosing a maximal spanning tree $\mathcal T$ of
$V-1$ links connecting all $V$ sites without forming a loop. Starting from a root, the gauge
transformation \eqref{eq:gauge} can be used successively to set every tree link to $\mathbf 1$.
This fixes the gauge everywhere up to the transformation at the root and leaves
$dV-(V-1)=(d-1)V+1$ independent links. In \cref{fig:dof} we use a comb-shaped tree: the bottom
row without its wrap-around link together with all vertical links outside the top row,
$(L-1)+L(L-1)=V-1$.

\begin{figure}[tbh]
\centering
% Degrees of freedom on a periodic 3x3 lattice in the COMB tree gauge of the 2D U(1) reconstruction
% (multilevel_u1_plq/plaquette_space/reconstruct.py): tree = bottom-row x-links except the wrap-around
% + all y-links except the top row; holonomies on the bottom wrap-around x-link (P_x) and the top y-link
% of column 0 (P_y); each remaining link is fixed by the plaquette carrying the same number, in that order;
% the corner plaquette is the global closure. Labels only.
\begin{tikzpicture}[font=\scriptsize, >=latex]
\definecolor{cfree}{HTML}{2A78D6}
\definecolor{chol}{HTML}{EB6834}
\definecolor{cdet}{HTML}{C0392B}
\def\s{1.3}
\tikzset{site/.style={circle, fill=black, inner sep=1.1pt},
         img/.style={circle, draw=black, fill=white, inner sep=1.0pt},
         tree/.style={gray!75, very thick, dashed},
         hol/.style={chol, very thick, ->},
         free/.style={cfree, very thick, ->},
         num/.style={circle, fill=white, draw=cfree, text=cfree!70!black, inner sep=1pt, font=\tiny},
         pnum/.style={font=\footnotesize}}
% plaquettes: (i,j) closed by the link with the same number; corner (2,2) = closure
\foreach \i/\j/\n in {0/0/1,1/0/2,2/0/3,0/1/4,1/1/5,2/1/6,0/2/7,1/2/8}{
  \fill[cfree!10] ({\i*\s},{\j*\s}) rectangle ({(\i+1)*\s},{(\j+1)*\s});
  \node[pnum] at ({(\i+0.5)*\s},{(\j+0.5)*\s}) {\n};}
\fill[cdet!12] ({2*\s},{2*\s}) rectangle ({3*\s},{3*\s});
\pattern[pattern=north east lines, pattern color=cdet!60] ({2*\s},{2*\s}) rectangle ({3*\s},{3*\s});
\node[pnum, text=cdet!85!black, fill=white, inner sep=1pt] at ({2.5*\s},{2.5*\s}) {closure};
% periodic images (faint)
\foreach \i in {0,1,2}{\draw[gray!35] ({\i*\s},{3*\s}) -- ({(\i+1)*\s},{3*\s});}
\foreach \j in {0,1,2}{\draw[gray!35] ({3*\s},{\j*\s}) -- ({3*\s},{(\j+1)*\s});}
% tree: bottom-row x-links i=0,1 ; all y-links with j=0,1
\foreach \i in {0,1}{\draw[tree] ({\i*\s},0) -- ({(\i+1)*\s},0);}
\foreach \i in {0,1,2}{\foreach \j in {0,1}{\draw[tree] ({\i*\s},{\j*\s}) -- ({\i*\s},{(\j+1)*\s});}}
% holonomies: bottom wrap-around x-link (P_x), top y-link of column 0 (P_y)
\draw[hol] ({2*\s},0) -- ({3*\s-0.08},0) node[midway, below=1pt, black] {$\Omega_0$};
\draw[hol] (0,{2*\s}) -- (0,{3*\s-0.08}) node[midway, left=1pt, black] {$\Omega_1$};
% free links: x-links of row 1 (1,2,3), row 2 (4,5,6); top y-links of columns 1,2 (7,8)
\foreach \i/\n in {0/1,1/2,2/3}{\draw[free] ({\i*\s},{1*\s}) -- ({(\i+1)*\s-0.08},{1*\s}); \node[num] at ({(\i+0.5)*\s},{1*\s}) {\n};}
\foreach \i/\n in {0/4,1/5,2/6}{\draw[free] ({\i*\s},{2*\s}) -- ({(\i+1)*\s-0.08},{2*\s}); \node[num] at ({(\i+0.5)*\s},{2*\s}) {\n};}
\foreach \i/\n in {1/7,2/8}{\draw[free] ({\i*\s},{2*\s}) -- ({\i*\s},{3*\s-0.08}); \node[num] at ({\i*\s},{2.5*\s}) {\n};}
% sites
\foreach \i in {0,1,2}{\foreach \j in {0,1,2}{\node[site] at ({\i*\s},{\j*\s}) {};}}
\foreach \k in {0,1,2,3}{\node[img] at ({3*\s},{\k*\s}) {}; \node[img] at ({\k*\s},{3*\s}) {};}
\node[below left=0pt] at (0,0) {$x=(0,0)$};
\draw[->, gray] ({3.7*\s},0) -- ({3.7*\s+0.5},0) node[right, black] {$\hat 0$};
\draw[->, gray] ({3.7*\s},0) -- ({3.7*\s},0.5) node[above, black] {$\hat 1$};
% legend
\begin{scope}[shift={({3.7*\s},{2.9*\s})}]
  \draw[tree] (0,0) -- (0.5,0); \node[anchor=west] at (0.6,0) {tree, $=\mathbf 1$ \ ($V{-}1=8$)};
  \draw[hol] (0,-0.42) -- (0.5,-0.42); \node[anchor=west] at (0.6,-0.42) {holonomies \ ($d=2$)};
  \draw[free] (0,-0.84) -- (0.5,-0.84); \node[anchor=west] at (0.6,-0.84) {free link $\leftrightarrow$ free plaquette \ ($8$)};
  \fill[cdet!12] (0,-1.37) rectangle (0.5,-1.13); \pattern[pattern=north east lines, pattern color=cdet!60] (0,-1.37) rectangle (0.5,-1.13);
  \node[anchor=west] at (0.6,-1.25) {determined \ ($1$)};
\end{scope}
\end{tikzpicture}
\caption{Degrees of freedom on a periodic $3\times3$ lattice in $d=2$ ($V=9$), shown in comb
gauge. Dashed grey links form the maximal tree and are fixed to $\mathbf 1$. Orange links are the
holonomies $\Omega_0,\Omega_1$. The eight numbered blue links pair one-to-one with the eight free
plaquettes, while the hatched corner plaquette is fixed by the global closure
\eqref{eq:global_closure_2d}. Hollow sites denote periodic images.}
\label{fig:dof}
\end{figure}

\paragraph{Holonomies.}
Among the remaining links, exactly $d$ close the non-contractible cycles of the torus, one in each
direction. Since every other link on such a cycle belongs to the tree and equals $\mathbf 1$, the
ordered product around the cycle reduces to the closing link itself: this is the Polyakov loop
$\Omega_\mu$. The residual gauge transformation at the root acts by conjugation,
$\Omega_\mu\mapsto g\,\Omega_\mu\,g^\dagger$, so its conjugacy class is gauge invariant.
Unlike plaquettes and their products, these loops are non-contractible and therefore cannot be
reconstructed from plaquettes alone.

\paragraph{Plaquettes and links.}
A plaquette sampler must distinguish the plaquettes that can be chosen freely from those fixed by
the lattice constraints. Of the $\binom d2 V$ plaquettes, the Bianchi identities for $d\ge3$ and
the global closures determine a subset $D$, leaving a free set $F$. The same counting appears from
the link side: after removing the $d$ holonomies from the $(d-1)V+1$ gauge-fixed links,
$(d-1)(V-1)$ degrees of freedom remain. Thus
\begin{equation}
  \underbrace{(d-1)V+1}_{\text{gauge-fixed links}}
  \;=\;\underbrace{d}_{\text{holonomies}}\;+\;\underbrace{(d-1)(V-1)}_{\text{free plaquettes}},
  \qquad
  \underbrace{\tbinom d2 V}_{\text{plaquettes}}
  \;=\;\underbrace{(d-1)(V-1)}_{\text{free}}\;+\;\underbrace{|D|}_{\text{determined}} .
  \label{eq:dof_count}
\end{equation}
This agreement is explicit in tree gauge: the $(d-1)(V-1)$ non-tree links pair one-to-one with the
free plaquettes. Adding them back sequentially, each closes a plaquette whose other links are already
known, so \eqref{eq:plq} determines the link from the plaquette, and vice versa. The remaining
plaquettes are then fixed by the Bianchi identities and global closures. Free plaquettes together
with the holonomies therefore provide coordinates for gauge-fixed link fields. For $d=2$, $V=9$,
$18=8+2+8$ links and $9=8+1$ plaquettes; for $d=4$, $V=256$,
$1024=255+4+765$ links and $1536=765+771$ plaquettes, giving a determined fraction of $50.2\%$.

The multilevel sampler does not choose its tree: the links each refinement sets to $\mathbf 1$
form one, and their union over levels is again maximal (\cref{app:refine.chain}). No level
traverses it, each closing its constraint in plaquette space. Should links be wanted --- for an
observable that winds around the lattice, and so is not a product of plaquettes --- the pairing
above recovers them for any group in any dimension, by running it forwards: choose a maximal tree,
draw the holonomies, and take each remaining link from the plaquette it closes.

\subsection{The same count without a gauge choice}
\label{app:dof.cochain}

The count also follows without a gauge choice, from the cochain complex of the periodic
lattice~\citep{Batrouni1982,BatrouniHalpern1984}. Let $C^k$ be the $k$-cochains --- sites, links
and plaquettes for $k=0,1,2$ --- and $\mathrm d:C^k\to C^{k+1}$ the coboundary, $\mathrm d^2=0$.
With $V$ sites in $d$ dimensions,
\begin{equation}
\dim C^k=\tbinom dk V ,
\qquad
b_k\equiv\dim H^k=\tbinom dk ,
\end{equation}
the second because $T^d$ carries $\binom dk$ independent $k$-cycles. Write $Z^k=\ker\mathrm d$ and
$B^k=\operatorname{im}\mathrm d$. The lattice is connected, so the constant $0$-cochains are the
one-dimensional kernel of $\mathrm d:C^0\to C^1$, giving $\dim B^1=V-1$ and, with $b_1=d$,
$\dim Z^1=(V-1)+d$. Rank--nullity for $\mathrm d:C^1\to C^2$ then gives
\begin{equation}
\dim B^2=dV-\dim Z^1=(d-1)(V-1),
\label{eq:dimB2}
\end{equation}
the number of independent continuous degrees of freedom in a link-generated plaquette field.

Local and global constraints enter separately. With $b_2=\binom d2$, so
$\dim Z^2=(d-1)(V-1)+\binom d2$, the local Bianchi identities restrict an arbitrary plaquette field
in $C^2$ to the closed subspace $Z^2$ and therefore number
\begin{equation}
M-\dim Z^2
=\tbinom d2 V-(d-1)(V-1)-\tbinom d2
=\Big[\tbinom d2-(d-1)\Big](V-1).
\label{eq:nlocal}
\end{equation}
The remaining $\binom d2$ directions of $Z^2/B^2\simeq H^2(T^d)$ are the fluxes through the
non-contractible two-tori, and the global closure relations select the link-generated fields $B^2$.
For compact $U(1)$ the integer flux sectors are distinct branches of those conditions; being
discrete labels, they leave \eqref{eq:dimB2} unchanged. Hence $|F|=\dim B^2=(d-1)(V-1)$,
$|D|=\binom d2V-(d-1)(V-1)$, and
\begin{equation}
f_d
=\frac{|D|}{\binom d2 V}
=1-\frac{(d-1)(V-1)}{\binom d2\,V}
\xrightarrow[V\to\infty]{}
1-\frac2d ,
\label{eq:frac_app}
\end{equation}
so $f_2=1/V\to0$ and $f_4=\tfrac12+\tfrac1{2V}\to\tfrac12$, as in \cref{sec:m.vars}.

On the link side, modding out the $V-1$ non-trivial gauge transformations leaves
$dV-(V-1)=\dim B^2+b_1$ continuous degrees of freedom: $\dim B^2$ carried by the plaquettes and
$d=b_1$ by the holonomies, which for $U(1)$ are flat directions of the Wilson action. Plaquettes
and holonomies therefore suffice to reconstruct a gauge-fixed link field, as \cref{app:dof.tree}
does explicitly.

The argument is linear, so it applies to $U(1)$ directly and otherwise componentwise in the Lie
algebra: for non-Abelian $G$ every continuous count gains a factor $\dim G$, while the Bianchi and
closure relations become non-linear and involve transport and the holonomies without changing the
local dimension count. On the finest four-dimensional lattice used here, $L=\Lfine$,
\begin{equation}
M=\nplq,
\qquad
\dim B^2=3(V-1)=3\cdot255=\dimBtwo,
\end{equation}
the number of free plaquette variables in every construction in this work.

\section{Parallel transport and local gauge frames}
\label{app:parallel_transport}

For a non-Abelian gauge theory, matrices based at different lattice sites are expressed in different
local gauge frames. Associate to each site $x$ a colour space $V_x\simeq\mathbb C^{N_c}$. A local
gauge transformation acts as an independent change of basis,
$\psi(x)\mapsto g(x)\psi(x)$, and a plaquette or Wilson loop based at $x$ transforms as
\begin{equation}
W(x)\mapsto g(x)W(x)g(x)^\dagger.
\label{eq:local_matrix_transform}
\end{equation}
A matrix $W(y)$ based at $y=x+\hmu$ transforms instead with $g(y)$. Hence the direct product
of matrices at different sites is not gauge covariant,
\begin{equation}
W(x)W(y)\mapsto
g(x)W(x)g(x)^\dagger g(y)W(y)g(y)^\dagger,
\end{equation}
because $g(x)^\dagger g(y)\neq\mathbf 1$ in general.

The link $U_\mu(x)$ identifies the neighbouring frames, transforming as \eqref{eq:gauge},
so a matrix based at $x+\hmu$ can be transported to the frame at $x$ as
\begin{equation}
W^{(x)}(x+\hmu)
\equiv
U_\mu(x)W(x+\hmu)U_\mu(x)^\dagger.
\label{eq:transport}
\end{equation}
Under a gauge transformation,
\begin{equation}
W^{(x)}(x+\hmu)
\mapsto
g(x)W^{(x)}(x+\hmu)g(x)^\dagger,
\label{eq:transported_transform}
\end{equation}
because the transformations at $x+\hmu$ cancel. Thus
$W(x)W^{(x)}(x+\hmu)$ is a well-defined covariant product in the common frame at $x$.

Therefore, before multiplying plaquettes with different base points---in particular in a
non-Abelian Bianchi identity---all factors must first be parallel transported to a common base
point. For $U(1)$, conjugation is trivial, $UWU^\dagger=W$, so this transport is invisible;
for $SU(2)$ and $SU(3)$ it must be treated explicitly.
\section{Lattice Bianchi identity}
\label{app:bianchi}

Consider an elementary cube based at $x$ and spanning the directions $\mu$, $\nu$ and $\rho$.
Its faces are the plaquettes \eqref{eq:plq}.
For a non-Abelian group, plaquettes based at different vertices cannot be multiplied directly,
so the three plaquettes on the far faces are transported back to the common base point $x$,
\begin{align}
\widetilde P_{\nu\rho}^{(\mu)}(x)
&=
U_\mu(x)\,
P_{\nu\rho}(x+\hmu)\,
U_\mu(x)^\dagger,
\nonumber\\
\widetilde P_{\mu\rho}^{(\nu)}(x)
&=
U_\nu(x)\,
P_{\mu\rho}(x+\hnu)\,
U_\nu(x)^\dagger,
\nonumber\\
\widetilde P_{\mu\nu}^{(\rho)}(x)
&=
U_\rho(x)\,
P_{\mu\nu}(x+\hrho)\,
U_\rho(x)^\dagger.
\label{eq:app_transported_plaq}
\end{align}
With a consistent orientation of the six faces, the ordered product around the boundary of the
cube is
\begin{equation}
\begin{split}
\mathcal B_{\mu\nu\rho}(x)
={}&
P_{\mu\rho}(x)\,
\widetilde P_{\mu\nu}^{(\rho)}(x)\,
P_{\nu\rho}(x)^\dagger
\\
&\times
\left[\widetilde P_{\mu\rho}^{(\nu)}(x)\right]^\dagger\,
P_{\mu\nu}(x)^\dagger\,
\widetilde P_{\nu\rho}^{(\mu)}(x).
\end{split}
\label{eq:app_bianchi_cube}
\end{equation}
The cancellation is visible in the first two factors already,
\begin{align}
P_{\mu\rho}(x)\,
\widetilde P_{\mu\nu}^{(\rho)}(x)
={}&
U_\mu(x)\,
U_\rho(x+\hmu)\,
U_\mu(x+\hrho)^\dagger\,
U_\rho(x)^\dagger
\nonumber\\
&\times
U_\rho(x)\,
U_\mu(x+\hrho)\,
P_{\mu\nu}(x+\hrho)\,
U_\mu(x+\hrho)^\dagger\,
U_\rho(x)^\dagger,
\end{align}
where $U_\rho(x)^\dagger U_\rho(x)$ and $U_\mu(x+\hrho)^\dagger U_\mu(x+\hrho)$ cancel. The
remaining faces are arranged in the same way by the transports and the surface ordering, so
that repeated use of $U^\dagger U=UU^\dagger=\mathbf 1$ removes every link and gives
\begin{equation}
\mathcal B_{\mu\nu\rho}(x)=\mathbf 1 .
\label{eq:app_bianchi_identity}
\end{equation}
No commutativity is used: transport and surface ordering arrange the factors so that each link
cancels against its inverse, which is why the identity holds for $SU(2)$ and $SU(3)$ exactly as
for $U(1)$.

\subsection{Which closed surfaces exist}
\label{app:bianchi.surfaces}

The identity \eqref{eq:bianchi_surface} holds for any closed surface $\Sigma$ assembled from
plaquettes: transporting its faces to a common base point and multiplying them in surface order
makes every link of $\Sigma$ occur once with each orientation, and the product collapses to the
identity. Which surfaces a periodic lattice carries therefore decides the form of the
constraint, and there are two kinds.

For $d\ge3$ the boundary of every elementary cube $c$ is a contractible closed surface, and each
one gives a \emph{local} relation: \eqref{eq:bianchi_surface} at $\Sigma=\partial c$,
\begin{equation}
  \textstyle \overrightarrow{\prod}_{p\subset\partial c}
  T_{p\to x_c}\!\left[P_p^{\epsilon_{cp}}\right]=\mathbf 1 ,
  \label{eq:bianchi_general}
\end{equation}
with $x_c$ any corner of the cube as the base point $x_0$ and
$\epsilon_{cp}\equiv\epsilon_{\partial c\,p}$ its face orientations. It is
\eqref{eq:app_bianchi_identity} written for a general cube.

In every dimension each coordinate plane $\mu\nu$ in addition carries the two-torus $\Lambda$ of
all its plaquettes, a closed but \emph{non-contractible} surface, and there the cancellation is
not complete. Cutting the torus along its two cycles opens it into a square whose boundary is
traversed as $a\,b\,a^{-1}b^{-1}$, with $a$ and $b$ the cycles in the $\mu$ and $\nu$ directions.
The plaquettes of $\Lambda$ tile that square, so transporting them to $x_0$ and multiplying in
surface order cancels every interior link as before, while the four sides are left over and
contribute the holonomies wrapping the plane in that order,
\begin{equation}
  \textstyle \overrightarrow{\prod}_{p\in\Lambda}
  T_{p\to x_0}[P_p]=\Omega_\mu\,\Omega_\nu\,\Omega_\mu^\dagger\,\Omega_\nu^\dagger ,
\end{equation}
one \emph{global} relation per plane, $\binom d2$ in all. This is the closure stated as
\eqref{eq:global_closure_2d} in the main text; it collapses to the identity only when the group
is Abelian, so that the two holonomies commute.

\subsection{Explicit forms for the three instantiations}
\label{app:bianchi.cases}

\paragraph{$U(1)$ in two dimensions.}
The links are the phases $U_\mu(x)=e^{i\theta_\mu(x)}$, so a plaquette is itself a phase,
$P_{\mu\nu}(x)=e^{i\varphi_{\mu\nu}(x)}$, whose angle is the oriented sum of its four link
angles,
\begin{equation}
  \varphi_{\mu\nu}(x)=\theta_\mu(x)+\theta_\nu(x+\hmu)-\theta_\mu(x+\hnu)-\theta_\nu(x).
  \label{eq:plq_angle}
\end{equation}
There are no cubes, so \eqref{eq:global_closure_2d} is the only relation; the group is Abelian,
the holonomies commute and the right-hand side is unity. Because the angles are defined modulo
$2\pi$, what remains is a single integer constraint on the whole lattice,
\begin{equation}
  \prod_{p\in\Lambda}P_p=1,
  \qquad
  \sum_{p\in\Lambda}\varphi_p=2\pi Q,
  \qquad Q\in\mathbb Z ,
  \label{eq:u1_global_closure}
\end{equation}
with the angles on their principal branch; the integer $Q$ is the topological charge.

\paragraph{$U(1)$ in four dimensions.}
The parametrisation is unchanged, but every elementary cube now contributes
\eqref{eq:bianchi_general}, which in the same way becomes an integer relation,
\begin{equation}
  \sum_{p\subset\partial c}\epsilon_{cp}\,\varphi_p=2\pi m_c,
  \qquad m_c\in\mathbb Z ,
  \label{eq:bianchi_u1}
\end{equation}
where $m_c$ is a monopole current, the four-dimensional counterpart of a magnetic charge. There
are in addition $\binom42=6$ global relations of the form \eqref{eq:u1_global_closure}, one per
coordinate plane. The difference from two dimensions is one of kind: a single global condition
there, an \emph{extensive} set of local ones here, one per cube. The integers $m_c$ and $Q$ are
not additional variables; they are read off from the angles, and they are what makes the
constraint discrete rather than merely geometric.

\paragraph{$SU(2)$ in two dimensions.}
A plaquette is a group element, written in the same form as the links,
$P_{\mu\nu}=\cos\omega_{\mu\nu}+i\sin\omega_{\mu\nu}\,\bm n\cdot\bm\sigma$, with $\bm n$ a unit
vector in $\mathbb R^3$ and the angle $\omega\in[0,\pi]$ fixed by
$\cos\omega=\tfrac12\operatorname{tr}P$. As in two-dimensional $U(1)$ there are no cubes and
\eqref{eq:global_closure_2d} is the only relation, but the group is non-Abelian: the parallel
transports and the surface ordering cannot be dropped, and the commutator on the right-hand side
does not reduce to unity, so the two holonomies enter the constraint itself rather than
cancelling out of it.

\section{Single-level plaquette sampling}
\label{app:singlelevel}

\subsection{Plaquettes as sampling variables}
\label{sec:m.vars}
The target distribution \eqref{eq:action} is constrained by the Bianchi identities and therefore
lies on a lower-dimensional manifold $\mathcal M$ of the full plaquette space. Since
$\mu(\mathcal M)=0$ in the ambient measure, a full-dimensional flow with density $q_\theta$ gives
$\Pr_{q_\theta}(P\in\mathcal M)=\int\mathbf 1_{\mathcal M}\,q_\theta\,d\mu=0$.
Penalty terms only suppress violations, while projection onto $\mathcal M$ generally destroys the
tractable density needed for likelihood evaluation and importance weighting. We therefore enforce
the constraints by construction, splitting the plaquettes into a free set $F$ and a determined set $D$.
Removing from the $\binom d2 V$ plaquettes the
$\big[\binom d2-(d-1)\big](V-1)$ independent cube identities
\eqref{eq:bianchi_general} and the $\binom d2$ global closures
\eqref{eq:global_closure_2d}, one per coordinate plane (\cref{app:dof}), leaves
\begin{equation}
|F|=(d-1)(V-1),
\qquad
f_d
=\frac{|D|}{\binom d2\,V}
=1-\frac{(d-1)(V-1)}{\binom d2\,V}
\xrightarrow[V\to\infty]{}
1-\frac{2}{d},
\label{eq:frac}
\end{equation}
so $f_2\to0$ while $f_4\to\tfrac12$. The free plaquettes together with the $d$ holonomies form
coordinates for gauge-fixed link fields (\cref{app:dof}, \cref{fig:dof}).

This count fixes how many plaquettes are determined, not how non-local the solve is. For $j\in D$,
let $k_j$ denote the number of free plaquettes on which it depends. Independent errors of size
$\epsilon$ then propagate as $O(\sqrt{k_j}\,\epsilon)$ in $j$, so Eq.~\eqref{eq:frac} alone gives no
information about the solve support $k_j$.

\subsection{The exact constraint layer}
\label{sec:m.build}
\label{app:constraint}

A flow generates the free plaquettes and a deterministic layer computes the determined ones,
provided the plaquettes are ordered so that every determined one is fixed by a relation among
plaquettes placed before it. The relations are the cube identities \eqref{eq:bianchi_general} in
$d\ge3$ and the single global closure \eqref{eq:global_closure_2d} in $d=2$. The layer holds to
machine precision on every sample, trained or not, and everything it needs is computed once,
offline: the orderings are properties of the lattice, not of the model or of $\beta$. The
multilevel construction reuses the same solve level by level, with the blocking relations in place
of the flux rows (\cref{app:blocking}).

\subsubsection{\texorpdfstring{$U(1)$}{U(1)}: an integer linear solve}
\label{app:constraint.u1}

\paragraph{The system.}
Let $\varphi\in\R^{M}$, $M=\binom d2V$, be the vector of all plaquette angles, unwrapped (in
$\R$ rather than $(-\pi,\pi]$). The relations satisfied by the plaquettes of a link field are
linear in $\varphi$ with integer coefficients, of two kinds. \emph{Bianchi rows}, one per
elementary cube $c$, the oriented sum of its six faces $\sum_{p\subset\partial c}\epsilon_{cp}
\varphi_p=0$, which is \eqref{eq:bianchi_u1} with $m_c=0$: $\binom d3V$ rows with entries
$0,\pm1$, of which $\big[\binom d2-(d-1)\big](V-1)$ are independent (\cref{app:dof.cochain}).
\emph{Flux rows}, one per plane $\mu\nu$,
\begin{equation}
  \sum_{a,b}\varphi_{\mu\nu}\big|_{x_\mu=a,\,x_\nu=b,\,x_\perp=0}=2\pi w_{\mu\nu},
  \label{eq:rowW}
\end{equation}
the sum over the plaquettes of one two-torus, the unwrapped form of
\eqref{eq:u1_global_closure}. Stacked,
\begin{equation}
  \M\varphi=b,\qquad
  \M=\begin{pmatrix}\M_{\mathrm B}\\ \M_{\mathrm W}\end{pmatrix},\qquad
  b=\begin{pmatrix}0\\ 2\pi w\end{pmatrix},
  \label{eq:system}
\end{equation}
with $\M_{\mathrm B}$ the Bianchi rows and $\M_{\mathrm W}$ the $\binom d2$ flux rows. In two
dimensions there are no cubes, $\M_{\mathrm B}$ is empty, and the system is the single flux row
$\sum_p\varphi_p=2\pi w$.

\paragraph{The solve.}
Split the columns into free and determined plaquettes, $F\sqcup D$, with
$|D|=\operatorname{rank}\M$ chosen so that the square block $\M_D$ is invertible. Then
$\M_F\varphi_F+\M_D\varphi_D=b$ gives the layer, an affine map on the unwrapped angles,
\begin{equation}
  \varphi_D=-\G\,\varphi_F+\M_D^{-1}b,
  \qquad
  \G=\M_D^{-1}\M_F\in\{-1,0,1\}^{|D|\times|F|},
  \label{eq:solve}
\end{equation}
in which $\G$ is the only object the model interacts with. In two dimensions $D$ is one plaquette
and $\G$ a row of ones, $\varphi_D=2\pi w-\sum_F\varphi_F$; in four dimensions it fixes half the
plaquettes from the other half. Setting $w=0$ loses nothing: the unwrapped plane sum of
$\mathrm d\theta$ telescopes to zero for every link field, so every configuration lies in that
parametrisation, and a non-zero physical flux enters only through the integer lifts below, costing
an action of order $\fluxCost$ where it could matter (\cref{app:reference}). Without the flux rows
the free space would contain the $\binom d2$ harmonic directions, which are the plaquette field of
no link configuration; with them its dimension is $(d-1)(V-1)$ (\cref{app:dof}), or $\dimBtwo$ at
$L=\Lfine$ in four dimensions.

\begin{proposition}[Constant Jacobian]
\label{prop:abelian}
For every $\varphi_F$ the point $\varphi=(\varphi_F,\varphi_D)$ satisfies \eqref{eq:system}
exactly, and its density on the constraint set with respect to $|F|$-dimensional Hausdorff
measure is
\begin{equation}
  \log q_{\mathcal M}(\varphi)=\log q_F(\varphi_F)-\tfrac12\log\det\!\big(\bm I_{|F|}+\G^{\!\top}\G\big),
  \label{eq:density}
\end{equation}
whose second term is independent of $\varphi_F$ and $b$.
\end{proposition}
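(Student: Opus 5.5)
The plan is to read the constraint layer \eqref{eq:solve} as an affine embedding of the free coordinates into $\R^M$ and to apply the area formula for a linear parametrisation. Define $\Phi:\R^{|F|}\to\R^M$ by $\Phi(\varphi_F)=(\varphi_F,\,-\G\varphi_F+\M_D^{-1}b)$, with the coordinates ordered as $F\sqcup D$ (a permutation of coordinates is an isometry, so the ordering does not matter).

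The first step is exactness. I would substitute $\Phi(\varphi_F)$ into \eqref{eq:system}, using the column split $\M=(\M_F\;\M_D)$. This gives $\M_F\varphi_F+\M_D(-\M_D^{-1}\M_F\varphi_F+\M_D^{-1}b)=b$, so every sample satisfies the system identically. I would also check that the image of $\Phi$ is the whole solution set. The rows of $\M$ are spanned by the $|D|=\operatorname{rank}\M$ independent rows, and $\M_D$ is invertible on them. So any solution is determined by its $F$-part, and $\Phi$ is a bijection onto the constraint set $\mathcal M_b=\{\varphi:\M\varphi=b\}$. This set is an $|F|$-dimensional affine subspace. If $\M$ has more rows than its rank, I would restrict the argument to the independent rows, since the remaining rows are implied.

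The second step is the density. The Jacobian matrix is the constant $J=D\Phi=\begin{pmatrix}\bm I_{|F|}\\ -\G\end{pmatrix}$. Its Gram matrix is $J^\top J=\bm I_{|F|}+\G^\top\G$, which is positive definite, so $\Phi$ is injective. The area formula for an injective Lipschitz map then gives, for any Borel $A\subset\mathcal M_b$,
\[
\int_{\Phi^{-1}(A)} g(\Phi(\varphi_F))\sqrt{\det(J^\top J)}\,d\varphi_F=\int_A g\,d\mathcal H^{|F|}.
\]
I would take $g=q_F\circ\Phi^{-1}/\sqrt{\det J^\top J}$. With this choice the pushforward of $q_F\,d\varphi_F$ has Hausdorff density $q_{\mathcal M}(\varphi)=q_F(\varphi_F)\det(\bm I+\G^\top\G)^{-1/2}$. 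Taking logarithms gives \eqref{eq:density}. The correction term depends only on $\G$, which is fixed by the lattice and the choice of $F$ and $D$. It therefore does not depend on $\varphi_F$ or on $b$, since $b$ only translates the subspace.

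The main obstacle I expect is not the calculation but the interpretation on angles. The flow produces $\varphi_F$ on the torus, whereas the layer acts on unwrapped angles. I would handle this by observing that the relations are affine with integer $\G$, so shifting $\varphi_F$ by $2\pi\Z^{|F|}$ shifts $\varphi_D$ by an integer multiple of $2\pi$. The map therefore descends to a well-defined map on angles modulo $2\pi$, with integer lifts absorbed into $b$ (that is, into $w$ and $m_c$). Because the correction term is independent of $b$, the density formula holds locally on every branch. Hence it holds on the whole wrapped constraint set with the same constant.
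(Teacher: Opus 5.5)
Your proposal is correct and is essentially the paper's proof: the paper also treats the layer as the affine embedding $\varphi=\varphi_0(b)+\bm B\varphi_F$ with $\bm B=(\bm I_{|F|},-\G^{\!\top})^{\!\top}$ and reads off the constant volume element $\sqrt{\det(\bm I+\G^{\!\top}\G)}$. Your substitution check of exactness, the surjectivity onto the solution set and the explicit area-formula step spell out what the paper's one-line proof leaves implicit, and your closing remark on wrapped angles goes beyond the proposition and relies on the integrality of $\G$, which the paper proves separately in Lemma~\ref{lem:torus}.
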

\begin{proof}
The map $\varphi_F\mapsto\varphi$ is the affine embedding $\varphi=\varphi_0(b)+\bm B\varphi_F$
with $\bm B=(\bm I_{|F|},-\G^{\!\top})^{\!\top}$, whose volume element
$\sqrt{\det\bm B^{\!\top}\bm B}=\sqrt{\det(\bm I+\G^{\!\top}\G)}$ is a constant.
\end{proof}
The constant cancels in self-normalised importance weights and shifts the reverse-KL objective by
a constant, so it is never evaluated.

\begin{lemma}[Integrality]
\label{lem:torus}
$\varphi_D$ is well defined on $\T^{|D|}$ if and only if $\G$ is integer, which holds whenever
$\M_D$ is unimodular, i.e.\ every elimination pivot is $\pm1$.
\end{lemma}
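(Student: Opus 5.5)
The plan is to read ``well defined on $\T^{|D|}$'' as follows. The affine map \eqref{eq:solve}, $\varphi_F\mapsto -\G\varphi_F+\M_D^{-1}b$, should descend to a map $\T^{|F|}\to\T^{|D|}$. Equivalently, shifting any free angle by a lattice vector $2\pi n$, $n\in\Z^{|F|}$, must change $\varphi_D$ only by an element of $2\pi\Z^{|D|}$. The proof then splits into the two directions of the equivalence and a separate unimodularity claim.

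\emph{Equivalence.} First compute the shift: replacing $\varphi_F$ by $\varphi_F+2\pi n$ changes $\varphi_D$ by $-2\pi\G n$. If $\G$ has integer entries, then $\G n\in\Z^{|D|}$ for every integer $n$, so the map descends and ``if'' is immediate. For ``only if'', take $n=e_k$, the $k$-th standard basis vector. Well-definedness forces $\G e_k$, the $k$-th column of $\G$, to be integer. Running over all $k$ shows that $\G$ is integer. The offset $\M_D^{-1}b$ is a constant and plays no role in this argument.

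\emph{Unimodularity.} Here I would use that $\M$ has integer entries: the Bianchi rows have entries in $\{0,\pm1\}$, and the flux rows have entries $0,1$. So $\M_D$ and $\M_F$ are integer matrices. If $\det\M_D=\pm1$, Cramer's rule (the adjugate formula) gives $\M_D^{-1}=\pm\operatorname{adj}(\M_D)$, which is integer. Hence $\G=\M_D^{-1}\M_F$ is integer.

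To connect this with the ``every elimination pivot is $\pm1$'' phrasing, I would note that Gaussian elimination on $\M_D$ with all pivots $\pm1$ uses only integer row operations and ends at a triangular matrix with diagonal $\pm1$. Its determinant is therefore the product of the pivots, $\pm1$. Equivalently, the inverse is assembled from integer elimination steps, so no fractions ever appear.

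The main obstacle is not technical but interpretive. The statement says ``whenever'', so I only need sufficiency of unimodularity, not necessity. I would remark that necessity can fail in principle, since $\M_D^{-1}\M_F$ can be integer without $\M_D^{-1}$ being integer; the claim does not require excluding this. I would also note the paper's claim $\G\in\{-1,0,1\}^{|D|\times|F|}$. That claim is stronger than what the lemma needs and depends on the particular ordering, so I would not try to prove it here.
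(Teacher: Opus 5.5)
Your proposal is correct and follows essentially the same route as the paper. The paper shifts a single free angle by $2\pi$ to get the equivalence. It then notes that with integer inputs and unit pivots no division occurs, so $\G$ is integer. Your version reaches the same conclusion through $\det\M_D=\pm1$ and the adjugate, and your remark that unimodularity is sufficient but not necessary matches the paper's example $\M_D=\M_F=(2)$.
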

\begin{proof}
Under $\varphi_k\mapsto\varphi_k+2\pi$ the solve shifts $\varphi_j$ by $-2\pi\G_{jk}$, a multiple
of $2\pi$ if and only if $\G_{jk}\in\Z$. All inputs are integer, so a non-integer can arise only
by division, and with every pivot of modulus one no division occurs. (Unimodularity is
sufficient, not necessary: $\M_D=\M_F=(2)$ gives $\G=(1)$.)
\end{proof}

\paragraph{Wrapped angles, monopoles and charge.}
The solve produces unwrapped angles. The physical plaquette is
$\bar\varphi_j=\operatorname{wrap}(\varphi_j)\in(-\pi,\pi]$, and the integer
$k_j=(\varphi_j-\bar\varphi_j)/2\pi$ is the \emph{lift} of the determined plaquette. Inserting the
wrapped angles into the relations returns \eqref{eq:bianchi_u1} and \eqref{eq:u1_global_closure}
with $m_c=-\sum_{p\subset\partial c}\epsilon_{cp}k_p$ and $Q=-\sum_p k_p$: the monopole numbers in
$d\ge3$ and the topological charge in $d=2$ are sums of lifts, read off the sample rather than
generated. The flow never sees an integer variable; the tails of its continuous density decide how
often a determined plaquette leaves $(-\pi,\pi]$, which is how monopoles arise in the
sampler (\cref{app:monopoles}).

\paragraph{Elimination and checks.}
The partition is computed in exact integer arithmetic, taking at each step a pivot from the row of
minimum remaining support (\cref{app:support}); the coarse $2^4$ system of the chain is reduced
with a fixed column priority. The elimination counts non-unit pivots rather than assuming there
are none, and in every system built for this work, at every level of every chain, the count is
zero and all coefficients of $\G$ lie in $\{-1,0,+1\}$. After elimination the rows of $\G$ are sparse
(\cref{tab:conditioning}: $O(1)$ non-zeros on a refinement level, a median support growing linearly with
$L$ for the single-level solve), so applying \eqref{eq:solve} is a sparse integer matrix--vector product. Whenever a system is built or a batch generated, three
conditions are asserted: every pivot is $\pm1$, the free-variable count equals $(d-1)(V-1)$, and
the configurations satisfy every relation to machine precision.

\subsubsection{\texorpdfstring{$SU(2)$}{SU(2)}: links in tree order}
\label{app:su2single}

For a non-Abelian group the relations are ordered products of plaquettes based at different sites,
so imposing them directly would mean deriving and solving an ordered identity for every relation.
The flow still generates plaquettes. The layer takes them, together with the sampled $d$
holonomies, and constructs the link field from which they come; the plaquettes of a link field
satisfy every relation automatically, so none need be imposed. No transporter appears in the
construction, which is what the tree buys: every site is reached from the root along identity
links, so plaquettes based at different sites may be multiplied at no cost. The lattice is
$L\times L$, $V=L^2$, and the pairing is that of \cref{app:dof.tree} and \cref{fig:dof}: in the
comb gauge every link outside the tree is the last unknown factor of exactly one plaquette, the
one it closes.

\paragraph{Generation.}
The flow produces $V-1$ elements $P_p\in SU(2)$, one for every plaquette except the corner
plaquette $p_\ast$ at $x=(L-1,L-1)$, which the global relation closes (\cref{fig:dof}). The two
holonomies $\Omega_0,\Omega_1$ are drawn from Haar measure.

\paragraph{Reconstruction.}
With the tree links equal to $\mathbf 1$ and the holonomies in place, the links follow in the
sweep order of \cref{app:dof}: each plaquette of row $j<L-1$ fixes the horizontal link above it,
and each plaquette of the top row fixes the top vertical link to its right. In both cases
\eqref{eq:plq} has one unknown factor and one solution; for the first, with
$P_{01}(x)=U_0(x)\,U_1(x+\hat0)\,U_0(x+\hat1)^\dagger\,U_1(x)^\dagger$,
\begin{equation}
  U_0(x+\hat1)=U_1(x)^\dagger\,P_{01}(x)^\dagger\,U_0(x)\,U_1(x+\hat0),
  \label{eq:su2_link_from_plaq}
\end{equation}
and analogously for the top row. One pass turns the $V-1$ generated plaquettes into the $V-1$ free
links, with no Jacobian: at every step the unknown link is the generated plaquette multiplied left
and right by elements already fixed, and Haar measure is invariant under both, so the density of
the link field relative to Haar equals the flow density on the $P_p$ times the Haar density of the
holonomies.

\paragraph{The determined plaquette.}
After the pass every link is known, so $P_{p_\ast}$ is read off them rather than solved for, which
is \eqref{eq:global_closure_2d} holding automatically:
\begin{equation}
  T_{p_\ast\to x_0}\big[P_{p_\ast}\big]
  =\Big(\overrightarrow{\prod_{p\neq p_\ast}}T_{p\to x_0}[P_p]\Big)^{\!\dagger}\,
   \Omega_0\,\Omega_1\,\Omega_0^\dagger\,\Omega_1^\dagger ,
  \label{eq:su2_closure_solve}
\end{equation}
the commutator of the holonomies divided by the transported product of all other plaquettes. It is
the single determined plaquette of \eqref{eq:frac} in two dimensions, with solve support $V-1$,
the non-Abelian counterpart of $\varphi_D=-\sum_F\varphi_F$. The action \eqref{eq:action} is
evaluated on all $V$ plaquettes, including $P_{p_\ast}$, and the importance weight is $w=e^{-S}/q$.

\paragraph{Higher dimensions.}
The same reconstruction applies in any $d$, with a nested comb in place of the two-dimensional one:
every link outside the tree is fixed by the plaquette it closes, and the plaquettes closing each cube
are then determined by \eqref{eq:bianchi_general}. The route is worth more there than in two
dimensions, the cube identities being $O(V)$ relations rather than one. We have not trained a
non-Abelian single-level model in $d\ge3$: there every level is a refinement of \cref{app:refine}
and no reconstruction is needed.

\subsection{Why one level is hard to learn}
\label{sec:m.global}

\begin{table}[tbh]
\centering
\caption{Solve support and conditioning of the $U(1)$ constraint layer. Single level: all
plaquettes of the lattice generated at once (\cref{sec:m.build}). Refinement level: one doubling
of the multilevel construction (\cref{app:refine}), whose values do not depend on the lattice
extent (\cref{app:support}).}
\label{tab:conditioning}
\vspace{4pt}
\setlength{\tabcolsep}{6pt}
\begin{tabular}{l cc c cc}
\toprule
 & \multicolumn{2}{c}{single level} & & \multicolumn{2}{c}{refinement level} \\
\cmidrule(lr){2-3}\cmidrule(lr){5-6}
lattice & support (median / max) & $\kappa(\bm I+\G^{\!\top}\G)$ & & support (max) & $\kappa$ \\
\midrule
$2^3$ & $5$ / $7$      & $20$   & & $4$ & $5$ \\
$8^3$ & $17$ / $159$   & $877$  & & $4$ & $5$ \\
$2^4$ & $5$ / $11$     & $\kappaSingleTwo$  & & $4$ & $7$ \\
$4^4$ & $\suppiso$ / $\suppisomax$ & $\kappaSingleFour$ & & $4$ & $7$ \\
\bottomrule
\end{tabular}
\end{table}
The layer is exact, but it changes the problem the flow has to solve. Solved on the whole lattice
at once, the constraint is global in the generated variables: a determined plaquette is fixed by a
chain of cube relations that ends on free plaquettes, and on a periodic lattice such chains cross
the lattice. \Cref{tab:conditioning} shows the consequence for $U(1)$, the solve support of a
single-level layer growing with the lattice and with it the conditioning of the target
(\cref{app:support}). At weak coupling $S\simeq\tfrac\beta2\|\varphi\|^2$, so substituting
\eqref{eq:solve},
\begin{equation}
  S\;\simeq\;\tfrac\beta2\,\varphi_F^{\!\top}\big(\bm I+\G^{\!\top}\G\big)\varphi_F+\text{linear},
  \label{eq:gauss}
\end{equation}
a Gaussian that couples every pair of free angles sharing a determined plaquette, with an
anisotropy given by $\kappa(\bm I+\G^{\!\top}\G)$. A coupling flow transforms a few variables per
layer and must build these long-range, badly scaled correlations implicitly; in two dimensions the
single determined plaquette depends on all $V-1$ others. A constraint that is local in the
physical variables has become global in the variables the model generates.

\section{The refinement and the chain}
\label{app:refine}

Notation follows \cref{sec:method}: $\mathcal V^{(i)}$ denotes the set of valid plaquette fields at
level $i$, $F^{(i)}$ the free variables generated at that level, $|F^{(i)}|$ in number, $R_i$ the refinement map
of \eqref{eq:refine_ml}, and $\Omega\in G^d$ the holonomies. Coarse quantities carry a superscript
$\mathrm c$: $P^{\mathrm c}$ for a coarse plaquette and $U^{\mathrm c}$ for a coarse link. All
statements below were verified numerically for $U(1)$ and $SU(2)$ in $d=2,\dots,5$
(\cref{app:verify}).

\subsection{Blocking, refinement and exactness}
\label{app:refine.defs}

\paragraph{The coordinates.}
A blocking map $\mathcal K_i:\mathcal V^{(i)}\to\mathcal V^{(i-1)}$, together with the free variables
$F^{(i)}\in G^{|F^{(i)}|}$, must define
\begin{equation}
\big(\mathcal K_i,F^{(i)}\big):\;
\mathcal V^{(i)}\longrightarrow\mathcal V^{(i-1)}\times G^{|F^{(i)}|}
\qquad\text{bijective, with unit Haar Jacobian,}
\label{eq:coords}
\end{equation}
with inverse $R_i$: a valid coarse field and arbitrary $F^{(i)}$ determine exactly one valid fine
field. We refine one direction $\mu$ at a time, in the gauge $U_\mu(2X)=\mathbf 1$, so that the two
fine plaquettes produced by a split are based at the same point $2X$. Then
\begin{equation}
\mathcal K_i:\qquad
P^{\mathrm c}_{\mu\nu}(X)=P_{\mu\nu}(2X+\hmu)\,P_{\mu\nu}(2X),
\qquad
P^{\mathrm c}_{\nu\rho}(X)=P_{\nu\rho}(2X),
\qquad (\nu,\rho\neq\mu),
\label{eq:block}
\end{equation}
and
\begin{equation}
R_i:\qquad
\begin{aligned}
P_{\mu\nu}(2X)&=F_\nu, &
P_{\mu\nu}(2X+\hmu)&=P^{\mathrm c}_{\mu\nu}F_\nu^\dagger,\\[2mm]
P_{\nu\rho}(2X)&=P^{\mathrm c}_{\nu\rho}, &
P_{\nu\rho}(2X+\hmu)&=F_\nu\,
T_\nu\big[F_\rho(X+\hnu)\big]\,
P^{\mathrm c}_{\nu\rho}\,
T_\rho\big[F_\nu(X+\hrho)\big]^\dagger F_\rho^\dagger ,
\end{aligned}
\label{eq:classes}
\end{equation}
where $T_\nu,T_\rho$ denote transport by coarse links and are therefore functions of
$P^{\mathrm c}$ and $\Omega$.

\paragraph{The four classes and their supports.}
\label{app:refine.classes}
The four lines of \eqref{eq:classes} exhaust the possible fine plaquettes: generated, fixed by its
blocking partner, copied, or determined by one cube identity. Their solve supports are respectively
$1,1,0,4$, independent of $G$, $d$, and the lattice extent. The first three follow directly from
\eqref{eq:refine}; the fourth is the cube identity \eqref{eq:bianchi_general} at $2X$ in the
directions $\mu,\nu,\rho$. Its two faces transverse to $\mu$ are $P_{\nu\rho}(2X)$ and
$P_{\nu\rho}(2X+\hmu)$, while the four faces containing $\mu$ are generated elements at
$\mu$-even sites. There are $d-1$ plaquettes of each class per coarse site, hence
\begin{equation}
  |F^{(i)}|=(d-1)V^{(i-1)},
  \label{eq:levelcount}
\end{equation}
which is the level count used in \cref{sec:method}. In $d=2$ there is no third direction and hence
no cube class; only the first line of \eqref{eq:classes} remains. For $U(1)$ the transporters drop
out and \eqref{eq:block} becomes linear in the angles,
\begin{equation}
\Phi_{\mu\nu}(X)=\varphi_{\mu\nu}(2X)+\varphi_{\mu\nu}(2X+\hmu),
\qquad
\Phi_{\nu\rho}(X)=\varphi_{\nu\rho}(2X),
\label{eq:rowK}
\end{equation}
with $\Phi$ coarse and $\varphi$ fine. In these coordinates, the condition number of
$\bm I+\G^{\!\top}\G$ is $1,5,7,9$ for $d=2,3,4,5$, respectively, at every extent tested
(\cref{app:support}, \cref{tab:verify}).

\paragraph{Exactness.}
\label{app:refine.exact}
We prove \cref{eq:coords} at the link level. Decimation along $\mu$ and its inverse are
\begin{equation}
  U^{\mathrm c}_\mu(X)=U_\mu(2X)\,U_\mu(2X+\hmu),
  \qquad
  U^{\mathrm c}_\nu(X)=U_\nu(2X)\quad(\nu\neq\mu),
  \label{eq:decimate}
\end{equation}
and
\begin{equation}
  U_\mu(2X)=\mathbf 1,\quad
  U_\mu(2X+\hmu)=U^{\mathrm c}_\mu(X),\quad
  U_\nu(2X)=U^{\mathrm c}_\nu(X),\quad
  U_\nu(2X+\hmu)=F_\nu(X)\,U^{\mathrm c}_\nu(X),
  \label{eq:refine}
\end{equation}
respectively. With $U_\mu(2X)=\mathbf 1$, \eqref{eq:decimate} reproduces the blocked plaquettes
in \eqref{eq:block}. The bijection and unit Jacobian then follow in three steps.

\emph{(i)} Substituting \eqref{eq:refine} into \eqref{eq:decimate} recovers $U^{\mathrm c}$, while
$P_{\mu\nu}(2X)=F_\nu(X)$ because the first and third factors of that plaquette are identities.

\emph{(ii)} Any $U'$ that decimates to $U^{\mathrm c}$ can be brought to the form
\eqref{eq:refine} by the gauge transformation \eqref{eq:gauge} with
$g(2X)=\mathbf 1$ and $g(2X+\hmu)=U'_\mu(2X)$. This leaves every decimated product unchanged and
is the unique transformation supported on the new sites with this property, since
$g(2X+\hmu)$ is fixed by
$g(2X)U'_\mu(2X)g(2X+\hmu)^\dagger=\mathbf 1$. The gauge choice is therefore necessary rather
than merely convenient: without it, one pair $(U^{\mathrm c},F)$ would correspond to an entire
orbit of fine fields. For $U(1)$ the plaquettes are gauge invariant, so no such choice is required.

\emph{(iii)} For fixed $U^{\mathrm c}$, the only links that depend on $F$ are
$U_\nu(2X+\hmu)=F_\nu(X)U^{\mathrm c}_\nu(X)$, one for each generated element. Each is a right
translation by an element independent of $F$. Since Haar measure is right invariant, product Haar
measure on the $F_\nu(X)$ is mapped to Haar measure on the fine links with unit Jacobian
(numerically $|\log|\det J||\le2\times10^{-14}$, \cref{tab:verify}).

\subsection{The chain}
\label{app:refine.chain}

We start from the one-site torus, whose $d$ links are the holonomies $\Omega$, and refine successively
along $\mu=0,\dots,d-1$, repeating this cycle until the target extent is reached. The links fixed to $\mathbf 1$ form a
maximal tree at every level. This follows by induction: the one-site torus fixes none, while a
refinement along $\mu$ fixes the $V^{(i-1)}$ links $U_\mu(2X)=\mathbf 1$. Each previously fixed
coarse link remains fixed after refinement, appearing either as $U_\nu(2X)$ or as
$U_\mu(2X+\hmu)$, so the total becomes
$(V^{(i-1)}-1)+V^{(i-1)}=V^{(i)}-1$. These links contain no loop, since any such loop would project
to a loop in the coarse tree. Summing \eqref{eq:levelcount} over the volumes
$1,2,4,\dots,V/2$ gives
\begin{equation}
  d+\sum_i |F^{(i)}|=d+(d-1)\big(1+2+\dots+\tfrac V2\big)=d+(d-1)(V-1),
  \label{eq:tree_count}
\end{equation}
which is precisely the count in \cref{sec:m.vars}: the holonomies together with the generated
elements give all degrees of freedom of a link field in tree gauge. Applying step (iii) at each level,
the full map to non-tree links has unit Haar Jacobian, so the density factorises as \eqref{eq:chain}
and the importance weight is $e^{-S}/q$.

\subsection{Plaquette form for \texorpdfstring{$U(1)$}{U(1)}}
\label{app:blocking}

For $U(1)$ the refinement can be written directly in terms of plaquette angles. Equation
\eqref{eq:rowK} is the plaquette-space image of \eqref{eq:decimate}: after substitution into
\eqref{eq:plq}, the interior link of a face containing $\mu$ appears twice with opposite signs, so
the coarse face is the sum of the two fine faces it covers, while a face not containing $\mu$ is
simply the fine face at even $x_\mu$. Since the blocking map is a sum followed by restriction, and
both commute with the coboundary, it intertwines the fine and coarse coboundaries,
\begin{equation}
  \mathrm d_{\mathrm c}\,\M_{\mathrm K}=\M_{\mathrm K}'\,\mathrm d_{\mathrm f}
  \quad\Longrightarrow\quad
  \mathrm d_{\mathrm f}\varphi=0\;\Rightarrow\;\mathrm d_{\mathrm c}\Phi=0 ,
  \label{eq:intertwine}
\end{equation}
so the coarse Bianchi identity is inherited rather than imposed again. The level system is
\begin{equation}
  \M\varphi=b,\qquad
  \M=\begin{pmatrix}\M_{\mathrm B}\\ \M_{\mathrm K}\end{pmatrix},\qquad
  b=\begin{pmatrix}0\\ \Phi\end{pmatrix},
  \qquad
  (\M_{\mathrm K}\varphi)_P=\sum_{p\in P}\varphi_p ,
  \label{eq:levelsystem}
\end{equation}
where $\M_{\mathrm B}$ contains the fine Bianchi rows and $\M_{\mathrm K}$ one blocking row per
coarse plaquette $P$. The coarse field enters only through the constant $b$; no flux rows are needed,
because the fluxes are fixed at the coarsest level and inherited through \eqref{eq:intertwine}. The
elimination and the two properties of \cref{app:constraint} then apply unchanged, and
\cref{app:support} compares these coordinates with those of \eqref{eq:classes}.

The four-dimensional chain combines the first four refinements, $1^4\to2^4$, into a single joint
flow on the $45=3\cdot16-3$ angles of the $2^4$ lattice, with fluxes fixed to zero. It then refines
one direction at a time, generating $48,96,192,384$ angles, copying the same numbers at support zero,
and solving $96,192,384,768$. Each step is a unimodular integer change of variables from the
$F_\nu(X)$ of \eqref{eq:refine}, so its Jacobian is one (\cref{tab:verify}). Refining all directions
at once is also the image of a link decimation and therefore also preserves the Bianchi identity, but
within a coarse cell it leaves fine faces that are not fixed by any single cube relation. The
constraints then chain across the cell and the support grows with the extent (\cref{app:support}).
Thus refining one direction at a time is not required for exactness, but for locality.

\subsection{Two-dimensional refinements}
\label{app:block2d}

In two dimensions both directions are refined at once, equivalent to two successive refinements along
$\mu=0$ and then $\mu=1$. Each $2\times2$ block is generated from its coarse plaquette using three
elements, while the fourth is fixed by
\begin{equation}
P^{\mathrm c}(X)=P_4P_3P_2P_1 ,
\label{eq:block2d}
\end{equation}
or equivalently
\begin{equation}
P_3=P_4^\dagger\,P^{\mathrm c}\,(P_2P_1)^\dagger .
\label{eq:refine2d}
\end{equation}
Hence $|F^{(i)}|=3V^{(i-1)}$, which is \eqref{eq:levelcount} applied twice. For $U(1)$ this is simply
\eqref{eq:rowK} applied twice in angle variables, and the layer is the $d=2$ case of
\cref{app:constraint}: one blocking row per block and no cube rows. The holonomies do not enter the
action and are drawn uniformly; the flux is fixed to zero at the coarsest level, and the topological
charge of a sample is the sum of the lifts of its determined plaquettes (\cref{app:constraint}).

For $SU(2)$ only the ordering changes, while \eqref{eq:block2d} remains unchanged. Refining first
along direction $0$ generates $F(X)$ at each coarse site and produces the intermediate plaquettes
$F(X)$ and $P^{\mathrm c}(X)F(X)^\dagger$. Refining along direction $1$ then generates one element
at each intermediate site. Labelling the block top-left, bottom-left, top-right, bottom-right
(\cref{fig:refine}),
\begin{equation}
  \begin{aligned}
  P_1&=G_L^\dagger F, & P_2&=G_L,\\
  P_3&=G_R^\dagger\,P^{\mathrm c}F^\dagger, & P_4&=G_R,
  \end{aligned}
  \qquad F,G_L,G_R\in SU(2)\ \text{generated},
  \label{eq:su2_block}
\end{equation}
where $G_L,G_R$ are the second-refinement elements at the two intermediate sites. The three links
$U_0(2X)$, $U_1(2X)$ and $U_1(2X+\hat0)$ inside the block are identities, so all four fine
plaquettes are based at $2X$ and no transporter appears. Generating $P_1,P_2,P_4$ and determining
$P_3$ through \eqref{eq:refine2d} gives
$P_3=G_R^\dagger P^{\mathrm c}F^\dagger$ with $F=P_2P_1$. The two parametrisations therefore
differ only by translations and inversions and have the same unit Haar Jacobian.

The implementation contains one transporter as a gauge artefact. The two-dimensional $SU(2)$ code
inherits the comb gauge of \cref{app:dof.tree} from the coarsest level, where the link
$a=U_0(2X)$ at the lower-left corner of a block need not be the identity. The flow therefore works
with the transported variables $Q_4=aP_4a^\dagger$ and $Q_3=aP_3a^\dagger$, for which
\eqref{eq:refine2d} again holds. This causes no complication: $a$ depends only on $P_1$ and $P_2$,
which are always generated, so no determined plaquette enters a transporter and all blocks can be
solved in parallel. Conjugation preserves both Haar measure and the trace, and the transformation
between the two descriptions is the identity at every block corner, leaving the coarse links unchanged
and allowing the levels to compose.

The holonomies are those of the coarsest level. They are drawn from Haar measure and remain unchanged
under blocking, since coarse links are products of fine ones. They enter the fine plaquettes only
through the coarsest plaquette, the commutator
$\Omega_0\Omega_1\Omega_0^\dagger\Omega_1^\dagger$, which is supplied to the conditioner at every
level (\cref{app:arch}). Their density is Haar rather than learned: the $n$-th character of the exact
marginal is suppressed as
$\exp\!\big[-(n^2-1)(L/\xi)^2/3\big]$, giving a deviation of $8.5\times10^{-3}$ at
$(L/\xi)^2=5$ and $2.3\times10^{-9}$ at $20$, for $(L/\xi)^2$ between $5.9$ and $196$ over the couplings studied.
% tab:support and fig:ablation stand before the heading on purpose: LaTeX places a [t] float at
% the top of a page only if it was met on the page before, so this puts both at the head of the
% page on which the section starts, with the text that discusses them.
\begin{table}[!t]
\centering
\caption{Support of the constraint solve and condition number $\kappa$ of $\bm I+\G^{\!\top}\G$
for the systems used in the four-dimensional experiments (integer elimination, minimum-support
pivoting). Support $0$ means the plaquette is copied from the coarse field. These are the coordinates
the trained models use. In the refinement coordinates of \cref{app:refine}, every determined plaquette
depends on at most four generated variables and $\kappa=\kappaRefFour$ (\cref{tab:dimscan}); the two
choices are related by a unimodular integer change of variables, and both are independent of the
lattice size.}
\label{tab:support}
\setlength{\tabcolsep}{3pt}
\begin{tabular}{lrrrrrrr}
\toprule
system (fine shape, direction) & plaquettes & generated & determined
   & median & mean & max & $\kappa$ \\
\midrule
coarse level $2^{4}$                                & 96   & 45  & 51   & 5  & 6.0  & 11 & 45 \\
$2^{4}\to 4\!\times\!2^{3}$, $\mu=0$               & 192  & 48  & 144  & 3  & 2.8  & 8  & 45 \\
$\phantom{2^{4}}\to 4^{2}\!\times\!2^{2}$, $\mu=1$ & 384  & 96  & 288  & 2  & 2.7  & 9  & 43 \\
$\phantom{2^{4}}\to 4^{3}\!\times\!2$, $\mu=2$     & 768  & 192 & 576  & 1  & 2.3  & 8  & 30 \\
$\phantom{2^{4}}\to 4^{4}$, $\mu=3$                & 1536 & 384 & 1152 & 1  & 1.7  & 4  & 7  \\
\midrule
isotropic step $2^{4}\to 4^{4}$                     & 1536 & 720 & 816  & 13 & 13.7 & 47 & --- \\
single level $4^{4}$ (\cref{fig:ablation})          & 1536 & 765 & 771  & 13 & 13.6 & 47 & 391 \\
\bottomrule
\end{tabular}
\end{table}

\begin{figure}[!t]
\centering
\includegraphics[width=0.55\linewidth]{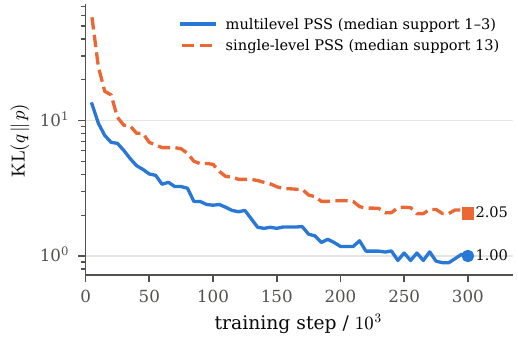}
\caption{Four-dimensional $U(1)$ at $\beta=2$: single-level against multilevel plaquette space
at matched parameter count ($\paramsSS$M against $\paramsML$M), during training. Markers are the
final evaluations of $\Neval$ samples, $\KL=\KLsingle$ and $\KLmatched$. The two models also
differ in width, depth, learning rate and seed, and the single-level model has the wider and
deeper conditioner (\cref{app:arch}).}
\label{fig:ablation}
\end{figure}

\section{Solve support and conditioning}
\label{app:support}

The layer is exact for every admissible partition $(F,D)$, but not every partition is equally
learnable. For $j\in D$ the \emph{solve support}
\begin{equation}
  k_j=\bigl|\{\,i\in F : \G_{ji}\neq0\,\}\bigr|
\end{equation}
is the row sparsity of $\G$ in \eqref{eq:solve}, $k_j=0$ meaning a plaquette copied from the coarse
field. Errors of scale $\epsilon$ in the generated angles reach a determined angle as
$O(\sqrt{k_j}\,\epsilon)$ if independent and $O(k_j\epsilon)$ if correlated, and the divergence
collects a contribution from every plaquette, determined ones included. The count of \cref{app:dof}
fixes $|F|$ and $|D|$ but is blind to $k$: two eliminations of one system can share $|F|$ and differ
entirely in support. The second quantity is $\kappa(\bm I+\G^{\!\top}\G)$, the matrix of
\eqref{eq:density} and \eqref{eq:gauss}. \Cref{tab:conditioning} is extracted from the two tables
here.

The anisotropic chain and the isotropic step generate the same $48+96+192+384=720$ angles, so the
comparison isolates locality from capacity. The isotropic and single-level solves are long-ranged,
median support $\suppiso$ and maximum $\suppisomax$; the anisotropic chain is local and becomes more
so along the chain, median $3,2,1,1$. Determinism is not the difference: the last anisotropic level
determines $75\%$ of the plaquettes, the isotropic step only $53\%$.

At matched parameter count the multilevel chain reaches half the divergence of the single-level
model, $\KLmatched$ against $\KLsingle$, with $\ESS$ $\ESSmatched$ against $\ESSsingle$
(\cref{fig:ablation}); $4^4$ is the smallest lattice on which the comparison can be made.

\begin{table}[!t]
\centering
\caption{Single-level solve of the whole lattice against one refinement level of the
coordinates of \eqref{eq:classes}, for $U(1)$. The refinement value was measured on extents $4$
and $8$ and is the same for both.}
\label{tab:dimscan}
\small
\setlength{\tabcolsep}{5pt}
\begin{tabular}{llccccc}
\toprule
 & \multicolumn{5}{c}{single level} & refinement level \\
\cmidrule(lr){2-6}\cmidrule(lr){7-7}
$d$ & lattice & generated & median support & max support & $\kappa(\bm I+\G^{\!\top}\G)$ & $\kappa$ \\
\midrule
$2$ & $4^2,\ 8^2,\ 16^2$ & $15,\ 63,\ 255$ & $15,\ 63,\ 255$ & $15,\ 63,\ 255$ & $16,\ 64,\ 256$ & $1$ \\
$3$ & $2^3,\ 4^3,\ 8^3$  & $14,\ 126,\ 1022$ & $5,\ 9,\ 17$ & $7,\ 31,\ 159$ & $20,\ 93,\ 877$ & $5$ \\
$4$ & $2^4,\ 4^4$        & $45,\ 765$ & $5,\ 9$ & $11,\ 47$ & $45,\ 379$ & $7$ \\
$5$ & ---                & --- & --- & --- & --- & $9$ \\
\bottomrule
\end{tabular}
\end{table}

In \cref{tab:dimscan} the single-level median support grows linearly with the extent and $\kappa$
roughly with the volume, so such a model must represent a Gaussian whose widths span
$\sqrt\kappa\approx20$ on $4^4$ and keep growing; on a refinement level $\kappa$ is $1,5,7,9$ in
$d=2,\dots,5$ at every extent. The single-level model of \cref{fig:ablation} uses a column priority
that gives median support $\suppiso$ and $\kappa=\kappaSingleFour$ on $4^4$ (\cref{tab:support}),
slightly above the minimum-support values of \cref{tab:dimscan}. The generated elements of
\eqref{eq:classes} are more local than the
eliminated coordinates our models use --- median support $1$, maximum $4$ and $\kappa=7$ on every
level, against $45,43,30,7$ --- and the two are related by a unimodular integer matrix
(\cref{app:verify}), so the first three refinements could be better conditioned by a change of
coordinates alone.

\subsection{Size and conditioning of the single-level solve with extent}
\label{app:solve_size}

The single-level layer solves \eqref{eq:system} for the whole lattice at once. \Cref{tab:solve_size}
gives the size of that system for the theories of this paper and for the four-dimensional
non-Abelian theories one would want next, and \cref{fig:kappa_scaling} the condition number of
\eqref{eq:gauss} against it. For $SU(N)$ the constraint Jacobian at the identity is the $U(1)$
one tensored with $\bm I_{\dim G}$, so every count is $\dim G$ times the $U(1)$ count and $\kappa$
is the same number.

\emph{Size.} The solve has $\dim G\,|D|$ rows with $|D|=1$ in $d=2$ and $|D|=3V+3$ in $d=4$:
$\solveRowsUoneThirtytwo$ for $U(1)$ and $\solveRowsSUthreeThirtytwo$ for $SU(3)$ on $32^4$. The
elimination is done once, but the layer needs its result $\G$ explicitly, $|D|\times|F|$ with a
row support that grows with the extent (\cref{tab:dimscan}, median $5,9,\dots$ at $L=2,4,\dots$),
so both the elimination and what it has to store grow faster than the lattice.

\emph{Conditioning.} In $d=2$, $\G$ is a single row of ones, $\bm I+\G^{\!\top}\G=\bm I+\bm1\bm1^{\!\top}$
has eigenvalues $V$ and $1$, and $\kappa=V$ exactly for every group and extent; the measured
values in \cref{tab:solve_size} are this. In $d=4$ the measured extents give $\kappa/V=\kappaOverVList$ at
$L=\kappaScanLList$: $\kappa$ grows linearly with the volume here too. The power law through
$L\ge4$ has exponent $\kappaAlpha$ and extrapolates to $\kappaFourSixteen$ on $16^4$ and
$\kappaFourThirtytwo$ on $32^4$. By \eqref{eq:gauss} this is the aspect ratio, squared, of the
Gaussian the flow has to represent in its own coordinates at weak coupling. The one-off exact integer elimination itself grows faster than the system: $0.1$, $4.7$, $99$ and
$824$ seconds on one CPU core at $L=4,6,8,10$ ($1{,}030$ to $40{,}006$ rows), as the fill-in grows with the
extent, so at $32^4$ ($\solveRowsUoneThirtytwo$ rows for $U(1)$) it becomes challenging with this
elimination strategy.

\emph{The refinement level.} One level of the multilevel construction has $\kappa=\kappaRefTwo$
in $d=2$ and $\kappaRefFour$ in $d=4$, support at most $4$, measured on extents $4$ and $8$ and
the same on both; the stencil is translation invariant, so neither can depend on $L$. Nothing of
extent-dependent size is eliminated or stored: the same fixed relation is applied at every site.
The count $|D|$ of determined plaquettes still grows with $V$, but as $V$ independent local
evaluations, not as one system.

% generated by ICLR_U1/figures/make_kappa_scaling.py -- do not edit by hand
\begin{table}[!htb]
\centering
\caption{Size and conditioning of the single-level solve against lattice extent, for the theories
of this paper and the four-dimensional non-Abelian theories one would want next. Counts are real
dimensions: plaquette variables $\dim G\cdot M$, rows of the solve $\dim G\cdot|D|$ with
$|D|=\operatorname{rank}\M$ in \eqref{eq:system}, generated variables $\dim G\cdot|F|$, the size of
$\bm I+\G^{\!\top}\G$. $\kappa$ is measured with minimum-support pivoting (\cref{tab:dimscan}); in
two dimensions it is $V$ exactly; $\dagger$ marks the power law of \cref{fig:kappa_scaling},
$\kappa\propto|F|^{1.05}$, extended. Last column: one refinement level in the coordinates of
\eqref{eq:classes}, the same at every extent.}
\label{tab:solve_size}
\small
\setlength{\tabcolsep}{4pt}
\begin{tabular}{lrrrrrrc}
\toprule
 & & & \multicolumn{4}{c}{single level, whole lattice} & refinement level \\
\cmidrule(lr){4-7}\cmidrule(lr){8-8}
theory & $L$ & $V$ & variables & solve rows & generated & $\kappa$ & $\kappa$ \\
\midrule
2D $U(1)$ & $8$ & $64$ & $64$ & $1$ & $63$ & $64$ & $1$ \\
 & $16$ & $256$ & $256$ & $1$ & $255$ & $256$ & $1$ \\
 & $32$ & $1{,}024$ & $1{,}024$ & $1$ & $1{,}023$ & $1{,}024$ & $1$ \\
 & $64$ & $4{,}096$ & $4{,}096$ & $1$ & $4{,}095$ & $4{,}096$ & $1$ \\
\addlinespace[2pt]
2D $SU(2)$ & $16$ & $256$ & $768$ & $3$ & $765$ & $256$ & $1$ \\
\addlinespace[2pt]
4D $U(1)$ & $4$ & $256$ & $1{,}536$ & $771$ & $765$ & $379$ & $7$ \\
 & $8$ & $4{,}096$ & $24{,}576$ & $12{,}291$ & $12{,}285$ & $7{,}088$ & $7$ \\
 & $16$ & $65{,}536$ & $393{,}216$ & $196{,}611$ & $196{,}605$ & $\approx1.3\times10^{5}{}^{\dagger}$ & $7$ \\
 & $32$ & $1{,}048{,}576$ & $6{,}291{,}456$ & $3{,}145{,}731$ & $3{,}145{,}725$ & $\approx2.4\times10^{6}{}^{\dagger}$ & $7$ \\
\addlinespace[2pt]
4D $SU(2)$ & $16$ & $65{,}536$ & $1{,}179{,}648$ & $589{,}833$ & $589{,}815$ & $\approx1.3\times10^{5}{}^{\dagger}$ & $7$ \\
 & $32$ & $1{,}048{,}576$ & $18{,}874{,}368$ & $9{,}437{,}193$ & $9{,}437{,}175$ & $\approx2.4\times10^{6}{}^{\dagger}$ & $7$ \\
\addlinespace[2pt]
4D $SU(3)$ & $16$ & $65{,}536$ & $3{,}145{,}728$ & $1{,}572{,}888$ & $1{,}572{,}840$ & $\approx1.3\times10^{5}{}^{\dagger}$ & $7$ \\
 & $32$ & $1{,}048{,}576$ & $50{,}331{,}648$ & $25{,}165{,}848$ & $25{,}165{,}800$ & $\approx2.4\times10^{6}{}^{\dagger}$ & $7$ \\
\bottomrule
\end{tabular}
\end{table}

\begin{figure}[!htb]
\centering
\includegraphics[width=3.6in]{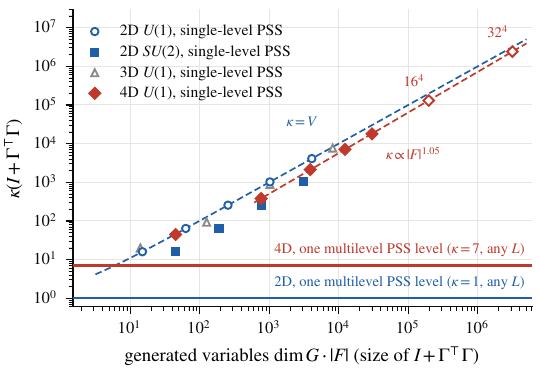}
\caption{Condition number of the single-level solve against its size, for the three theories of
this paper (3D $U(1)$ from \cref{tab:dimscan} as a reference), and one refinement level of the
multilevel construction, which is flat. Filled markers are measured; the open diamonds are the
power law through the four-dimensional points extended to $16^4$ and $32^4$; the two-dimensional
line is $\kappa=V$, exact. $SU(2)$ sits at three times the size of $U(1)$ with the same $\kappa$.}
\label{fig:kappa_scaling}
\end{figure}

\paragraph{What the determined half requires.}
$\ndet$ of the $\nplq$ plaquettes are a deterministic function of the other $\dimBtwo$, each a
signed sum of a median of $\suppiso$ of them, so a generator can reproduce every marginal it emits
and still be wrong about the half it computes: that half is fixed by high-order correlations, not by
marginals. A null with the correct marginals and no correlation, the $\dimBtwo$ angles drawn
independently from the exact one-plaquette distribution and passed through the same solve, produces
$\monoNullHalf$ wrapping defects per configuration at $\beta=0.5$ and $\monoNullTwo$ at $\beta=2$.
At $\beta=0.5$ the single-level model \emph{is} this null: $\monoSingleHalf$ defects against a
reference of $\refMonoHalf$, $\KL=\KLsingleHalf$, evidence recovery $\covSingleHalf$. At $\beta=2$
the same construction suppresses them to $\monoSingleTwo$, a factor $\monoSuppTwo$ below the null,
with $\KL=\KLsingle$ and recovery $\covSingleTwo$. The correlation the solve requires is therefore
learned in full at one coupling and not at all at the other, with no change of construction,
constraint or objective, while the multilevel sampler learns it at both, reaching $\monoHalf$
against $\refMonoHalf$ at $\beta=0.5$. The defects are magnetic monopoles, the coboundary of the
integer lifts (\cref{app:monopoles}).

\section{Verification of the construction}
\label{app:verify}

The construction makes structural claims --- that a refinement is a bijection, that its Jacobian
is one, that the solve supports are bounded --- which are exact statements and can therefore be
checked to machine precision rather than argued. \Cref{tab:verify} is that check. Every row was run
in a reference implementation written for an arbitrary compact group and dimension, for $U(1)$ with
angles over the reals, so the Abelian identities are tested exactly rather than modulo $2\pi$, and
for $SU(2)$ with unit quaternions, in double precision. No row involves training.

\begin{table}[tbh]
\centering
\caption{Numerical gates. Rows hold for both groups unless stated. ``Exact'': integer valued and
hit exactly. Other entries are the largest deviation over the lattices listed. The last row is
limited by that sampler's chart interpolant, not by the construction.}
\label{tab:verify}
\small
\setlength{\tabcolsep}{3pt}
\begin{tabular}{llr}
\toprule
what is checked & lattices & worst \\
\midrule
\multicolumn{3}{l}{\emph{Structure of one refinement}}\\
count $(d-1)(V-1)+d$; fixed links a spanning tree & $d=2$--$5$, to $4^4$  & exact \\
refinement is a bijection                     & $7$ lattices, $d=2$--$5$ & $2.3{\cdot}10^{-14}$ \\
generated element equals its plaquette        & $7$ lattices, $d=2$--$5$ & $9{\cdot}10^{-16}$ \\
cube identities hold on generated fields      & $7$ lattices, $d\ge3$     & $5.3{\cdot}10^{-15}$ \\
Haar Jacobian of the full map is one          & $d=2$--$5$, $17$--$387$ vars & $2.2{\cdot}10^{-14}$ \\
solve supports are $1,1,0,4$                  & $d=2$--$5$, extents $4,8$ & exact \\
condition number $1,5,7,9$ for $d=2$--$5$ & $d=2$--$5$, extents $4,8$ & extent independent \\
\midrule
\multicolumn{3}{l}{\emph{The measure, by importance sampling}}\\
$\log Z$ and plaquette against the closed form & $4^2,8^2,16^2$, two $\beta$ & pull $\le1.6$ \\
agreement with sampling from Haar links        & $2^3,\,4{\times}2^2,\,2^4$    & pull $\le1.7$ \\
$\log Z$ against thermodynamic integration     & $2^4$, $\beta=0.2,\,0.5$      & pull $\le0.7$ \\
\midrule
\multicolumn{3}{l}{\emph{The implementations used in the experiments}}\\
each level is a unimodular change of variables & five levels of $2^4\to4^4$ & exact \\
solves give back the coarse field and cube identities & five levels & $7{\cdot}10^{-15}$ \\
two-dimensional block equals two refinements   & $L_{\mathrm c}=2,4,8$      & exact \\
$SU(2)$ block relation, inverse and closure    & $L\le6$                    & $5{\cdot}10^{-14}$ \\
$SU(2)$ reconstructed links are Haar           & $L\le6$                    & $2.1{\cdot}10^{-3}$ \\
$SU(2)$ block log-density against its Jacobian & $L\le6$                    & $7{\cdot}10^{-7}$ \\
\bottomrule
\end{tabular}
\end{table}

\paragraph{How the Jacobian is measured.}
Each generated element $F$ is perturbed as $e^{\omega}F$ with $\omega$ in the Lie algebra, and each
non-tree link $U$ is read as $e^{\omega'}U$. Haar measure is bi-invariant, so its density in these
coordinates is the same at every point, and the Haar Jacobian of the map is the determinant of
$\partial\omega'/\partial\omega$ at $\omega=0$, computed by automatic differentiation at two random
points per lattice.

\paragraph{How the measure is tested.}
The importance-sampling rows generate every level from a conditional proposal and reweight by
$e^{-S}/q$ with \eqref{eq:chain}, so they test the density bookkeeping and the measure, not only
the geometry. The proposal draws each $F_\nu(X)$ from the law of its two electric terms in
\eqref{eq:classes}, von Mises for $U(1)$ and von Mises--Fisher on $S^3$ for $SU(2)$. On a
two-dimensional torus the Haar-normalised partition function is known in closed form,
\begin{equation}
  Z=\sum_{n\in\Z}I_n(\beta)^V \quad [U(1)],
  \qquad
  Z=\sum_{n\ge1}\big(2I_n(\beta)/\beta\big)^V \quad [SU(2)],
\end{equation}
with $S=-\beta\sum\tfrac1N\operatorname{Re}\operatorname{tr}P$: at $L=16$ the estimate is
$210.9423\pm0.0014$ against $210.9424$ for $U(1)$ at $\beta=2$, and $248.0914\pm0.0009$ against
$248.0905$ for $SU(2)$ at $\beta=3$. Where no closed form exists the same estimator is compared
with importance sampling from Haar-distributed links, which uses no gauge fixing and none of the
construction, and on $2^4$ for $U(1)$ with the thermodynamic-integration $\log Z$ of
\cref{app:reference}, which additionally tests the unimodularity of the coarse level. A row counts
only if its estimate has at least $1{,}000$ effective samples.

\section{Architecture}
\label{app:arch}

\paragraph{The model is a stack of conditional flows.}
Level $i$ draws its free variables from a learned conditional density and the constraint layer
turns them into a valid configuration, $F^{(i)}\sim\widetilde q_i(\cdot\mid P^{(i-1)})$,
$P^{(i)}=R_i(F^{(i)};P^{(i-1)})$. Because $R_i$ is a bijection of unit Haar Jacobian
\eqref{eq:coords}, the density transfers without a correction,
\begin{equation}
  \log q_i\big(P^{(i)}\mid P^{(i-1)}\big)=\log\widetilde q_i\big(F^{(i)}\mid P^{(i-1)}\big),
  \qquad
  \log q(P)=\sum_{i=0}^{n}\log\widetilde q_i\big(F^{(i)}\mid P^{(i-1)}\big),
  \label{eq:logq}
\end{equation}
so the exact log-density needed by the objective \eqref{eq:rkl} and by the importance weights costs
no more than the per-level flow densities. Each $\widetilde q_i$ is a normalizing flow on
$\T^{|F^{(i)}|}$ conditioned on the coarse configuration; the coarse level $i=0$ is unconditional.
Learning is therefore a conditional density-estimation problem per resolution, and the physics
enters only through $R_i$, which has no parameters.

\paragraph{Coupling layers.}
$\widetilde q_i$ is a uniform base composed with a stack of coupling layers (their number is in \cref{tab:arch}). Layer $\ell$ partitions the
components into active $A_\ell$ and passive $\bar A_\ell$ and maps
\begin{equation}
  f_j\mapsto \mathrm{RQS}_K\big(f_j;\theta_j\big)\ \ (j\in A_\ell),
  \qquad
  \theta=c_\phi\big(\cos f_{\bar A_\ell},\,\sin f_{\bar A_\ell},\,P^{\mathrm c}\big)\in\R^{3K},
  \label{eq:coupling}
\end{equation}
with $\mathrm{RQS}_K$ the circular rational-quadratic spline \citep{Durkan2019} on $K$ bins,
matched in value and derivative at $\pm\pi$, and $c_\phi$ the conditioner. Its last layer is
zero-initialised, so each coupling starts at the identity and the stack starts at the base. For
$SU(2)$, \eqref{eq:coupling} acts on the angles of an equivariant parametrisation over a Haar base.

\paragraph{Why the split matters.}
A coupling layer represents the dependence of $A_\ell$ on $\bar A_\ell$ but none among $A_\ell$, so
the partitions decide what the stack can express, and two requirements act on them. Every component
must be active in some layers and passive in others, which is what the alternation in \cref{tab:arch}
provides. And the constraint couples the free variables: a determined plaquette is a function of
$k_j$ of them, \eqref{eq:solve}. If two of those inputs are active in the same layer, that layer
cannot represent their joint effect on it, which is why the constrained splits admit at most one
active input per determined plaquette. This is the only point at which the architecture is aware of
the constraint at all.

\paragraph{Conditioner.}
$c_\phi$ is dense on the $2^4$ coarse level and in $d=2$, where it sees the coarse plaquette of the
block and its neighbours, and for $SU(2)$ the holonomy commutator as well. On a $d=4$ refinement it
is a convolution on the fine lattice: free angles scattered to their plaquette positions, $36$ input
channels --- $\cos f$, $\sin f$ on $\bar A_\ell$, active and passive indicators, and
$\cos P^{\mathrm c}$, $\sin P^{\mathrm c}$ repeated along the refinement direction, six plaquette
orientations each --- a star stencil of radius $1$ applied by periodic shifts, and SiLU
activations. The shifts make $c_\phi$ exactly equivariant under lattice translations, so the
symmetry of the target is built in rather than learned, and the parameter count is independent of
the volume.

\paragraph{What the chain buys the model.}
The single-level model is the same machinery at $n=0$: one unconditional flow over all $|F|$ free
variables, followed by the global solve. The difference the chain makes is locality of the learning
problem. On a refinement every determined plaquette is a function of at most four generated
variables, with condition number $7$, in the coordinates of \cref{app:refine}; in the eliminated
coordinates our models use it depends on at most nine, with $\kappa\le45$ (\cref{tab:support}),
and every such group of inputs lies within lattice distance three of one another. Both are
independent of the extent, so an equivariant convolution of three or four radius-one layers,
conditioned on the coarse field, sees every dependence the constraint creates, and one network
serves every level and every volume. Solved on the whole
lattice at once the median support is $\suppiso$ with maximum $\suppisomax$ and grows with the
extent, and $\kappa=\kappaSingleFour$ on $4^4$: no fixed stencil reaches the variables a determined
plaquette couples, and the flow has to build those correlations implicitly through depth. The
coarse configuration enters as conditioning input and carries the long-range structure, so each
level has only to learn the degrees of freedom its own refinement adds. This is the architectural
content of the construction, and \cref{fig:ablation} tests it at matched parameter count.

\begin{table}[tbh]
\centering
\caption{Our models. Width/depth is that of $c_\phi$ in \eqref{eq:coupling}. Couplings run coarse
level first, then each refinement; one entry means the same count at every level. Splits:
\emph{halves} of a random permutation; \emph{parity} over sites times a per-layer orientation
pattern; \emph{constrained}, no determined plaquette with two active inputs in a coupling, whose
models also pass $c_\phi$ each partially determined plaquette's running sum and distance to
$\pm\pi$. At $\beta=0.5,0.8,2.0,3.0$ four further stacks of eight couplings act on the finished
field, one per direction and conditioned on it blocked along that direction, an exact change of
variables.}
\label{tab:arch}
\small
\setlength{\tabcolsep}{3pt}
\begin{tabular}{lllcccll}
\toprule
theory & levels & model & $\beta$ & $K$ & width/depth & couplings & split \\
\midrule
4D $U(1)$ & $2^4\!\to\!4^4$            & multilevel & $0.5,\,0.8$ & 8 & $64$/$3$  & $24$; $18,18,16,12$ & constrained \\
          & $2^4\!\to\!4^4$            & multilevel & $1.0$       & 8 & $64$/$3$  & $8$                 & halves \\
          & $2^4\!\to\!4^4$            & multilevel & $1.5$       & 8 & $128$/$4$ & $8$                 & halves \\
          & $2^4\!\to\!4^4$            & multilevel & $2.0,\,3.0$ & 8 & $128$/$4$ & $8$                 & parity \\
          & $4^4$, single              & single     & all         & 8 & $128$/$4$ & $8$                 & halves \\
\cmidrule(lr){1-8}
2D $U(1)$ & $2^2\!\to\!32^2$, $L{=}4$  & multilevel & $3.0$       & 8 & $64$/$3$  & $6$; $8$ per level  & halves \\
          & $2^2\!\to\!64^2$, $L\ge8$  & multilevel & $0.1875L^2$ & 8 & $64$/$3$  & $6$; $8$ per level  & halves \\
          & $L^2$, single, $L\le8$     & single     & $0.1875L^2$ & 8 & $8$/$2$   & $48$                & halves \\
\cmidrule(lr){1-8}
2D $SU(2)$ & $2^2\!\to\!16^2$            & multilevel & all         & 8 & $64$/$3$  & $32$; $8$ base      & halves \\
           & $16^2$, single             & single     & all         & 8 & $64$/$3$  & $32$                & halves \\
\bottomrule
\end{tabular}
\end{table}

\begin{table}[tbh]
\centering
\caption{Baselines. Link-space flows are trained against the same action as our models. The
single-level plaquette flow is our own construction and is in \cref{tab:arch}.}
\label{tab:arch_base}
\small
\setlength{\tabcolsep}{5pt}
\begin{tabular}{lll}
\toprule
theory & baseline & construction \\
\midrule
2D $U(1)$  & link multiscale \citep{Abbott2023}          & coarse $L/2$ prior, then a fine flow \\
2D $U(1)$  & link single level \citep{Kanwar_2020}        & one flow on all links \\
4D $U(1)$  & link multiscale \citep{Abbott2023}          & as published, values from Fig.~4 \\
2D $SU(2)$ & link continuous flow \citep{Gerdes:2024rjk} & ODE on links, \texttt{bijx}, $40$ steps \\
\bottomrule
\end{tabular}
\end{table}

The $2$D rows use the refinement of \cref{app:block2d}, with $c_\phi$ seeing the coarse plaquette
of the block and its neighbours, and for $SU(2)$ the holonomy commutator. The $SU(2)$ levels apply
their $32$ flow steps over $16$ group-element blocks with an interleave of $4$, the base level $8$
steps over $4$; holonomies are drawn from the Haar base, not learned. The $4$D coarse level is
dense on its $45$ free angles; the single-level model generates all $\dimBtwo$ free angles of $4^4$
through the solve of \cref{app:support} and is the comparison in \cref{fig:ablation}.

\section{Training protocol}
\label{app:training}

All models, ours and the link-space baselines we trained, minimise the reverse Kullback--Leibler
divergence \eqref{eq:rkl} with pathwise gradients through the flow and the constraint layer. No
Monte Carlo configuration enters training; the ensembles of \cref{app:reference} are used only for
validation. Adam, double precision. The $4$D chain and the $SU(2)$ chain are trained as a whole, every level jointly from random
initialisation. The $2$D $U(1)$ chain is trained as a ladder along $\beta=cL^2$: one level at a
time, each warm-started with the levels below it frozen, so the budget is per level. Schedules are
set by the step target, and runs interrupted by cluster limits resume optimiser and scheduler state
exactly.

\begin{table}[t]
\centering
\caption{Training settings for every model reported in \cref{sec:exp}. ``exp'' decays the learning
rate exponentially from the first value to the second over the scheduled steps. Two-dimensional
budgets for the $U(1)$ ladder are per level, in the order root, $L=4,8,16,32,64$. Gradients are clipped
at $1$, except the $4$D multilevel model at $\beta=2$ at $10$ and the baselines at $0.5$. Each $SU(2)$ single-level
run has converged or diverged well inside $25{,}000$ steps, so the budget is not what limits it.}
\label{tab:training}
\small
\setlength{\tabcolsep}{2pt}
\begin{tabular}{llrrll}
\toprule
theory & model & steps & batch & learning rate & sched. \\
\midrule
4D $U(1)$, $\beta{=}0.5,0.8$ & multilevel   & $260{,}000$ & $192$ & $3{\cdot}10^{-4}\!\to\!10^{-5}$    & exp \\
4D $U(1)$, $\beta{=}1.0,1.5$ & multilevel   & $300{,}000$ & $192$ & $3{\cdot}10^{-4}\!\to\!5{\cdot}10^{-6}$ & exp \\
4D $U(1)$, $\beta{=}2.0$     & multilevel   & $300{,}000$ & $192$ & $3{\cdot}10^{-4}\!\to\!5{\cdot}10^{-6}$ & exp \\
4D $U(1)$, $\beta{=}3.0$     & multilevel   & $300{,}000$ & $192$ & $3{\cdot}10^{-4}\!\to\!5{\cdot}10^{-6}$ & exp \\
4D $U(1)$, all $\beta$       & single level & $300{,}000$ & $192$ & $3{\cdot}10^{-4}\!\to\!5{\cdot}10^{-6}$ & exp \\
2D $U(1)$, $L{=}4$--$64$     & multilevel   & $20/50/30/30/50/45$k & $256$ & $10^{-3}$ & ladder \\
2D $SU(2)$, all $\beta$      & multilevel   & $100{,}000$        & $128$ & $3{\cdot}10^{-4}$ & const. \\
2D $SU(2)$, all $\beta$      & single level & $25{,}000$, converged & $128$ & $3{\cdot}10^{-4}$ & const. \\
\cmidrule(lr){1-6}
2D $U(1)$ & link multiscale, baseline   & $80{,}000$ & $512$ & $10^{-3}$ & const. \\
2D $U(1)$ & link single level, baseline & $40{,}000$ & $512$ & $10^{-3}$ & const. \\
2D $U(1)$ & plaquette single level      & $80{,}000$ & $512$ & $10^{-3}$ & const. \\
\bottomrule
\end{tabular}
\end{table}

Checkpoints are written during training together with a training-time estimate $\widehat{\KL}$ on
fresh samples, every $5{,}000$ steps on $65{,}536$ samples in $d=4$, or every $10{,}000$ on
$16{,}384$ at $\beta=0.5,\,0.8,\,3.0$; the same schedule is used for the baselines. One retained
checkpoint per model is evaluated, and every number in the paper is recomputed from one fresh
evaluation of $\Neval$ samples (\cref{app:estimation}), never from the training-time estimate.

The settings of \cref{tab:training} are target dependent: width, couplings, clipping and schedule
were chosen per coupling, and we claim no single recipe across the transition and no wall-clock
advantage over Markov chain Monte Carlo (\cref{app:estimation}).

\section{Reference values and the partition function}
\label{app:reference}

No Monte Carlo configuration enters training. The quantities here validate the trained samplers:
reference expectation values for the reweighted observables, and the $\log Z$ that turns the
estimated divergence into the true one.

\paragraph{Protocol.}
Reference ensembles are generated in link variables by heatbath, each link drawn exactly from its
conditional von Mises distribution given its six staples, the two site-parity sublattices of a
direction updated in turn so that every half-step is an exact block update. At each coupling eight
chains of $30{,}000$ sweeps are run, four from the ordered configuration and four from uniformly
random links; the first $6{,}000$ are discarded and every later sweep is measured. The quoted error
is the larger of the autocorrelation-corrected error of the pooled mean and the standard error of
the chain means, so neither an underestimated autocorrelation nor a confined chain can flatter it.
\begin{center}
\small
\begin{tabular}{ccccc}
\toprule
$\beta$ & $\tau_{\mathrm{int}}$ (sweeps) & $\langle\cos\varphi\rangle$ & monopoles per configuration & $\log Z$ \\
\midrule
$0.5$ & $0.6$ & $\refPlqHalf$          & $\refMonoHalf$      & $732.941 \pm 0.009$ \\
$0.8$ & $1.1$ & $0.402005 \pm 0.000066$ & $250.18 \pm 0.06$   & $420.204 \pm 0.011$ \\
$1.0$ & $15$  & $\refPlqOne$           & $\refMonoOne$       & $262.052 \pm 0.019$ \\
$1.5$ & $0.7$ & $\refPlqOneHalf$       & $\refMonoOneHalf$   & $75.942 \pm 0.021$ \\
$2.0$ & $0.6$ & $\refPlqTwo$           & $\refMonoTwo$       & $-45.114 \pm 0.021$ \\
$3.0$ & $0.6$ & $0.912952 \pm 0.000011$ & none observed       & $-210.003 \pm 0.021$ \\
\bottomrule
\end{tabular}
\end{center}
Only $\beta=1$, next to the transition, has a long autocorrelation. There ordered and disordered
chains agree after $\hbMerge$ sweeps, two orders of magnitude inside the $6{,}000$ discarded, and
the plaquette distribution is a single broad peak of width $\cosSdOne$ with monopole spread
$\monoSdOne$: no sign of two coexisting phases on this volume.

\paragraph{Flux sectors.}
Heatbath chains never change the flux through the six two-tori. A unit flux costs an action of
about $\fluxCost$, suppressed by $e^{-30}$ at $\beta=1.5$ and $e^{-39}$ at $\beta=2$, so a chain
starting in one is a metastable artefact rather than an equilibrium sample, and is visibly
displaced: $\langle\cos\varphi\rangle=0.7998$ against $0.8134$ at $\beta=1.5$. The reference at
$\beta\ge1.5$ therefore uses trivial-sector chains only --- five of eight at $\beta=1.5$, seven of
eight at $\beta=2$, all eight at $\beta=3$ --- one of the three dropped at $\beta=1.5$ going
instead for a flux made ambiguous by a monopole, its mean agreeing with the rest. At $\beta\le1$
monopoles are dense and flux is not well defined, so all chains are used. The flow can generate
every sector (\cref{app:constraint}).

\paragraph{Cross-check in the sampler's own coordinates.}
To exclude an error in the target as the flow sees it, Metropolis updates were run on
$p(f)\propto e^{-S(\varphi(f))}$ over the $\dimBtwo$ free angles of the single-level $4^4$
parametrisation at $\beta=1$: sixteen chains of $3{,}000$ sweeps from $f=0$ and from uniform $f$
give $\langle\cos\varphi\rangle=\mcColdPlq$ and $\mcHotPlq$ with $\mcColdMono$ and $\mcHotMono$
monopoles per configuration, each to about $0.008$, both agreeing with the heatbath reference. The
target and its coordinates are therefore correct, including the monopole sector.

\paragraph{Exact \texorpdfstring{$\log Z$}{log Z}.}
Throughout this appendix and in the training code the action is written
$S=\beta\sum_p(1-\cos\varphi_p)$, which is \eqref{eq:action} plus the constant $\beta\binom d2V$;
the tabulated $\log Z$ are for this form and for the angle measure $\mathrm df$ on the $\dimBtwo$ free
angles, whereas \eqref{eq:action} uses normalised Haar measure, $\mathrm df/2\pi$ per angle; they
therefore differ from those of \eqref{eq:action} by $-\nplq\,\beta+\dimBtwo\log2\pi$, while every
divergence and weight is unchanged. The flow's importance weights
estimate
$Z(\beta)=\int_{T^{\dimBtwo}}e^{-S(\varphi(f))}\,\mathrm{d}f$, whose logarithm follows from
thermodynamic integration,
\begin{equation}
  \log Z(0)=\dimBtwo\,\log2\pi,
  \qquad
  \frac{\mathrm{d}\log Z}{\mathrm{d}\beta}=-\nplq\,\big(1-\langle\cos\varphi\rangle_\beta\big),
\end{equation}
integrated by the trapezoid rule on a grid of spacing $0.02$, refined to $0.005$ across the
transition. Each grid point uses sixty-four ordered-start chains contributing $1{,}600$
measurements after thermalisation; the error, from the spread of chain means propagated with the
trapezoid weights, is at most $0.021$ in $\log Z$, hence at most that in the true divergence and
$2\%$ in the evidence recovery.

\section{Estimators and evaluation protocol}
\label{app:estimation}

\paragraph{One evaluation per model.}
Each model is evaluated once, on $N=\Neval$ independent samples from its retained checkpoint with a
fixed seed unless stated otherwise (the link-space continuous flow for $SU(2)$ at $N\le10^4$,
\cref{sec:exp}; the Wilson loops of \cref{tab:su2_wilson} at $2\times10^5$), and every number reported for it --- divergence, effective sample size, top weight,
evidence recovery and every reweighted observable --- comes from that one sample. Evaluations are
never combined, training-batch statistics are never quoted, and any departure is stated where the
number is used.

\paragraph{Weights, divergences and errors.}
For samples $f_i\sim q$ on the free angles, $\log\tilde w_i=-S(\varphi(f_i))-\log q(f_i)$, the
constant Jacobian of \eqref{eq:density} cancelling throughout. Then
\begin{equation}
  \ESS=\frac{\big(\sum_i\tilde w_i\big)^2}{N\sum_i\tilde w_i^2},\qquad
  t=\frac{\max_i\tilde w_i}{\sum_i\tilde w_i},\qquad
  \log\hat Z=\log\frac1N\sum_i\tilde w_i,
\end{equation}
\begin{equation}
  \widehat{\KL}=-\frac1N\sum_i\log\tilde w_i+\log\hat Z,\qquad
  \KL=-\frac1N\sum_i\log\tilde w_i+\log Z,\qquad
  R=e^{\log\hat Z-\log Z}=e^{\widehat{\KL}-\KL},
\end{equation}
with the $\log Z$ of \cref{app:reference}. Expectations are self-normalised,
$\hat O=\sum_i\tilde w_iO_i/\sum_i\tilde w_i$; errors on $\ESS$, $\KL$ and every reweighted
observable are delete-block jackknife over the same $N$ draws in $100$ blocks, that on $\KL$
including the uncertainty of $\log Z$. The observable panels all report the relative accuracy $\hat O/O_{\mathrm{ref}}-1$ of
\eqref{eq:diagnostics}: against the exact finite-volume value in \cref{fig:scaling_te,fig:su2_kl_obs_ess}, and
against the heatbath reference in \cref{fig:u1_4d}, in units of $10^{-3}$ with the reference's own
uncertainty as a band. The monopole number uses
all $N$ samples, since at $\beta=2$ only about one configuration in a hundred carries one.

$\widehat{\KL}$ is the divergence available without $\log Z$, biased low by exactly $-\log R$ ---
a large gap when a model misses much of the target, $\widehat{\KL}=\KLoneHatModel$ against $\KL=\KLone$ at
$\beta=1$. Every divergence in the paper is the true one. A flow has full support, so
$\mathbb{E}_q[\tilde w]=Z$ and $R\to1$ as $N\to\infty$ for every model; at finite $N$, $R$ says
how much of the evidence the drawn samples recovered, and may sit slightly above one, as at
$\beta=0.5$ ($R=\ZrecHalf$). With $\log Z$ known to $0.021$, $R$ is limited by the sample, not the
reference.

\paragraph{Where the normalisation comes from.}
$\KL(q\|p)=\mathbb E_q[\log q+S]+\log Z$: the first term is a sample mean, the second a constant
whose source decides what the reported number means. We take it from thermodynamic integration, so
it never involves the model. The link-space baseline we compare against estimates it from the model
instead, as $\log\widehat Z=\log\overline{\tilde w}$ \citep[Fig.~4]{Abbott2023}. The two differ by
exactly the log of the evidence recovery,
\begin{equation}
  \KL_{\widehat Z}=\KL-\log(1/R),
  \label{eq:logz_convention}
\end{equation}
and by Jensen's inequality the model-based convention is a lower bound in expectation, short by
$\log(1/R)$ nats, so it flatters a model precisely where the model covers the target least well.
\Cref{fig:logzconv} shows the effect on our own sampler: away from the transition the two agree to
a few per cent, while at $\beta=1$, where $R=\ZrecOne$, the reported divergence falls from $\KLone$
to $\KLoneHatModel$, suppressing the peak by a factor of three. The published link-space values
cannot be corrected, their $\log\widehat Z$ not being reported; the direction of the effect makes
our comparison against them conservative.

\begin{figure}[t]
\centering
\begin{minipage}[c]{0.56\linewidth}
  \includegraphics[width=\linewidth]{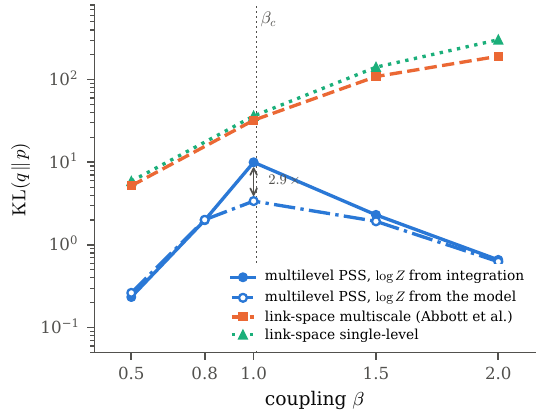}
\end{minipage}\hfill
\begin{minipage}[c]{0.40\linewidth}
  \caption{The same four-dimensional models and evaluations with the normalisation taken two ways:
  filled markers use $\log Z$ from thermodynamic integration, open markers the model's own
  $\log\widehat Z$. The vertical gap is $\log(1/R)$ and nothing else, so it closes where the sampler
  covers the target and opens where it does not. A missed mode contributes nothing to the
  model-based quantity, which is how a transition peak can be flattened by the convention alone.}
  \label{fig:logzconv}
\end{minipage}
\end{figure}

\paragraph{What the effective sample size shows, and the \texorpdfstring{$n_{\mathrm{eff}}$}{n\_eff} cut.}
Since $\sum_i\tilde w_i^2\ge\max_i\tilde w_i^2$, the heaviest single sample caps it,
$N\cdot\ESS\le1/t^2$: at $\beta=1$, $t=\topOne$ caps the effective number of samples near $540$ and
the measured value is about $170$. When $t$ is far above $1/N$ the effective sample size rests on a
handful of samples and is itself noisy, so $t$ is reported next to it and an estimate is treated as
a measurement only when no sample dominates. More importantly, it measures how evenly the weight is
spread over the region the model reaches, not whether the model reaches the whole target --- a model
confined to part of it can have nearly uniform weights there --- so every result is reported with
its true divergence, evidence recovery and comparison against the reference, and the effective
sample size is never used on its own as evidence of correctness. Every point is drawn; where
$n_{\mathrm{eff}}=N\cdot\ESS<100$ the weighted estimators rest on a few samples, since a single
dominant weight invalidates them, whereas the divergence $\KL(q\|p)=\log Z-\overline{\log w}$ is
an unweighted mean over draws from $q$, unbiased at any $n_{\mathrm{eff}}$, with uncertainty
$\Var(\log w)/N$ that the sample sizes used here resolve.

\section{Monopoles}
\label{app:monopoles}

For $U(1)$ the solve \eqref{eq:solve} is exact over $\R$, while the physical plaquette is
$\wrap\varphi\in(-\pi,\pi]$. With the integer lifts $k:=(\varphi-\wrap\varphi)/2\pi$ and
$\mathrm d\varphi=0$,
\begin{equation}
  m:=\frac{1}{2\pi}\,\mathrm{d}\big(\wrap\varphi\big)=-\,\mathrm{d}k ,
  \qquad
  Q:=\frac1{2\pi}\sum_x\wrap\varphi_{01}(x)=-\sum_x k_{01}(x)\quad(d=2),
  \label{eq:monopole}
\end{equation}
so the monopole charge of a three-cube is the coboundary of the integer lifts of its faces, and in
two dimensions the topological charge is their sum. Generated angles lie in $(-\pi,\pi]$ and have
$k=0$, so a monopole is produced exactly when the real-valued solve of a determined plaquette
leaves $(-\pi,\pi]$, crossing the seam at $\pm\pi$. The sector is therefore inside the parametrised
set rather than excluded by it: uniformly random free angles give $\monoRandom$ monopoles per
configuration on $4^4$, more than the reference at any coupling studied.
How closely the trained flow reproduces their density at weak coupling is a question of sampling rare
defect configurations, which is beyond the scope of this work (\cref{app:limitations}).

% COMMENTED OUT, file kept. Nothing the paper reports depends on it: no winding observable
% (Polyakov loop or correlator) appears in the main text or in any result, so this appendix
% described machinery no number uses. Its general content -- that a maximal tree recovers links
% from plaquettes -- is now one sentence at the end of app:dof.tree. Restoring it means restoring
% this \input AND the \cref{app:links} citations in app:dof.tree and app:su2single.
%\input{appendix/12_link_reconstruction}
\section{Two dimensions}
\label{app:twod}

The same constraint layer, with the block relation of \cref{app:block2d} as the linear system, gives
a multilevel sampler for two-dimensional compact $U(1)$: the coupling layers of \cref{app:arch} and
the objective of \cref{app:training}, an unconditional base on a $2\times2$ lattice, then
power-of-two doublings to $L$, each new level warm-started with the lower levels frozen. Every $L$
is trained along a line of constant physics, $\beta=cL^2$, so a ladder describes one continuum
theory at successive resolutions.

\paragraph{Topology is not what this construction addresses.}
The difficulty of the flow is set by how active the topology is, the spread of $Q$, rather than by
$\beta$ directly: at small $c$, where $\langle Q^2\rangle$ is large, the
target is strongly multimodal in $Q$ and the mode-seeking reverse-KL flow loses $\ESS$, while where
the topology is mild ($\langle Q^2\rangle<1$ at every level) $\ESS$ is high and flat across scales.
This is a coverage problem in the topological variable, not a defect of the constraint layer, whose
exactness is independent of $c$. We do not address it here; methods that resolve the sectors explicitly, such
as the sector-resolved flow sampling of \citet{Singha:2026aac}, are complementary to the
construction of this paper and could be combined with it.

\paragraph{Topological susceptibility.}
The plaquette is strongly self-averaging and therefore a weak test of a sampler: even weights
dominated by a few configurations reproduce it to parts in $10^5$. The susceptibility is the
sensitive observable, and on the torus it is known exactly from the character expansion at every
$(\beta,V)$, so no reference simulation is needed. \Cref{fig:chitop} shows the reweighted
$\chi_{\mathrm{top}}$ against that value along $\beta=\teC\,L^2$. The multilevel sampler agrees
within $0.7\sigma$ at every size, the largest deviation being $+0.5\%$ at $L=32$. Points whose weights are degenerate are omitted, their errors being too wide to test; the
single-level plaquette flow at $L=8$ is drawn at $-100\%$, its $\chi_{\mathrm{top}}$ identically
zero with every configuration at $Q=0$, which is the clearest statement that imposing the closure
constraint does not by itself generate topology. That this is topological rather than a property of
the global solve is settled by $SU(2)$: two dimensions give it the same single closure row and hence
the same $\kappa=V$, since $\G_{SU(N)}=\G_{U(1)}\otimes\bm I_{\dim G}$, but it is simply connected
and has no sectors to collapse into, and its single-level sampler is efficient at $L=16$
(\cref{fig:su2_kl_obs_ess}), degrading only at long correlation length and through a weight tail
rather than a bias.

\begin{figure}[tbh]
\centering
\begin{minipage}[c]{0.50\linewidth}
  \includegraphics[width=\linewidth]{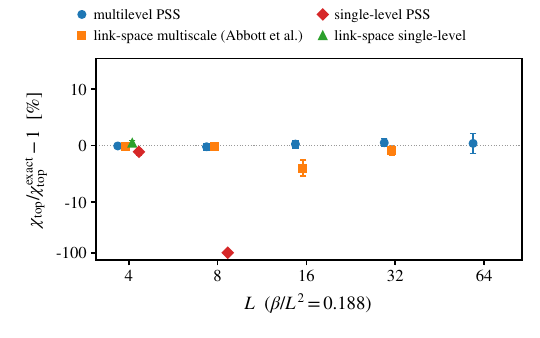}
\end{minipage}\hfill
\begin{minipage}[c]{0.46\linewidth}
  \caption{Reweighted topological susceptibility against the exact character-expansion value along
  $\beta=\teC\,L^2$, as a relative deviation. Samplers whose weights are degenerate are omitted; the
  single-level plaquette flow at $L=8$ has $\chi_{\mathrm{top}}\equiv0$ and is drawn at $-100\%$.}
  \label{fig:chitop}
\end{minipage}
\end{figure}
% GENERATED by ICLR_U1/su2/figures/make_appendix_su2.py -- do not edit by hand.
% Source records: ICLR_U1/su2/REPRO/appendix_wilson/ (18 result JSONs + generator).
\begin{table}[tbh]
\centering
\footnotesize
\caption{Wilson loops for the two plaquette-space samplers on $16^2$ $SU(2)$,
reweighted at $N=2\times10^5$ and compared with the exact values.  Entries are
$10^5(\langle W\rangle-W_\text{exact})$, with the deviation in standard deviations
in parentheses; $W_\text{exact}$ is listed because the loop decays with area.
The action is the $1\times1$ loop, so $2\times2$ and
$3\times3$ are observables neither sampler was trained to reproduce.}
\label{tab:su2_wilson}
\begin{tabular}{rr rrr rrr}
\toprule
 & & \multicolumn{3}{c}{multilevel PSS} & \multicolumn{3}{c}{single-level PSS} \\
\cmidrule(lr){3-5}\cmidrule(lr){6-8}
$\beta$ & $\xi/a$ & $1\times1$ & $2\times2$ & $3\times3$ & $1\times1$ & $2\times2$ & $3\times3$ \\
\midrule
$2.2$ & $1.1$ & $-11.8\,(2.2)$ & $-9.7\,(1.2)$ & $+0.4\,(0.1)$ & $-9.0\,(1.6)$ & $-15.6\,(1.8)$ & $+6.8\,(0.9)$ \\
$8$ & $2.2$ & $-0.6\,(0.3)$ & $-1.8\,(0.3)$ & $-3.5\,(0.3)$ & $-1.9\,(0.9)$ & $-4.5\,(0.5)$ & $+4.4\,(0.3)$ \\
$18$ & $3.4$ & $+0.2\,(0.2)$ & $+6.1\,(1.1)$ & $+8.4\,(0.7)$ & $-1.1\,(1.0)$ & $-7.6\,(1.2)$ & $-32.0\,(1.9)$ \\
$40$ & $5.1$ & $-0.6\,(0.8)$ & $-0.4\,(0.1)$ & $+2.7\,(0.3)$ & $-1.1\,(0.5)$ & $-14.8\,(1.5)$ & $-45.2\,(1.9)$ \\
$65.9135$ & $6.6$ & $-0.2\,(0.5)$ & $-0.0\,(0.0)$ & $+2.0\,(0.4)$ & $+0.7\,(0.8)$ & $+3.0\,(0.6)$ & $+4.4\,(0.4)$ \\
\midrule
\multicolumn{8}{l}{\footnotesize $W_\text{exact}$ at $1\times1$, $2\times2$, $3\times3$:} \\
\multicolumn{8}{l}{\footnotesize $\beta=2.2$: $0.4645$, $0.0465$, $0.0010$;\quad $\beta=8$: $0.8192$, $0.4505$, $0.1662$} \\
\multicolumn{8}{l}{\footnotesize $\beta=18$: $0.9179$, $0.7099$, $0.4625$;\quad $\beta=40$: $0.9627$, $0.8591$, $0.7105$} \\
\multicolumn{8}{l}{\footnotesize $\beta=65.9135$: $0.9773$, $0.9124$, $0.8136$} \\
\bottomrule
\end{tabular}
\end{table}
   % placed early so it floats onto the same page as the text

\paragraph{Protocol, and the check against the exact value.}
Every point of \cref{fig:scaling_te} is one evaluation with seed $0$, and every size is a stage
of the same ladder, with no averaging over seeds and no selection among draws. Used as an independence-Metropolis proposal the sampler has a topological autocorrelation time
near unity, $\tauQrange$ for $L\le32$. At the largest extent of the
main-text ladder, $L=64$ ($\beta=768$), the multilevel sampler reaches $\ESS=0.101\pm0.006$ after
$45{,}000$ steps at the finest level, with plaquette and susceptibility within $0.2\sigma$ of the
exact values and $\KL(q\|p)=3.9$ nats against $17$ for the link-space multiscale flow.

\paragraph{$SU(2)$ Wilson loops.}
\Cref{tab:su2_wilson} evaluates both plaquette-space samplers on Wilson loops of area $1$, $4$ and
$9$. The action is the $1\times1$ loop, so $\langle\tfrac12\mathrm{tr}\,U_p\rangle$ is essentially
the training target, whereas the $2\times2$ and $3\times3$ loops are observables neither sampler
was trained to reproduce and which two-dimensional $SU(2)$ supplies in closed form. All thirty
measurements agree with the exact values to within $2.2\sigma$, so the reweighted estimates are consistent with the exact values, within uncertainties,
beyond the quantity the flow optimises. The failure mode at weak coupling is a heavy weight tail
rather than a displaced bulk: at $\xi/a=6.6$ the largest single weight carries $1.5\%$ of the
single-level estimator and $48\%$ of the link-space one, against at most $4.4\times10^{-4}$ for the
multilevel sampler at any coupling. Effective sample sizes and reweighted observables are
draw-dependent at these volumes, so each plaquette-space number quoted in \cref{fig:su2_kl_obs_ess} is one
evaluation at $N=\Neval$ at a fixed seed (the link-space flow at $N\le10^4$, \cref{sec:exp}), and where we suspected residual bias we repeated it with
independent seeds; at $\beta=40$ this rejected an earlier $12{,}000$-step model whose deviation
reproduced at $-1.2\times10^{-5}$ over five draws ($13\sigma$) and confirmed the model reported
here at $+1.5\times10^{-6}$ over four draws ($0.7\sigma$).

\end{document}